\documentclass[a4paper, 11pt]{article}
\usepackage[affil-it]{authblk}
\usepackage{ifpdf}
\usepackage{amsthm,amsfonts,amssymb,amsmath,euscript,comment,mathtools}
\usepackage{mathabx}
\usepackage[utf8]{inputenc}
\usepackage[T1]{fontenc}
\usepackage{graphicx}
\usepackage{color}
\usepackage{textcomp}
\usepackage{bbm}
\usepackage{slashed}
\usepackage[colorlinks=true,linkcolor=black,citecolor=blue]{hyperref}
\usepackage{mathrsfs}
\usepackage{dsfont}
\usepackage{geometry}
\usepackage[all]{xy}
\usepackage[dvipsnames]{pstricks}
\usepackage{epsfig}
\usepackage[nottoc]{tocbibind}
\allowdisplaybreaks[3]
\geometry{hmargin=2cm, vmargin=3cm}
\ifpdf
  \pdfinfo{}
  \renewcommand{\c}{\cdot}
\fi

\usepackage{bm}
\usepackage{cancel}

\DeclarePairedDelimiter{\abs}{\lvert}{\rvert}

\def\bF{\,^{(\mathbf{F})} \hspace{-2.2pt}\b}
\def\bbF{\,^{(\mathbf{F})} \hspace{-2.2pt}\bb}
\def\rhoF{\,^{(\mathbf{F})} \hspace{-2.2pt}\rho}

\def\rhodF{\,^{(\mathbf{F})} \hspace{-2.2pt}\rhod}

\def\a{{\alpha}}

\def\b{{\beta}}

\def\ga{\gamma}
\def\Ga{\Gamma}
\def\de{\delta}
\def\De{\Delta}
\def\ep{\epsilon}
\def\ka{\kappa}

\def\Si{\Sigma}
\def\om{\omega}

\def\vphi{\varphi}

\def\th{\theta}

\def\ze{\zeta}
\def\ka{\kappa}
\def\nab{\nabla}

\def\Up{\Upsilon}

\def\ombc{\widecheck{\omb}}

\def\rhoc{\widecheck{\rho}}
\def\rhoFc{\widecheck{\rhoF}}

\def\kac{\widecheck \ka}
\def\kabc{\widecheck{\underline{\ka}}}

\def\ombc{\underline{\widecheck{\omega}}}

\def\trchc{\widecheck{\tr\chi}}
\def\trchbc{\widecheck{\tr\chib}}
\def\yc{\widecheck{y}}
\def\zc{\widecheck{z}}

\def\pr{{\partial}}
\def\les{\lesssim}
\def\c{\cdot}

\def\trch{{\mbox tr}\, \chi}
\def\chih{{\widehat \chi}}
\def\chib{{\underline \chi}}
\def\chibh{{\underline{\chih}}}

\def\etab{{\underline \eta}}
\def\omb{{\underline{\om}}}
\def\bb{{\underline{\b}}}
\def\aa{\protect\underline{\a}}
\def\xib{{\underline \xi}}

\def\W{\textbf{W}}

\def\qf{\qk}

\def\F{\mathbf{F}}

\renewcommand{\div}{\sdiv}

\def\lap{\De}

\def\DDd{{\slashed{\mathcal{D}}}}

\newcommand{\nabb}{{\bf \nab} \mkern-13mu /\,}

\def\hot{\widehat{\otimes}}
\def\rhod{\,\dual\hspace{-2pt}\rho}

\def\Mext{\ext}

\def\ext{{^{(ext)}\MM}}

\def\Gag{\Ga_g}
\def\Gab{\Ga_b}

\def\D{\mathbf{D}}

\def\dkb{\slashed{\dk}}

\renewcommand{\c}{\cdot}

\def\hot{\widehat{\otimes}}

\def\dec{\delta}

\DeclareMathOperator{\sdiv}{div}

\def\D{\mathbf{D}}

\def\th{\theta}
\def\qk{\mathfrak{q}}

\def\pf{\mathfrak{p}}

\def\hk{\mathfrak{h}}

\def\dk{\mathfrak{d}}

\def\ds{\displaystyle}

\def\ombc{\widecheck{\omb}}

\def\dee{\de_{extra}}

\def\rhoc{\widecheck{\rho}}

\def\kac{\widecheck \ka}
\def\kabc{\widecheck{\underline{\ka}}}

\def\ombc{\underline{\widecheck{\omega}}}

\def\trchc{\widecheck{\tr\chi}}
\def\trchbc{\widecheck{\tr\chib}}
\def\yc{\widecheck{y}}
\def\zc{\widecheck{z}}

\def\muc{\widecheck{\mu}}
\def\rhod{{^*\rho}}

\newcommand{\ov}{\overline}

\newtheorem{thm}{Theorem}[section]

\newtheorem{theorem}[thm]{Theorem}

\newtheorem{remark}[thm]{Remark}
\newtheorem{lemma}[thm]{Lemma}
\newtheorem*{bootstrap}{Bootstrap Assumptions}
\newtheorem*{dominance}{Dominance Condition}
\newtheorem{proposition}[thm]{Proposition}

\newtheorem{corollary}[thm]{Corollary}
\newtheorem{definition}[thm]{Definition}

\def\om{\omega}

\def\les{\lesssim}
\def\MM{\mathcal{M}}

\DeclareMathOperator{\tr}{tr}

\renewcommand{\th}{\theta}
\renewcommand{\a}{\alpha}
\renewcommand{\b}{\beta}
\newcommand{\g}{{\bf g}}

\renewcommand{\aa}{\underline{\a}}

\def\Jp{J^{(p)}}

\def\Mext{\ext}

\newcommand{\Mint}{\,{}^{(int)}\mathcal{M}}

\def\Sitop{\,^{(top)}\Si}

\numberwithin{equation}{section}

\def\MM{{\mathcal M}}

\def\TT{{\mathcal T}}

\def\dual{{^*}}

\def\trch{\tr\chi}
\def\trchb{\tr\chib}

\def\ka{\kappa}
\def\rhoc{\widecheck{\rho}}
\def\kab{\underline{\ka}}
\def\kac{\widecheck{\ka}}
\def\kabc{\widecheck{\kab}}

\def\Si{\Sigma}

\def\dds{\slashed{d}^*}
\def\ddd{\slashed{d}}

\def\hch{\widehat{\chi}}
\def\hchb{\widehat{\chib}}

\def\trchc{\widecheck{\trch}}
\def\trchbc{\widecheck{\trchb}}

\def\ombc{\widecheck{\omb}}

\def\Kc{\widecheck{K}}
\def\ze{\zeta}
\def\nab{\nabla}

\def\De{\Delta}

\DeclareMathOperator{\curl}{curl}

\def\D{{\bf D}}

\def\de{\delta}

\def\ombc{\widecheck{\omb}}

\def\muc{\widecheck{\mu}}

\def\rhoc{\widecheck{\rho}}

\def\dk{\mathfrak{d}}

\makeatother
\allowdisplaybreaks

\usepackage[sortcites,
backend=biber,
date=year,
style=numeric-comp,
doi=false,
isbn=false,
url=false,
maxcitenames=4,
maxbibnames=4,
eprint=true]{biblatex}

\DeclareNameAlias{author}{family-given}
\AtEveryBibitem{
  \clearfield{month}
  \clearfield{url}
  \clearfield{urlyear}
  \clearfield{urlmonth} 
  \ifentrytype{online}{
      \clearfield{year}}
    {
      \clearfield{eprint}}
  }
  \renewbibmacro*{volume+number+eid}{%
    \printfield{volume}%
    \setunit*{\addnbspace}
    \printfield{number}%
    \setunit{\addcomma\space}%
    \printfield{eid}}
  \DeclareFieldFormat[article]{volume}{\textbf{#1}}
  \DeclareFieldFormat[article]{number}{\mkbibparens{#1}}
  \DeclareFieldFormat*{title}{\textit{#1}}
  \DeclareFieldFormat{journaltitle}{#1\isdot}
  \renewbibmacro{in:}{}
  \addbibresource{KerrNewman.bib}

  \begin{document} 
  \title{Einstein-Maxwell Equations \\ on Mass-Centered GCM Hypersurfaces}

  \author[1]{Allen Juntao Fang\footnote{allen.juntao.fang@uni-muenster.de}}
  \author[2]{Elena Giorgi\footnote{elena.giorgi@columbia.edu}}
  \author[3]{Jingbo Wan\footnote{jingbo.wan@sorbonne-universite.fr}}

  \affil[1]{\small Mathematics M\"unster, Universit\"at M\"unster}
  \affil[2]{\small Department of Mathematics, Columbia University}
  \affil[3]{\small Laboratoire Jacques-Louis Lions de Sorbonne Universit\'e}

  \maketitle

  \begin{abstract}

    The resolution of the nonlinear stability of black holes as solutions to the Einstein equations relies crucially on imposing the right geometric gauge conditions. In the vacuum case, the use of Generally Covariant Modulated (GCM) spheres and hypersurfaces has been successful in the proof of stability for slowly rotating Kerr spacetime \cite{klainermanKerrStabilitySmall2023}. For the charged setting, our companion paper \cite{fangMassCenteredGCMFramework2025} introduced an alternative mass-centered GCM framework, adapted to the additional difficulties of the Einstein–Maxwell system. 
    
    In this work, we solve the Einstein-Maxwell equations on such a mass-centered spacelike GCM hypersurface, which is equivalent to solving the constraint equations there. We control all geometric quantities of the solution in terms of some seed data, corresponding to the gauge-invariant fields describing coupled gravitational–electromagnetic radiation in perturbations of Reissner–Nordström or Kerr–Newman, first identified in \cite{giorgiElectromagneticgravitationalPerturbationsKerr2022} and expected to be governed by favorable hyperbolic equations. This provides the first step toward controlling gauge-dependent quantities in the nonlinear stability analysis of the Reissner–Nordström and Kerr–Newman families.

  \end{abstract}

  \tableofcontents

  \section{Introduction}

  In this paper, we solve the Einstein–Maxwell equations on a mass-centered General Covariant Modulated (GCM) hypersurface. The mass-centered GCM framework, introduced in our companion work \cite{fangMassCenteredGCMFramework2025}, is a refinement of the original GCM setup, which played a central role in the proof of the nonlinear stability of Kerr spacetimes \cite{klainermanKerrStabilitySmall2023, giorgiWaveEquationsEstimates2024}. Since the hypersurface is spacelike, solving the restriction of the Einstein–Maxwell equations there amounts to solving the associated constraint equations. As seed data for these constraints, we prescribe the gauge-invariant quantities encoding the electromagnetic and gravitational radiation.

  \subsection{The GCM framework in the nonlinear stability of black holes}

The nonlinear stability of black hole solutions to the Einstein equations has been a central topic of investigation in mathematical General Relativity, with significant progress achieved in recent years \cite{hintzGlobalNonLinearStability2018,klainermanGlobalNonlinearStability2020,dafermosNonlinearStabilitySchwarzschild2021,klainermanKerrStabilitySmall2023}. As in most stability problems for hyperbolic PDEs, the goal is to establish global existence from initial data satisfying a smallness condition, typically via a continuity argument on a bootstrap region of finite size. Owing to the covariant nature of the Einstein equations, the analysis of the dynamical evolution of such data is deeply dependent on the choice of gauge. In other words, a suitable gauge fixing is essential in order to implement the continuity argument successfully.

  In the recent proof of the nonlinear stability of slowly rotating Kerr black holes
  \cite{klainermanKerrStabilitySmall2023,giorgiWaveEquationsEstimates2024,shenConstructionGCMHypersurfaces2023},
  the gauge choice underpinning the continuity argument is based on the construction of Generally Covariant Modulated (GCM) spheres \cite{klainermanConstructionGCMSpheres2022,klainermanEffectiveResultsUniformization2022}. These spheres foliate a spacelike GCM hypersurface \cite{shenConstructionGCMHypersurfaces2023}, on which several key geometric quantities are prescribed to vanish. More precisely, the bootstrap region (see Figure \ref{fig:myfigure}) has future boundary given by the union $\mathcal{A} \cup \Sitop \cup \Si_*$, where $\mathcal{A}, \Sitop$ and $\Si_*$ are spacelike hypersurfaces, with $\Si_*$ foliated by GCM spheres. The spacetime $\MM$ also contains a timelike hypersurface $\TT$ which divides $\MM$ into an exterior region called $\Mext$ and an interior region called $\Mint$.
  The hypersurface $\Sigma_*$ is required to satisfy a dominance condition, ensuring it lies “far away” (in a precise quantitative sense) from the perturbed event horizon, as well as transversality conditions (see Section \ref{sec:geometric-setting-Sigma}). Furthermore, parameters such as the final mass, angular momentum, and axis of rotation of the black hole are fixed on the last sphere $S_*$ of $\Si_*$, where the bootstrap region terminates. As a consequence of the continuity argument, the GCM foliation of $\Si_*$ induces a canonical foliation of null infinity, thereby resolving the supertranslation ambiguity \cite{klainermanCanonicalFoliationNull2024}.
  \begin{figure}[ht]
    \centering
    \includegraphics[width=0.65\textwidth]{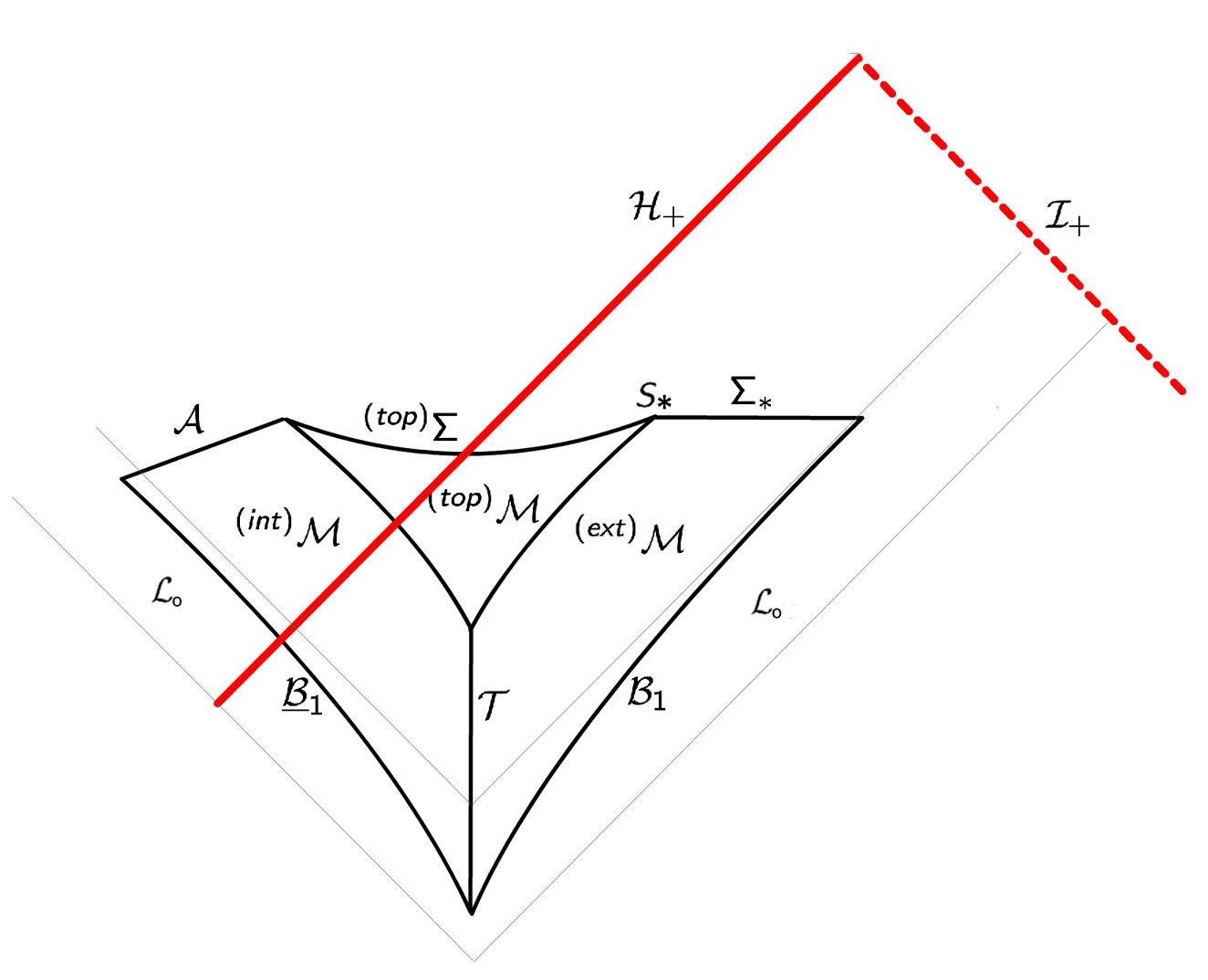} 
    \caption{The GCM admissible spacetime of \cite{klainermanKerrStabilitySmall2023}.}
    \label{fig:myfigure} 
  \end{figure}

  A critical condition imposed on the last sphere $S_*$ is the vanishing of the spacetime’s center of mass, encoded in the $\ell=1$ mode\footnote{Defining $\ell=1$ modes in a canonical way for general spacetimes is a subtle problem, addressed in \cite{klainermanEffectiveResultsUniformization2022}, and used in the construction of GCM spheres and hypersurfaces \cite{shenConstructionGCMHypersurfaces2023}.} of the divergence of the curvature component $\beta$:
  \begin{align*}
    (\div \b)_{\ell=1}=0 \qquad \text{on $S_*$,}
  \end{align*}
  see already Section \ref{sect.LastSlice-Intro} for the definition of the curvature components. Physically, this condition enforces that the last sphere of the bootstrap region shares the same center of mass as the perturbed spacetime, so that its limit as $S_*$ approaches timelike infinity $i^+$ coincides with the final center of mass of the black hole \cite{klainermanCanonicalFoliationNull2024}. 

  In the vacuum case, this choice is possible due to an exceptional structure: the $\ell=1$ mode of $\div \b$ satisfies an outgoing null transport equation involving the spin $+2$ curvature component $\a$, whose $\ell=1$ projection remains regular even under weak decay assumptions \cite{klainermanKerrStabilitySmall2023}. Simultaneously, it satisfies a favorable transport equation along the vector field $\nu$, tangent to $\Si_*$ (see \eqref{def:nu}), where the ingoing derivative dominates. This special structure is crucial to the GCM framework in vacuum, as it permits the controlled propagation of $(\div \b)_{\ell=1}$ both along $\Si_*$ and backwards along outgoing null directions into the exterior region.

  \subsection{Mass-centered GCM hypersurfaces in electrovacuum spacetimes} 

  In the case of charged spacetimes satisfying the Einstein–Maxwell equations (see \eqref{eq:EM}), the exceptional structure present in vacuum breaks down: the $\ell=1$ mode of the perturbations now carries genuine radiation, and the favorable transport properties of $(\div \b)_{\ell=1}$ (or any of its renormalizations) are lost. More precisely, in the charged setting one can construct a renormalization of $(\div \b)_{\ell=1}$ that admits either an acceptable ingoing transport equation or an improved outgoing one, but not both simultaneously. Consequently, the GCM conditions of \cite{klainermanConstructionGCMSpheres2022,klainermanEffectiveResultsUniformization2022,shenConstructionGCMHypersurfaces2023} can no longer be applied directly in the continuity argument for charged black holes, and a modification of the gauge conditions becomes necessary.

  In \cite{fangMassCenteredGCMFramework2025}, we introduced such a modification by imposing that the center of mass — defined as a suitable renormalization of $(\div \b)_{\ell=1}$ with good outgoing transport equation — vanishes on every sphere foliating the GCM hypersurface $\Si$, and not only on its last sphere. This eliminates the need to propagate it along $\Si_*$, while still allowing backward propagation along outgoing null directions in the subsequent step of the proof. Concretely, the \emph{center of mass function} is defined as
  \begin{align*}
    \bm{C}\vcentcolon=\div\b -\frac{Q}{r^2}\div \bF+\frac{2Q}{r^3}\rhoFc,
  \end{align*}
  where $\beta$ and $\bF$ are the null components of the Weyl and electromagnetic fields, and $\rhoFc$ denotes the linearization of an electromagnetic component with respect to its Reissner–Nordström value. We then impose
  \begin{align}\label{eq:condition-C-intro}
    \bm{C}_{\ell=1}=0 \qquad \text{on $\Si_*$.}
  \end{align}
  Here the electric charge $Q$, like the other parameters, is defined on the last sphere $S_*$ by
  \begin{equation*}
    Q \vcentcolon= \frac{1}{4\pi}\int_{S_*} \rhoF= \frac{1}{4\pi}\int_{S_*}\dual\F,
  \end{equation*}
  with the magnetic charge defined analogously (see \eqref{eq:def-e}). A corresponding definition of the angular momentum function $\bm{J}$ is given in \eqref{eq:def-J}, and all parameter functions are recalled in Section \ref{sec:parameterdefinitionsonsstar}.

  This motivates the definition of \emph{mass-centered} GCM spheres and hypersurfaces. The construction of such hypersurfaces is described in detail in our companion paper \cite{fangMassCenteredGCMFramework2025}.

  The choice of renormalization in the definition of $\bm{C}$ is guided by its improved transport along the outgoing null direction, namely (see \eqref{eq:nab-4-C})
  \begin{align*}
    \nab_4(r^5 \bm{C} ) &=
                          -r^5\div\div\a +\text{nonlinear terms}.
  \end{align*}
  Because of the weak decay of curvature components, however,
  $r^5\bm{C}$ fails to decay at infinity, which introduces significant
  subtleties when attempting to integrate this equation. In contrast
  to the vacuum case, where $(\div\b)_{\ell=1}$ enjoys the exceptional
  structure discussed above, the transport of $\bm{C}_{\ell=1}$ along
  $\nu$ (the vector tangent to $\Si_*$, see Proposition
  \ref{lemma:improved-transport-along-Sigma-star}) produces terms that
  decay too weakly to allow integration along the hypersurface. For
  this reason, rather than propagating $\bm{C}_{\ell=1}$ along
  $\Si_*$, we impose the condition \eqref{eq:condition-C-intro}
  directly on every sphere of $\Si_*$. This turns the $\nu$-transport
  equation into an algebraic relation among the perturbation
  quantities, which is then incorporated into the system of equations
  governing the hypersurface.  This reformulation is precisely what
  leads to the notion of \emph{mass-centered} GCM hypersurfaces, and
  it is on such hypersurfaces that we shall now solve the
  Einstein–Maxwell constraint equations.

  \subsection{Einstein-Maxwell equations on hypersurfaces satisfying bootstrap assumptions and the dominance condition}

  In this paper we solve the Einstein–Maxwell equations
  \begin{align}\label{eq:EM}
    \operatorname{Ric}[\g]=2 \F \c \F - \frac 1 2 \g |\F|^2, \qquad  \div \F=0, \qquad d\F=0,
  \end{align}
  where $\operatorname{Ric}[\g]$ is the Ricci curvature of the Lorentzian metric $\g$ and $\F$ is the electromagnetic $2$-form, along a mass-centered GCM hypersurface $\Si_*$ satisfying bootstrap assumptions and the dominance condition, as in the nonlinear stability proof of Kerr by Klainerman–Szeftel \cite{klainermanKerrStabilitySmall2023}. Since $\Si_*$ is spacelike, restricting the Einstein–Maxwell equations to $\Si_*$ is equivalent to solving the associated \emph{constraint equations} on the hypersurface.

  The hypersurface $\Si_*$ is defined by the relation $u+r=c_*$, where $r$ and $u$ play the role of radial and retarded time functions, and $c_*$ is a constant. Both the bootstrap assumptions and the dominance condition are formulated in terms of $r$ and $u$: 
  \begin{enumerate}
  \item[(a)] The \emph{bootstrap assumptions}, compatible with the setup of the nonlinear stability of Kerr–Newman spacetimes, are decay conditions on the perturbation quantities, divided into two classes $\Gag$ and $\Gab$, corresponding to “good’’ and “bad’’ $r$-decay:
    \begin{align}\label{eq:bootstrap-intro}
      \|\Gab\|_{L^\infty(S)} \les \frac{\ep}{r u^{1+\dec}}, \qquad \|\Gag\|_{L^\infty(S)} \les \frac{\ep}{r^2 u^{\frac 1 2 +\dec}},
    \end{align}
    for some $\dec>0$ and a small constant $\ep>0$. Here $\| \cdot \|_{L^\infty(S)}$ denote the $L^\infty$ norm on each sphere $S$ of $\Si_*$.

  \item[(b)] The \emph{dominance condition} is a quantitative relation between $r$ and $u$ on $S_*$, which implies that along $\Si_*$ the $r$-decay dominates as follows:
    \begin{align}\label{eq:dominance-condition}
      \frac{1}{r}\lesssim \frac{\ep_0}{u^{1+\dec}},
    \end{align}
    with $\ep_0$ another smallness parameter satisfying $\ep_0 \ll \ep$ and $\ep^2 \ll \ep_0$. This condition prescribes the position of $S_*$ in the asymptotically flat region of the spacetime.
  \end{enumerate}

  It is well known that the constraint equations form an underdetermined system, and the main difficulty lies in identifying suitable seed data parametrizing their solutions. Classical strategies include the conformal method \cite{lichnerowiczLintegrationEquationsGravitation1944,carlottoGeneralRelativisticConstraint2021} and gluing constructions \cite{corvinoScalarCurvatureDeformation2000,chruscielMappingPropertiesGeneral2003,corvinoAsymptoticsVacuumEinstein2006}.

  In our setting, we show that all geometric quantities on $\Si_*$ describing perturbations of Reissner–Nordström or Kerr–Newman can be controlled in terms of seed data given by the \emph{gauge-invariant radiation fields} for coupled electromagnetic–gravitational perturbations. These quantities,
  \begin{align*}
    \a, \qquad \aa, \qquad \mathfrak{f}, \qquad \underline{\mathfrak{f}}, \qquad \mathfrak{b}, \qquad \underline{\mathfrak{b}},
  \end{align*}
  were identified in \cite{giorgiElectromagneticgravitationalPerturbationsKerr2022} and shown to satisfy a coupled system of Teukolsky equations. The first-order derived quantities of $\mathfrak{f}, \underline{\mathfrak{f}}, \mathfrak{b}, \underline{\mathfrak{b}}$, denoted by
  \begin{align*}
    \mathfrak{q}^{\F}, \qquad \underline{\mathfrak{q}}^{\F}, \qquad \mathfrak{p}, \qquad \underline{\mathfrak{p}},
  \end{align*}
  satisfy generalized Regge–Wheeler equations. These equations are hyperbolic in nature, and thus the control of the radiation fields is expected to be obtainable independently of the rest of the system by energy and decay estimates.
  Results in this direction are already known: boundedness and decay estimates for these quantities have been established in the linearized setting for Reissner–Nordström in the full subextremal range \cite{giorgiLinearStabilityReissner2020a}, for slowly rotating and weakly charged Kerr–Newman \cite{giorgiBoundednessDecayTeukolsky2023}, and for slowly rotating, strongly charged Kerr–Newman spacetimes in axial symmetry \cite{giorgiBoundednessDecayTeukolsky2024}. More recently, decay estimates have also been obtained in the fully nonlinear setting for the Regge–Wheeler system in Reissner–Nordström \cite{wanjingboNonlinearStabilitySubextremal2025}.

  In this work, we go further by showing that, on a mass-centered GCM hypersurface $\Si_*$ satisfying the bootstrap assumptions and dominance condition, \emph{all geometric perturbation quantities} (i.e. the full sets $\Gag$ and $\Gab$) can be expressed and controlled entirely in terms of the free radiation data above. This reduction of the Einstein–Maxwell constraint problem to a parametrization by gauge-invariant radiation fields is the content of our main theorem, stated in the next subsection.

  \subsection{The main theorem and sketch of the proof}
  The application we have in mind is the resolution of the constraint equations on $\Si_*$ as part of the nonlinear stability problem for charged black holes. The gauge-invariant quantities introduced above will serve as the seed data for this underdetermined system. We now state a simplified version of our main result (see Theorem \ref{prop:decayonSigamstarofallquantities} for the precise formulation).

  \begin{theorem}\label{theorem:intro} 

    Let $\Sigma_* \subset \MM$ be a mass-centered GCM hypersurface satisfying the dominance condition on a spacetime $\MM$ solving the Einstein–Maxwell equations, and assume the bootstrap assumptions \eqref{eq:bootstrap-intro} hold for $\Gag$ and $\Gab$. Then all geometric perturbation quantities along $\Si_*$ (i.e. the sets $\Gag$ and $\Gab$) can be expressed and controlled in terms of the gauge-invariant seed data
    $$\text{seed}\vcentcolon=\{\alpha, \aa,  \mathfrak{f}, \mathfrak{b}, \underline{\mathfrak{b}}, \mathfrak{p}, \mathfrak{q}^\F\}. $$

    In particular, 
    \begin{align}\label{eq:theorem-intro}
      \|\Ga_b \|_{L^\infty(S)} \les \frac{\mathscr{F}[\text{seed}]}{r u^{1+\dec}}, \qquad 
      \|\Gag \|_{L^\infty(S)} \les \mathscr{G}[\text{seed}]+\frac{\mathscr{F}[\text{seed}]}{r^2u^{1 +\dec}},
    \end{align}
    where $\mathscr{F}[\text{seed}]$ and $\mathscr{G}[\text{seed}]$ denote suitable flux energy or pointwise norms of the seed data.
  \end{theorem}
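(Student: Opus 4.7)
The plan is to combine the mass-centered GCM conditions, which prescribe the vanishing of a selected set of geometric quantities on every sphere foliating $\Si_*$, with the Einstein--Maxwell constraint equations viewed as a hierarchy of transport equations along the vector field $\nu$ tangent to $\Si_*$. The gauge-invariant radiation fields $\alpha,\aa,\mathfrak{f},\mathfrak{b},\underline{\mathfrak{b}}$ together with their derived Regge--Wheeler-like quantities $\mathfrak{p},\mathfrak{q}^{\F}$ will enter as source terms in these transport equations, while the parameter-fixing conditions on $S_*$ (final mass, charge, angular momentum, center of mass) provide the "initial'' data for integration along $\nu$.

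First I would project each quantity in $\Gag\cup\Gab$ onto its $\ell=0$, $\ell=1$ and $\ell\geq 2$ spherical modes, since these obey structurally different equations and decay rates. For $\ell\geq 2$, I would invert the Codazzi/Gauss relations together with the Bianchi identities as elliptic/Hodge systems on each sphere, recovering the non-vanishing Ricci coefficients and curvature/electromagnetic components from the seed radiation fields up to integration constants determined by $\nu$-transport. For $\ell=0$, the GCM conditions eliminate most averages, and the remaining ones satisfy ODEs along $\nu$ whose right-hand sides are quadratic in the bootstrap quantities, hence of size $\ep^{2}$ and negligible against the linear seed contribution; the values at $S_*$ fix the integration constants through the parameter definitions. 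For $\ell=1$, the mass-centered condition $\bm{C}_{\ell=1}=0$ on every sphere of $\Si_*$ kills the most dangerous mode, and I would use the $\nu$-transport equation of $\bm{C}_{\ell=1}$ (Proposition \ref{lemma:improved-transport-along-Sigma-star}) as an \emph{algebraic} identity relating the $\ell=1$ projections of the surviving Ricci coefficients and electromagnetic quantities, thereby closing the system without integrating the badly decaying $\ell=1$ mode of $\bm{C}$ along $\Si_*$.

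The main obstacle will be the $\ell=1$ sector: in the charged setting the absence of a simultaneously good outgoing \emph{and} good $\nu$-transport equation for any renormalization of $(\div\b)_{\ell=1}$ means one must rely precisely on the algebraic reformulation afforded by the mass-centered framework. Once the $\ell=1$ modes are handled, the quadratic error terms stemming from the Einstein--Maxwell coupling between $\Gag\cup\Gab$ and the electromagnetic sector are absorbable thanks to the dominance condition \eqref{eq:dominance-condition}, which converts surplus $r$-decay into $u$-decay; this conversion is what separates the $\Gab$ bound with weight $r^{-1}u^{-1-\dec}$ from the $\Gag$ bound, whose pointwise part $\mathscr{G}[\text{seed}]$ captures the leading linear behavior while the correction $\mathscr{F}[\text{seed}]/(r^{2}u^{1+\dec})$ absorbs the $\nu$-integrated contributions from lower-order Ricci and Maxwell quantities. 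Collecting the pointwise bounds derived by $\nu$-transport with the flux bounds inherited from the hyperbolic equations satisfied by $\mathfrak{p},\mathfrak{q}^{\F}$ then yields \eqref{eq:theorem-intro} with $\mathscr{F}[\text{seed}]$ and $\mathscr{G}[\text{seed}]$ explicitly identified.
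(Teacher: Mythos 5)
Your high-level plan captures several genuine features of the paper's argument --- the mode decomposition, the algebraic (rather than transported) use of the condition $\bm{C}_{\ell=1}=0$ on every sphere, and the elliptic recovery of $\ell\geq 2$ modes from the seed data --- but it has a gap in the $\ell=1$ sector and in the treatment of the quadratic errors that would prevent the scheme from closing as written.

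First, not all $\ell=1$ modes are handled algebraically. Only $\bm{C}_{\ell=1}=0$ is imposed on every sphere; the angular momentum function $\bm{J}$ (hence the $\ell=1$ modes of $\curl\b$, $\dual\rho$, $\curl\bF$, $\dual\rhoF$) is prescribed only on the last sphere $S_*$ and must genuinely be \emph{integrated} along $\Si_*$ via its $\nu$-transport equation. The right-hand side of that equation contains terms of the schematic form $\dk^{\leq 1}(\Gab\cdot\Gag)$, and here your proposed mechanism fails: the dominance condition does not suffice to absorb them. Under the bootstrap assumptions one has $r^3\,\Gab\cdot\Gag\sim \ep^2 u^{-3/2-2\dec}$, whose $u$-integral is only $O(u^{-1/2-2\dec})$ --- a loss of $u^{1/2}$ relative to the $u^{-1-\dec}$ decay you need for $\bm{J}_{\ell=1}$. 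The paper closes this by first establishing $u^{2+2\dec}$-weighted $L^2(\Si_*)$ flux estimates for \emph{all} of $\Gab$ in terms of $\mathcal{F}^N[\aa,\underline{\mathfrak{b}},\qf^\F]$ (Proposition \ref{Prop.Flux-bb-vthb-eta-xib}: a chain starting from $\nab_3\qf^\F$ controlling $\dds_2\dds_1\ddd_1\bbF$, then $\underline{\mathfrak{b}}$ giving $\bb$, Codazzi giving $\chibh$, the fourth-order elliptic system of Proposition \ref{Prop:nu*ofGCM:0} giving $\eta,\xib$, etc.), and then applying Cauchy--Schwarz in $u$ against this flux to obtain $\int_u^{u_*} r^3|\dk^{\leq k}(\Gag\cdot\Gab)|\lesssim \ep\, u^{-1-2\dec}\sqrt{\mathcal{F}^N+\ep_0^2}$ (Corollary \ref{cor:improved-bounds-nonlinear-=terms}). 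Your proposal relegates the flux bounds to a final "collection" step, but they are a structural prerequisite for the $\ell=1$ integration, which in turn is a prerequisite for recovering $\Gag$ (the $\ell=1$ modes of $\ddd_1\b$ and $\ddd_1\bF$ enter the Hodge estimates for $\b$, $\bF$, $\chih$). The correct order --- $\ell=0$ and most $\ell=1$ modes unconditionally, then $\Gab$ by flux estimates, then $\bm{J}_{\ell=1}$ by transport with the improved nonlinear bounds, then $\Gag$ by sphere-by-sphere elliptic estimates --- is what makes the argument close, and it is absent from your plan.
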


  In particular, if one can independently show that
  \begin{align*}
    \mathscr{F}[\text{seed}] \les \ep_0 , \qquad \mathscr{G}[\text{seed}] \les \frac{\ep_0}{r^2 u^{\frac 1 2 +\dec}},
  \end{align*}
  then \eqref{eq:theorem-intro} improves the bootstrap assumptions \eqref{eq:bootstrap-intro} on $\Si_*$.

  \medskip

  The argument follows the proof of Theorem M3 in \cite{klainermanKerrStabilitySmall2023} (see Remark \ref{rem:applications-bhs}) and proceeds in three stages.

  \noindent {\bf Step 1 (unconditional control of $\ell=0$ and most $\ell=1$ modes).}
  Starting from the bootstrap assumptions \eqref{eq:bootstrap-intro} and the dominance condition \eqref{eq:dominance-condition}, we first derive improved decay for the $\ell=0$ and most of the $\ell=1$ spherical modes of the perturbation, independently of the seed data. The $\ell=0$ estimates are obtained from the transport equation for the Hawking mass (see Section \ref{sec:l=0}), while the $\ell=1$ estimates rely on the mass-centered condition imposed on all spheres of $\Si_*$ (see Section \ref{section:estimates-rll=1-Si_*}). These arguments already improve the bootstrap bounds by $\ep_0$, and also derive stronger decay in $r$ and $u$.

  \noindent {\bf Step 2 (control of $\Gab$ in terms of the seed data).}
  We next show that the $\Gab$ quantities can be controlled directly from the prescribed seed data. This is achieved by establishing flux estimates along $\Si_*$ (see Section \ref{sec:controlofthefluxofsomequantitiesonSigmastar}), which yield $L^2$ bounds for $\Gab$ in terms of the seed data. The flux bounds also imply pointwise decay and, crucially, improved estimates for nonlinear interactions of the form $\Gab\c\Gag$ (see Corollary \ref{cor:improved-bounds-nonlinear-=terms}).

  \noindent {\bf Step 3 (recovery of the remaining $\ell=1$ modes and $\Gag$).}
  Finally, with $\Gab$ under control, we recover the remaining $\ell=1$ modes and all $\Gag$ quantities. The nonlinear terms estimated in Step 2 appear on the right-hand side of the transport equation for the angular momentum function $\bm{J}$, which must be integrated along $\Si_*$ (see Section \ref{sec:l=1-2}). This provides control of $\bm{J}$ and hence of the missing $\ell=1$ components in terms of the seed data. Elliptic estimates on the spheres foliating $\Si_*$ are then used to propagate this control to all of $\Gag$, completing the argument (see Section \ref{sec::improvementofdecaybootassonSigmastar}).

  \medskip
  In summary, the constraint equations are resolved by a combination of transport, flux, and elliptic estimates, reducing the system entirely to the specification of the gauge-invariant seed data.

  \begin{remark}\label{rem:applications-bhs}

    Theorem \ref{theorem:intro} is formulated so as to be directly applicable to the setup of the nonlinear stability problem for Kerr–Newman. In the vacuum case of Kerr, the analogous result is Theorem M3 in \cite{klainermanKerrStabilitySmall2023}, which builds on the prior control of gauge-invariant quantities established in Theorem M1 and M2 \cite{giorgiWaveEquationsEstimates2024}. In contrast, in the present setting such control is not assumed: the gauge-invariant fields are treated as seed data for the  induced Einstein-Maxwell equations. This separation ensures that the statement of Theorem~\ref{theorem:intro} remains modular, so that it can later be combined with the independent hyperbolic estimates for the gauge-invariant quantities.
  \end{remark}

  \subsection{Organization of the Paper}

  The paper is organized as follows. 
  \begin{itemize}
  \item Section \ref{sect.LastSlice-Intro} reviews the necessary preliminaries, including the mass-centered GCM hypersurfaces, the Einstein-Maxwell equations, the bootstrap assumptions and dominance condition. It also derives the equations for the renormalized quantities that will be important later.
  \item Section \ref{sec:consequences-BA} collects the first consequences of the bootstrap assumptions and the dominance condition.
  \item  Section \ref{secM3} contains the proof of our main theorem. 
  \end{itemize}

  \paragraph{Acknowledgments} A.J.F. acknowledges  support from the Deutsche
  Forschungsgemeinschaft (DFG, German Research Foundation) through
  Germany’s Excellence Strategy EXC 2044 390685587, Mathematics
  M\"{u}nster: Dynamics–Geometry–Structure, from the Alexander von
  Humboldt Foundation in the framework of the Alexander von Humboldt
  Professorship endowed by the Federal Ministry of Education and
  Research, and from NSF award DMS-2303241. E.G. acknowledges the support of NSF Grants DMS-2306143, DMS-2336118 and of a grant of the Sloan Foundation. J.W. is supported by ERC-2023 AdG 101141855 BlaHSt.

  \section{Preliminaries}\label{sect.LastSlice-Intro}

  We consider a 4-dimensional Lorentzian manifold $(\MM, \g)$ with a 2-form $\F$, satisfying the Einstein-Maxwell equations \eqref{eq:EM}.

  We recall the following standard definition of the null connection
  coefficients and null Weyl curvature components relative to a null
  frame $(e_3, e_4, e_1, e_2)$, where $e_3, e_4$ are respectively
  ingoing and outgoing null directions and $e_1, e_2$ are orthonormal
  frames orthogonal to them:
  \begin{align*}
    \chib_{ab}&=\g(\D_{e_a}e_3, e_b), \qquad \chi_{ab}=\g(\D_{e_a}e_4, e_b), \\
    \xib_a&=\frac 1 2 \g(\D_3 e_3, e_a), \qquad \xi_a=\frac 1 2 \g(\D_4 e_4, e_a), \\
    \omb&=\frac 1 4 \g(\D_3 e_3, e_4), \qquad \om=\frac 1 4 \g(\D_4 e_4, e_3),\\
    \etab_a&=\frac 1 2 \g( \D_4 e_3, e_a), \qquad \eta_a=\frac 1 2 \g( \D_3 e_4, e_a),\\
    \ze_a&=\frac 1 2\g(\D_{e_a}e_4, e_3),
  \end{align*}
  and
  \begin{align*}
    \alpha_{ab}&=\W(e_a, e_4, e_b, e_4), \qquad \aa_{ab}=\W(e_a, e_3, e_b, e_3),\\
    \b_a&=\frac 1 2 \W(e_a, e_4, e_3, e_4), \qquad \bb_a= \frac 1 2\W(e_a, e_3, e_3, e_4), \\
    \rho&=\frac 1 4 \W(e_3, e_4, e_3, e_4), \qquad \dual\rho=\frac 1 4 \dual\W (e_3, e_4, e_3, e_4),
  \end{align*}
  where $\D$ is the spacetime covariant derivative, $\W$ is the Weyl
  curvature of the spacetime metric $\g$ and $\dual \W$ is its Hodge
  dual.  As usual, we denote
  \begin{align*}
    \trch\vcentcolon=\de^{ab}\chi_{ab}, \qquad \trchb\vcentcolon=\de^{ab}\chib_{ab}.
  \end{align*}
  We also define the following null electromagnetic components:
  \begin{align*}
    \bF_a&= \F(e_a, e_4), \qquad \bbF_a= \F(e_a, e_3), \\
    \rhoF&=\frac 1 2 \F(e_3, e_4), \qquad \dual\rhoF=\frac12\dual\F (e_3, e_4).
  \end{align*}

  \subsection{Geometric setting of \texorpdfstring{$\Si_*$}{}}\label{sec:geometric-setting-Sigma}

  Here we introduce the main defining properties of the hypersurface $\Si_*$ in $\MM$. Such properties are the ones assumed to hold for the boundary of a GCM admissible spacetime in the context of nonlinear stability of Kerr or Kerr-Newman, as in \cite{klainermanKerrStabilitySmall2023} (see also \cite{fangMassCenteredGCMFramework2025}). 

  Let $r$ and $u$ be two functions on $\MM$, such that $r$ restricted
  to $\Sigma_*$ induces the area radius function and moreover, there
  exists a constant $c_*$ such that $u= c_*- r$ on $\Sigma_*$.  We
  also impose the transversality conditions on $\Si_*$:
  \begin{align*}
    \xi=0, \qquad \om=0, \qquad \etab=-\ze, \qquad  e_4(r)=1, \qquad e_4(u)=0,
  \end{align*}
  which allows us to make sense of all the Ricci coefficients of $\Si_*$,
  as well as of all first-order derivatives of $r$ and $u$ on $\Si_*$. We define
  \begin{align*}
    y\vcentcolon= e_3(r), \qquad z\vcentcolon= e_3(u).
  \end{align*}
  Using the transversality condition for $e_4(r)$ and $e_4(u)$, and
  imposing that $(e_1, e_2)$ is adapted to the $r$-foliation on
  $\Si_*$, we have
  \begin{align*}
    \nab(r)= \nab(u)=0, \quad e_4(u) =0, \quad e_4(r) =1, \quad e_3(r)=y,\quad e_3(u)=z.
  \end{align*}
  The following relations hold true:
  \begin{align}\label{byz}
    \nab   y =-\xib+\big(\ze-\eta) y, \qquad  \nab z = (\ze-\eta ) z, \qquad b_*=-y-z.
  \end{align}
  We denote
  \begin{align}\label{def:nu}
    \nu= e_3+b_* e_4
  \end{align}
  the vectorfield tangent to $\Si_*$, orthogonal to the foliation and
  normalized by the condition $\g(\nu, e_4)=-2$.

  The hypersurface $\Si_*$ terminates in a boundary sphere $S_*$ on
  which the given function $r$ is constant, i.e. $S_*$ is a leaf of
  the $r$-foliation of $\Si_*$. Let $r_*$, $ u_* $ denote the values
  of $r$ and $u$ on $S_*$.  On $S_*$ there exist coordinates
  $(\th, \vphi)$ such that the induced metric $g$ on $S_*$ takes the
  form
  \begin{align}\label{eq:metric-on-S*}
    g= r^2e^{2\phi}\Big( (d\th)^2+ \sin^2 \th (d\vphi)^2\Big),
  \end{align}
  and the  functions 
  \begin{align*}
    J^{(0)} \vcentcolon=\cos\th, \qquad J^{(-)} \vcentcolon=\sin\th\sin\vphi, \qquad  J^{(+)} \vcentcolon=\sin\th\cos\vphi,
  \end{align*}
  verify  the balanced  conditions 
  \begin{align*}
    \int_{S_*}  J^{(p)} =0, \qquad p=0,+,-.
  \end{align*}

  We rely on a special
  orthonormal basis $(e_1, e_2)$ of the tangent space of $S_*$ given by
  \begin{align*}\label{eq:specialorthonormalbasisofSstar}
    e_1=\frac{1}{re^\phi}\pr_\th, \qquad e_2=\frac{1}{r\sin\th e^\phi}\pr_\vphi, \quad\mbox{on}\quad S_*.
  \end{align*}
  We define
  the 1-forms $f_0$, $f_+$ and $f_-$ defined on $S_*$ by:
  \begin{align*}
    &(f_0)_1 =0,  \qquad \qquad \qquad  (f_0)_2 =\sin\th, \\
    &(f_+)_1 =\cos\th\cos\vphi,  \qquad (f_+)_2 =-\sin\vphi,\\
    & (f_-)_1 =\cos\th\sin\vphi,  \qquad (f_-)_2=\cos\vphi,
  \end{align*}
  and extend them to $\Si_*$ by $\nab_\nu f_{p}=0$ for $p=0, \pm$.

  The coordinates $(\th, \vphi)$ and the functions $\Jp$ on $S_*$ are
  extended to $\Si_*$ by setting $ \nu(\th)=\nu(\vphi)=0$ and
  $\nu(\Jp)=0$, for $p=0,+,-$.  We also impose the transversality
  conditions on $\Si_*$ for $(\th, \vphi)$ and $\Jp$:
  \begin{align*}
    e_4(\th)=0, \qquad e_4(\vphi)=0, \qquad e_4(\Jp)=0, \qquad p=0,+,-.
  \end{align*}
  For a scalar function $\lambda$ on any sphere $S$ of $\Si_*$, we
  define its projection to the $\ell=1$ mode as the triplet functions
  \begin{align*}
    \lambda_{\ell=1}\vcentcolon=\biggl\{ \int_{S}J^{(p)} \lambda, \qquad p\in \{-, 0, +\}\biggl\}.
  \end{align*}
  We also denote by $K$ the Gauss curvature of the spheres $S$ of $\Si_*$.

  \subsection{Definitions of mass, charge and angular momentum at \texorpdfstring{$S_*$}{S*}}\label{sec:parameterdefinitionsonsstar}

  We now define the mass, charge and angular momentum parameters at the sphere $S_*$.

  \begin{definition}\label{def:parameters}
    We define the following constant parameters on $\Sigma_*$:
    \begin{enumerate}
    \item The \emph{auxiliary mass $m$} is defined to be the Hawking mass of $S_*$, i.e.,
      \begin{equation}\label{eq:definition-auxiliary-m}
        \frac{2m}{r_*}\vcentcolon=1+\frac{1}{16\pi} \int_{S_*} \trch\trchb. 
      \end{equation}
    \item  The \emph{electric charge $Q$} is defined as the average of $\rhoF$ at $S_*$, i.e.
      \begin{equation}
        \label{eq:M3:Q:def}
        Q \vcentcolon= \frac{1}{4\pi}\int_{S_*} \rhoF= \frac{1}{4\pi}\int_{S_*}\dual\F.
      \end{equation}
    \item  The \emph{magnetic charge $e$} is defined as the average of $\dual\rhoF$ at $S_*$, i.e.
      \begin{eqnarray}\label{eq:def-e}
        e \vcentcolon= \frac{1}{4\pi}\int_{S_*} \dual \rhoF= \frac{1}{4\pi}\int_{S_*}\F.
      \end{eqnarray}

    \item The \emph{mass parameter $M$} is defined as 
      \begin{align}
        M = m + \frac{Q^2}{2r_*}.
      \end{align}
    \item The \emph{angular  momentum $a$} is defined as 
      \begin{align}\label{eq:def-angular-momentum-S*}
        a\vcentcolon=\frac{r_*^3}{8\pi M}\int_{S_*} J^{(0)}\bm{J},
      \end{align}
      where $\bm{J}$ is defined as 
      \begin{align}\label{eq:def-J}
        \bm{J}\vcentcolon=\curl \b -\frac{Q}{r^2}\curl\bF +\frac{2Q}{r^3}\dual\rhoF.
      \end{align}
    \end{enumerate}
  \end{definition}

  \begin{remark}\label{rem:charge-constant}
    Observe that by Stokes Theorem and Maxwell's equation $d\F=0$,
    $d\dual \F=0$, the charges $Q$ and $e$ are independent of the
    choice of the sphere, i.e.
    \begin{align*}
      Q=\frac{1}{4\pi}\int_{S} \rhoF, \qquad e=\frac{1}{4\pi}\int_{S} \dual \rhoF,
    \end{align*} 
    for any sphere $S$.
    In particular, they are the same in the whole of $\Sigma_*$.
  \end{remark}

  \begin{remark}
    Any rotation of the Maxwell field
    $\F_{\lambda}=\cos\lambda \F + \sin\lambda \dual \F$ satisfies the
    Maxwell equations, therefore we can choose a parameter $\lambda$
    such that $e=0$ on $S_*$, and therefore by Remark
    \ref{rem:charge-constant} $e=0$ on $\Sigma_*$.
  \end{remark}

  \begin{remark}\label{ref:rem-J}
    The definition of $\bm{J}$ is motivated by its favorable behavior
    under transport: it satisfies a suitable transport equation along
    the tangent vector $\nu$ to $\Sigma_*$, \emph{as well as} an
    improved transport equation along the outgoing null direction
    $e_4$ (see already \eqref{eq:nab-4-J}). In contrast with the Kerr
    case, where $\bm{J}=\curl\beta$, the presence of electromagnetic
    contributions in the equations for $\beta$ necessitates additional
    lower-order terms. These terms play a crucial role in canceling
    problematic contributions in the transport equation. This is in
    parallel with the definition of the center of mass $\bm{C}$
    discussed below (see Remark~\ref{rem:bm-C}).
  \end{remark}

  In addition to the constant parameters above, we also make use of
  the following definition of function masses on $\Sigma_*$.
  \begin{definition} We define the following functions on $\Sigma_*$.
    \begin{enumerate}
    \item The \emph{Hawking mass} of a sphere $S$ in $\Si_*$ is given by 
      \begin{align}\label{dfHawking-mass}
        \frac{2m_H}{r} &= 1+\frac{1}{16\pi}\int_S\trch\trchb.
      \end{align}
    \item The \emph{mass aspect function} $\mu$ is defined as
      \begin{align}\label{def-mu}
        \mu\vcentcolon=-\div\ze-\rho +\frac{1}{2}\hch\c\hchb.
      \end{align}
    \item The \emph{center of mass function} is defined as
      \begin{align}\label{eq:definition-bm-C}
        \bm{C}\vcentcolon=\div\b -\frac{Q}{r^2}\div \bF+\frac{2Q}{r^3}\rhoFc 
      \end{align}
      where $\rhoFc \vcentcolon= \displaystyle\rhoF -\frac{Q}{r^2}$.
    \end{enumerate}
  \end{definition}

  \begin{remark}\label{rem:bm-C}
    The definition of the quantity $\bm{C}$ is motivated by the fact
    that it satisfies an improved transport estimate along the
    outgoing null direction $e_4$ (see already \eqref{eq:nab-4-C}),
    given by
    \begin{align*}
      \nab_4 \bm{C} &= -\frac{5}{r}\bm{C}    
                      -\div\div\a +\text{nonlinear terms}.
    \end{align*} 
    Contrary to the case of $\bm{J}$, as in Remark \ref{ref:rem-J}
    above, it is not possible to find a renormalization of
    $(\div\b)_{\ell=1}$ which satisfies an improvement \emph{both} in
    the transport equations along $\Si_*$ and along $e_4$. For this
    reason, the GCM condition $\bm{C}_{\ell=1}=0$ is imposed on all
    the spheres of the hypersurface as opposed to only at $S_*$ as in
    \cite{klainermanKerrStabilitySmall2023}.
  \end{remark}

  \begin{remark}
    Observe that, using the operator $\ddd_1=(\div, \curl)$, the
    quantities $\bm{C}$ and $\bm{J}$ can be written concisely as
    \begin{align}
      (\bm{C}, \bm{J})
      \vcentcolon={}& \ddd_1 \b-\frac{Q}{r^2}\ddd_1 \bF +\frac{2Q}{r^3}(\rhoFc, \dual\rhoF).\label{eqn:C-J-renormalization}
    \end{align}
  \end{remark}

  The mass and charge parameters given in Definition
  \ref{def:parameters} are used to define the linearized quantities
  with respect to their expected value in Reissner-Nordstr\"om.

  \begin{definition}\label{ref:definition-linearized}
    We define the following linearized quantities:
    \begin{align*}
      \trchc &\vcentcolon=  \ds\trch-\frac{2}{r},\qquad
               \trchbc \vcentcolon= \ds\trchb+\frac{2\Up}{r}, \qquad \ombc \vcentcolon= \ds\omb-\frac{M}{r^2}+\frac{Q^2}{r^3}, \\
      \yc &\vcentcolon= y +\Up,\qquad
            \zc \vcentcolon= z -2, \qquad \widecheck{b_*}: = \ds b_* + 2 - \Upsilon, \qquad \widecheck{K}\vcentcolon= K - \frac{1}{r^2}, \\
      \rhoc &\vcentcolon= \ds \rho +\frac{2M}{r^3}  - \frac{2Q^2}{r^4}, \qquad \widecheck{\mu} \vcentcolon= \ds \mu -\frac{2M}{r^3} + \frac{2Q^2}{r^4}, \qquad \rhoFc \vcentcolon= \displaystyle\rhoF -\frac{Q}{r^2},
    \end{align*}
    where $M$ and $Q$ are the mass and electric charge defined in
    Definition \ref{def:parameters} and
    $\Up = 1-\frac{2M}{r} + \frac{Q^2}{r^2}$.
  \end{definition}

  In addition to the above, we also introduce the following
  renormalized quantities:
  \begin{align*}
    \widecheck{\nab J^{(0)}}&\vcentcolon=\nab J^{(0)}+\frac{1}{r}\dual f_0, \qquad \widecheck{\nab J^{(+)}}\vcentcolon=\nab J^{(+)}-\frac{1}{r}f_+, \qquad \widecheck{\nab J^{(-)}}\vcentcolon=\nab J^{(-)}-\frac{1}{r}f_-, \\
    \widecheck{\curl(f_0)}&\vcentcolon=\curl(f_0)-\frac{2}{r}\cos\th, \qquad \widecheck{\div(f_\pm)}\vcentcolon=\div(f_\pm)+\frac{2}{r}J^{(\pm)}.
  \end{align*}

  \subsection{The electrovacuum mass-centered GCM conditions on \texorpdfstring{$\Si_*$}{}}\label{sec:GCMconditionsonSigmastar}

  Here we recall the mass-centered GCM conditions imposed on
  $\Sigma_*$ in \cite{fangMassCenteredGCMFramework2025}.  The
  conditions partly coincide with the ones in Kerr in vacuum, with the
  most important difference being that each sphere along $\Sigma_*$
  has vanishing center of mass.

  \begin{definition}\label{def-mass-cent-GCM-h} We say that $\Si_*$ is an \emph{electrovacuum mass-centered GCM hypersurface} if the following conditions are satisfied. 
    \begin{enumerate}
    \item On the sphere $S_*$ we have: 
      \begin{align}\label{eq:GCM-cond1}
        \trchc=0, \qquad \trchbc =0,\qquad \bm{C}_{\ell=1}=0 , 
      \end{align} 
      and 
      \begin{align}\label{eq:S_*-GCM}
        \int_{S_*}J^{(+)}\bm{J}=0, \qquad \int_{S_*}J^{(-)}\bm{J}=0.
      \end{align}
    \item  On any sphere of the $r$-foliation of $\Si_*$, we have 
      \begin{align}
        \label{eq:Si_*-GCM1}
        \begin{split}
          \trchc &=0,\\
          \trchbc &=\underline{C}_0+\sum_{p=0,+,-}\underline{C}_pJ^{(p)},\\ 
          \widecheck{\mu}  &=M_0+\sum_{p=0, +,-}M_p J^{(p)},
        \end{split}
      \end{align}
      where $\underline{C}_0$, $\underline{C}_p$, $M_0$, $M_p$ are
      scalar functions on $\Si_*$ and are constant on the leaves of the
      foliation. Also,
      \begin{align}
        \label{eq:Si_*-GCM2}
        \bm{C}_{\ell=1}=0 , \qquad\ov{b_*}=-1-\frac{2M}{r}+\frac{Q^2}{r^2}, 
      \end{align}
      where $\ov{b_{*}}$ denotes the average of $b_*$ on the spheres foliating $\Si_*$, and 
      \begin{align} \label{eq:Si_*-GCM2-xib}
        (\div\xib)_{\ell=1}=0.
      \end{align}
    \end{enumerate}
  \end{definition}

  \subsection{Bootstrap assumptions and dominance condition}

  We collect here the two additional assumptions on the mass-centered GCM hypersurface $\Sigma_*$: the bootstrap assumptions and the dominance condition.

  Recall the linearized quantities of Definition \ref{ref:definition-linearized}. We divide all the linearized geometrical quantities (connection coefficients, curvature and electromagnetic components) into two sets, denoted $\Gag, \Gab$, depending on their expected decay:
  \begin{align}
    \Ga_{g}:&=\Big\{ \trchc, \;  \hch,  \;   \ze,  \;  \trchbc, \;  r\a,\;  r\b, \; \bF, \; r \rhoc, \;  r \rhod, \; \dual\rhoF, \; \rhoFc, \; r \muc,  \;  r\widecheck{K} \Big\},\label{eq:defintionofGagforChapter5glsdfiuhgs}\\
    \Gab&=\Big\{ \eta,\;  \hchb, \; \widecheck{\omb}, \; \xib,\; r\bb,\; \aa,  \; \bbF, \; r^{-1}\yc , \; r^{-1}\zc, \; r^{-1} \widecheck{b_*}\Big\}.  \label{eq:Definition-Ga_b}
  \end{align}

  The bootstrap assumptions for the linearized quantities are given in terms of the $L^\infty$ norms on $S(u) \subset\Si_*$, i.e. $$\|f\|_{\infty,k}(u)= \sum_{i=0}^k \|\dk_*^i f\|_{\infty }(u),$$ where $\dk_*$ denotes tangential derivatives to $\Si_*$.

  \begin{bootstrap}
    We assume that there exist positive small constants $\ep>0$, $\dec>0$ and a large enough $N>0$ such that for any element of the set $\Ga_b$ and $\Ga_g$ the following estimates hold true on $\Si_*$ for $0\leq k\le  N$:
    \begin{align}\label{decayGagGabM3}
      \begin{split}
        \|\Gab\|_{\infty, k}&\les\frac{\ep}{ru^{1+\dec}}, \\
        \|\Gag\|_{\infty, k}&\les\frac{\ep}{r^2u^{\frac 1 2 +\dec}},\\
        \|\nab_\nu\Gag\|_{\infty,k-1}&\les\frac{\ep}{r^2u^{1+\dec}}.
      \end{split}
    \end{align}
  \end{bootstrap}
  Observe that, according to the above assumptions, for $r, u \gg 1$, we have
  \begin{align}\label{eq:rule-Gag-Gab}
    r^{-1}\|\Gab \|_{\infty, k}\lesssim \|\Gag\|_{\infty, k}.
  \end{align}

  The \emph{dominance condition} below quantifies the largeness of $r$ and $u$ along $\Si_*$.

  \begin{dominance} We assume that on $S_*$, the values $r_*$ and $u_*$ are related by the following relation
    \begin{align*}
      r_*= \de_*\ep_0^{-1}u_*^{1+\dec},
    \end{align*} 
    where $\de_*>0$, $\ep_0>0$ are positive small constants satisfying $\ep=\ep_0^{\frac 2 3}$. 
    This implies in particular on $\Si_*$
    \begin{align*}
      r\geq r_* \geq\de_*\ep_0^{-1}u^{1+\dec}, 
    \end{align*}
    which can be written as
    \begin{align}\label{dominanceM3}
      \frac 1 r \lesssim \frac{\ep_0}{u^{1+\dec}}.
    \end{align}
  \end{dominance}

  \subsection{The Einstein-Maxwell equations}

  We collect here the equations satisfied by the linearized quantities on $\Si_*$, using the bootstrap assumptions to simplify nonlinear terms\footnote{We keep in the equations only the nonlinear terms with the worst decay according to the rule \eqref{eq:rule-Gag-Gab}.}.

  \begin{proposition}\label{Prop.NullStr+Bianchi-lastslice}The following  equations hold true on $\Si_*$:
    \begin{enumerate}
    \item The linearized Maxwell equations are given by:
      \begin{align*}
        \nab_3 \bF
        -\nab\rhoFc+\dual \nab \dual\rhoF
        - \frac{\Upsilon}{r}\bF      
        ={}& \left(\frac{2M}{r^2}-\frac{2Q^2}{r^3}\right)\bF 
             + \frac{2Q}{r^2}\eta                      
             + \Gamma_b\cdot\Gamma_g,\\           
      \nab_4 \rhoFc-\div \bF
      ={}& - \frac{2}{r}\rhoFc,
      \\
      \nab_3 \rhoFc+\div \bbF
      ={}& \frac{2\Upsilon}{r}\rhoFc
           - \frac{Q}{r^2}\kabc           
           + \frac{2Q}{r^3}\yc
           + \Gamma_b\cdot\Gamma_b,
      \\
      \nab_4 \dual \rhoF-\curl \bF
      ={}&   -\frac{2}{r}\dual \rhoF,
      \\
      \nab_3 \dual \rhoF-\curl \bbF
      ={}&\frac{2\Upsilon}{r}\dual \rhoF
           + \Gamma_b\cdot\Gamma_b.          
    \end{align*}
\item  The linearized null structure equations are given by
 \begin{equation*}
      \begin{split}
        \nab_4\trchc
        &= \Ga_g\c\Ga_g,\\
        \nab_4\chih
        + \frac{2}{r} \chih
        &=-\a,\\
        \nab_4 \ze
        +\frac{2}{r} \ze
        &= -\beta
          - \frac{Q}{r^2}\bF
          +\Ga_g\c \Ga_g,\\
        \nab_4\trchbc
        +\frac{1}{r}\trchbc
        &= - 2 \div \ze + 2 \rhoc + \Ga_b\c \Ga_g,\\
        \nab_4 \chibh + \frac{1}{r} \chibh
        &= \frac{\Up}{r}\chih - \nab\hot \ze   +\Ga_b\c \Ga_g,
      \end{split}
    \end{equation*}
    and 
    \begin{equation*}
      \begin{split}
        \nab_3\trchc
        &=  2   \div \eta + 2\rhoc -\frac{1}{r}\trchbc + \frac{4}{r} \ombc +\frac{2}{r^2}\widecheck{y} +\Ga_b\c\Ga_b,\\
        \nab_3 \trchbc -\frac{2\Up}{r}  \trchbc
        &= 2\div \xib +\frac{4\Up}{r} \ombc -\left(\frac{2M}{r^2}+\frac{2Q^2}{r^3} \right) \trchbc  -\left(\frac{2}{r^2} -\frac{8M}{r^3} +\frac{6Q^2}{r^4} \right)\yc  +\Ga_b\c \Ga_b, \\
        \nab_3 \chibh -\frac{2\Up}{r}\chibh
        &= -\aa -\left(\frac{2M}{r^2} -\frac{2Q^2}{r^3}\right)\chibh +\nab\hot \xib +\Ga_b\c \Ga_b,\\
        \nab_3 \ze 
        &= - \bb- \frac{Q}{r^2}\bbF
          - 2 \nab \ombc
          +\frac{\Up}{r} (\eta+\ze)
          +\frac{1}{r}\xib
          + \left(\frac{2M}{r^2}-\frac{2Q^2}{r^3}\right)(\ze-\eta)
          +\Ga_b\c \Ga_b,\\
        \nab_3 \chih -\frac{\Up}{r} \chih
        &=\nab\hot \eta -\frac{1}{r} \chibh+\left(\frac{2M}{r^2}-\frac{2Q^2}{r^3} \right)\chih+\Ga_b\c \Ga_b.
      \end{split}
    \end{equation*}
    Also,
    \begin{align*}
      \div\chih
      &= \frac{1}{r}\ze
        - \b
        + \frac{Q}{r^2}\bF
        +\Ga_g\c \Ga_g,\\
      \div\chibh
      &= \frac{1}{2}\nab\kabc +
        \frac{\Up}{r}\ze
        +\bb
       - \frac{Q}{r^2}\bbF
        +\Ga_b\c \Ga_g,\\
      \curl\ze&= \rhod-\frac{1}{2}\chih\wedge\chibh,\\
      \curl\eta &= \rhod+\Ga_b\c \Ga_g,\\
      \curl \xib &=   \Ga_b\c \Ga_b,
    \end{align*}
    and
    \begin{align}
      \widecheck{K}&=-\frac{1}{2r} \trchbc  -\rhoc +\frac{2Q}{r^2}\rhoFc+ \frac 12  \chih \c \chibh, \label{eq:Gauss}\\
      \widecheck{\mu} &= -\div\ze -\rhoc + \frac 12  \chih \c \chibh.
    \end{align}
\item The linearized Bianchi identities are given by
 \begin{equation*}
   \begin{split}
     \nab_3\a -\frac{\Up}{r} \a    ={}& \nab\hot \b +\left(\frac{4M}{r^2} -\frac{4Q^2}{r^3}\right)\a +\left(\frac{6M}{r^3} -\frac{6Q^2}{r^4}\right)\chih + \frac{2Q}{r^2}\left( \nabla\hot\bF  - \frac{Q}{r^2}\chih  \right)\\
  &    +\Ga_b\c(\a,\b, r^{-1}\bF)+r^{-1}\dk^{\leq 1}(\Ga_g\c\Ga_g),\\
     \nab_4\beta +\frac{4}{r}\beta
     ={}&-\div\a
          + \frac{Q}{r^2}\nabla_4\bF+\Gag\c (\a,\b, r^{-1}\bF),\\
     \nab_3\b -\frac{2\Up}{r}\b
     ={}& (\nab\rho+\dual\nab\rhod)
          +\left(\frac{2M}{r^2} -\frac{2Q^2}{r^3}\right)\b
          -\left(\frac{6M}{r^3} -\frac{6Q^2}{r^4}\right)\eta
         + \frac{Q}{r^2}\left(
          \nabla\rhoF
          - \dual\nabla\dual\rhoF
          \right)\\
        & + \frac{\Upsilon Q}{r^3}\bF
          - \frac{2Q}{r^3}\bbF
          + r^{-1}\dk^{\leq1}(\Ga_b\c\Ga_g),\\
     \nab_4 \rhoc
     +\frac{3}{r}\rhoc
        &=\div \b
         - \frac{4Q}{r^3}\rhoFc
          + \frac{Q}{r^2}\div \bF
          +r^{-1}\dk^{\leq1}(\Ga_b\c\Ga_g),\\
     \nab_3 \rhoc
     -\frac{3\Up}{r} \rhoc
        &=  -\div\bb
          +\left(\frac{3M}{r^3} -\frac{4Q^2}{r^4}\right)\trchbc
          -\left(\frac{6M}{r^4}-\frac{8Q^2}{r^5}\right) \yc   -
          \frac{1}{2}\chih\c\aa
          +\frac{4Q\Up}{r^3}\rhoFc\\
        & - \frac{Q}{r^2}\div\bbF
          +r^{-1}\dk^{\leq1}(\Ga_b\c\Ga_b),\\
     \nab_4 \rhod+\frac{3}{r} \rhod
        &=-\curl\b
          - \frac{Q}{r^2}\curl\bF
          +r^{-1}\dk^{\leq1}(\Ga_b\c\Ga_g), \\
     \nab_3 \rhod
     -\frac{3\Up}{r} \rhod
        &=-\curl\bb -
          \frac{1}{2} \chih\c\dual \aa
           - \frac{Q}{r^2}\curl\bbF
          + r^{-1}\dk^{\leq1}(\Ga_b\c\Ga_b).
   \end{split}
    \end{equation*}
\end{enumerate}
\end{proposition}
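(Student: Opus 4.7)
The plan is to take the fully nonlinear null-frame decomposition of the Einstein--Maxwell system $\operatorname{Ric}[\g]=2\F\c\F-\frac12 \g|\F|^2$, $\div \F=0$, $d\F=0$ (and the second Bianchi identity for $\W$, which becomes inhomogeneous with a Maxwell source once Ricci is eliminated), linearize every equation around the Reissner--Nordstr\"om values, and then drop all nonlinear contributions except those with the slowest decay allowed by \eqref{eq:rule-Gag-Gab}. The fully nonlinear versions of all three families of equations (Maxwell, null structure, Bianchi) in a general null frame are already recorded in the companion paper \cite{fangMassCenteredGCMFramework2025}, so the proof reduces to a systematic linearization and truncation rather than a fresh derivation.

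First I would fix the background values: $\trch=\frac{2}{r}$, $\trchb=-\frac{2\Up}{r}$, $\omb=\frac{M}{r^2}-\frac{Q^2}{r^3}$, $\rho=-\frac{2M}{r^3}+\frac{2Q^2}{r^4}$, $\rhoF=\frac{Q}{r^2}$, $y=-\Up$, $z=2$, $b_*=-2+\Up$, with all the remaining Ricci, Weyl, and electromagnetic quantities vanishing. Substituting $\trch=\frac{2}{r}+\trchc$, $\trchb=-\frac{2\Up}{r}+\trchbc$, $\rho=-\frac{2M}{r^3}+\frac{2Q^2}{r^4}+\rhoc$, $\rhoF=\frac{Q}{r^2}+\rhoFc$, etc.\ into each nonlinear equation and using $e_4(r)=1$, $e_3(r)=y=-\Up+\yc$, $\nab(r)=0$ produces, for every Ricci coefficient, electromagnetic component, or curvature component, a linear principal part, an explicit Reissner--Nordstr\"om coefficient generated by the background values (this is where the factors $\frac{M}{r^2}$, $\frac{Q^2}{r^3}$, $\frac{\Up}{r}$, $\frac{Q}{r^2}$, etc.\ come from), and a genuinely nonlinear remainder. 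After this substitution, every background derivative acting on a background quantity is computed explicitly using $e_4(r)=1$ and $e_3(r)=-\Up$, which is where the precise coefficients such as $\frac{2M}{r^2}-\frac{2Q^2}{r^3}$, $\frac{4M}{r^2}-\frac{4Q^2}{r^3}$, etc., arise as $e_3$ of the relevant background coefficient.

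Next, I would sort the nonlinear remainders according to the $\Gag/\Gab$ split \eqref{eq:defintionofGagforChapter5glsdfiuhgs}--\eqref{eq:Definition-Ga_b} and apply the rule \eqref{eq:rule-Gag-Gab} to retain only the dominant type in each equation. For the null structure equations, this means keeping $\Ga_g\c\Ga_g$ in $\nab_4$ equations for $\Ga_g$, $\Ga_b\c\Ga_g$ in mixed $\nab_4$ equations, and $\Ga_b\c\Ga_b$ in the $\nab_3$ equations of $\Ga_b$. For the Bianchi identities, one additionally has to track where a factor of $r^{-1}$ can be extracted: the Bianchi identities produce nonlinear terms of the schematic form $r^{-1}\dk^{\leq 1}(\Ga\c\Ga)$, coming from the trace parts $\trch\c\b$, $\trchb\c\a$, etc., for which I would systematically factor out $r^{-1}$ from the coefficient, and then identify the worst-decaying product inside. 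The $\rho^F, \dual\rho^F$ terms appearing in the Bianchi identity for $\b$ come from the Maxwell contribution to the Ricci tensor and must be handled with the decomposition $\rhoF=\frac{Q}{r^2}+\rhoFc$, which isolates the background-weighted linear term $\frac{Q}{r^2}\nab\rhoF$ from a purely nonlinear remainder $\rhoFc\cdot\Ga_g$ that the bootstrap absorbs into $r^{-1}\dk^{\leq 1}(\Ga_b\c\Ga_g)$.

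Finally, the two scalar identities \eqref{eq:Gauss} and the $\widecheck{\mu}$ identity follow directly from the Gauss equation $K=-\frac14\trch\trchb-\rho+\frac12\hch\c\hchb+\frac{|\F|^2_{23}}{\text{etc.}}$ by inserting the background values and the definition \eqref{def-mu}; these are algebraic and require no bootstrap input. The main obstacle is not conceptual but bookkeeping: one must carefully separate the Reissner--Nordstr\"om source terms, proportional to powers of $\frac{M}{r^k}$ and $\frac{Q^2}{r^k}$, from the genuinely nonlinear ones, and verify at each step that the coefficient on the right-hand side matches precisely (e.g.\ checking the $\frac{6M}{r^3}-\frac{6Q^2}{r^4}$ coefficient in the $\nab_3\a$ equation, which couples the background gradient of $r$, the background $\chih=0$ linearization, and the $r$-derivatives of $\Up$). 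Once the book-keeping is done for one representative in each family (say, $\nab_4\rhoc$, $\nab_3\b$, and $\nab_4 \chih$), the remaining equations follow by the same mechanical procedure.
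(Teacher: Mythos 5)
Your proposal is correct and is essentially the paper's own argument: the paper's proof likewise consists of taking the nonlinear null-frame Einstein--Maxwell equations (citing \cite{giorgiElectromagneticgravitationalPerturbationsKerr2022} for their form), substituting the linearized quantities of Definition \ref{ref:definition-linearized} with the same Reissner--Nordström background values you list, using $y=e_3(r)$ and the GCM condition $\trchc=0$ on $\Si_*$, and retaining only the worst-decaying nonlinear products according to the $\Gag/\Gab$ rule \eqref{eq:rule-Gag-Gab}. Your write-up merely spells out the bookkeeping that the paper leaves implicit.
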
 
\begin{proof}
 The proof follows immediately from the Einstein-Maxwell equations (see for example \cite{giorgiElectromagneticgravitationalPerturbationsKerr2022}, the definition of the
  linearized quantities, the definition of $\Ga_g$ and $\Ga_b$, the
  fact that $y=e_3(r)$, and the GCM condition $\trchc=0$ on $\Si_*$.
\end{proof}

\subsubsection{Additional elliptic relations on \texorpdfstring{$\Si_*$}{}}

Let $\xi$ be an arbitrary one-form and $\th$ an arbitrary symmetric traceless $2$-tensor on $S$. We recall the following differential operators on the spheres:
\begin{itemize}
\item $\nabb$ denotes the covariant derivative associated to the metric $g$ on $S$.
\item $\ddd_1$ takes $\xi$ into the pair of functions $(\div \xi, \curl \xi)$, where $$\div \xi=g^{AB} \nabb_A \xi_B, \qquad \curl\xi=\ep^{AB}\nabb_A \xi_B.$$
\item $\dds_1$ is the formal $L^2$-adjoint of $\ddd_1$, and takes any pair of functions $(\rho, \sigma)$ into the one-form $-\nabb_A \rho+\ep_{AB} \nabb^B \sigma$.
\item  $\ddd_2$ takes $\th$ into the one-form $\ddd_2\th=(\div \th)_C=g^{AB}\nabb_A \th_{BC}$.
\item $\dds_2$ is the formal $L^2$-adjoint of $\DDd_2$, and takes $\xi$ into the symmetric traceless two tensor $$(\dds_2\xi)_{AB}=-\frac 12 \left( \nabb_B\xi_A+\nabb_A\xi_B-(\div \xi)g_{AB}\right).$$
\end{itemize}

We collect here some important elliptic relations between the quantities.

\begin{proposition}\label{Prop:nu*ofGCM:0}
  The following identities hold true on $\Si_*$:
  \begin{align*}
    2\dds_2\dds_1 \ddd_1\ddd_2\dds_2\eta&=-\frac{4}{r}\dds_2\dds_1\div\bb+r^{-5} \dkb^{\le 5} \Gag+ r^{-4} \dkb^{\leq 4}(\Gab\c\Gab),\\
    2\dds_2\dds_1 \ddd_1\ddd_2\dds_2\xib &=-\frac{4}{r}\dds_2\dds_1 \div\bb + r^{-5} \dkb^{\le 5} \Gag+ r^{-4} \dkb^{\leq 6}(\Gab\c\Gab).
  \end{align*}  
  Here $r^{-5} \dkb^{\le 5} \Gag$ denote a non specified sum of terms which multiplied by $r^5$ belong to the set $\dkb^{\le 5} \Gag$ in \eqref{eq:defintionofGagforChapter5glsdfiuhgs}.
\end{proposition}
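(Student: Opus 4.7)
The plan is to derive both identities from the null structure equations for $\nab_3\chih$ and $\nab_3\chibh$ in Proposition~\ref{Prop.NullStr+Bianchi-lastslice}, using the Codazzi equations to convert the remaining $\chih,\chibh$ contributions into the Weyl and electromagnetic components, and then using Bianchi to trade transversal derivatives for tangential ones. The leading $\frac{1}{r}\bb$ contribution comes from the Codazzi identity for $\div\chibh$.

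\medskip
\noindent
\emph{Step 1: extract $\dds_2\eta$.} Using $\nab\hot=-2\dds_2$ on $1$-forms, the null structure equation
\[
  \nab_3\chih-\tfrac{\Up}{r}\chih=\nab\hot\eta-\tfrac{1}{r}\chibh+\Big(\tfrac{2M}{r^2}-\tfrac{2Q^2}{r^3}\Big)\chih+\Gab\c\Gab
\]
gives $-2\dds_2\eta=\nab_3\chih-\tfrac{\Up}{r}\chih+\tfrac{1}{r}\chibh-(\ldots)\chih-\Gab\c\Gab$. I then apply $\ddd_2$, commute through $\nab_3$, and use the Codazzi identities
\[
  \div\chih=\tfrac{1}{r}\ze-\b+\tfrac{Q}{r^2}\bF+\Gag\c\Gag,\qquad
  \div\chibh=\tfrac{1}{2}\nab\kabc+\tfrac{\Up}{r}\ze+\bb-\tfrac{Q}{r^2}\bbF+\Gab\c\Gag.
\]
The linear \emph{bad} contribution at this order is exactly $-\tfrac{1}{r}\bb$; all other linear pieces ($\ze,\b,\kabc,\bF,\bbF$ with powers of $r^{-1}$) belong to $\Gag$.

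\medskip
\noindent
\emph{Step 2: dispose of $\nab_3\chih$.} I commute $\ddd_2$ past $\nab_3$ (the commutator produces curvature/Ricci contractions in the $\Gab\c\Gab$ bucket), and then use the $\nab_3$-transport equations from Proposition~\ref{Prop.NullStr+Bianchi-lastslice} for $\ze,\b,\bF$ (which are the constituents of $\ddd_2\chih$). Formally linear $\eta$-contributions arising from $\nab_3\ze\supset\tfrac{\Up}{r}(\eta+\ze)$ and $\nab_3\bF\supset\tfrac{2Q}{r^2}\eta$ are cancelled against the Reissner--Nordstr\"om $\chih$-coefficient $\big(\tfrac{2M}{r^2}-\tfrac{2Q^2}{r^3}\big)\chih$, leaving only $\Gab\c\Gab$ remainders. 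After this step, $\ddd_2\dds_2\eta$ is expressed as $-\tfrac{1}{2r}\bb$ plus $r^{-3}\dk^{\le 3}\Gag+ r^{-2}\dk^{\le 2}(\Gab\c\Gab)$.

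\medskip
\noindent
\emph{Step 3: apply the remaining $\dds_2\dds_1\ddd_1$.} The leading term becomes $-\tfrac{2}{r}\dds_2\dds_1\ddd_1\bb$. Writing $\ddd_1\bb=(\div\bb,\curl\bb)$, I use the Bianchi identity
\[
  \curl\bb=-\nab_3\rhod+\tfrac{3\Up}{r}\rhod-\tfrac{1}{2}\chih\c\dual\aa-\tfrac{Q}{r^2}\curl\bbF+r^{-1}\dk^{\le 1}(\Gab\c\Gab),
\]
which, using $\rhod,\bbF\in\Gag$, absorbs the $\curl\bb$ part into $r^{-5}\dkb^{\le 5}\Gag$ and $r^{-4}\dkb^{\le 4}(\Gab\c\Gab)$. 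Multiplying by the outer factor of $2$ gives $-\tfrac{4}{r}\dds_2\dds_1\div\bb$ on the right, completing the first identity.

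\medskip
\noindent
\emph{Step 4: the $\xib$ identity.} The argument is parallel, starting from
\[
  \nab_3\chibh-\tfrac{2\Up}{r}\chibh=-\aa+\nab\hot\xib-\Big(\tfrac{2M}{r^2}-\tfrac{2Q^2}{r^3}\Big)\chibh+\Gab\c\Gab,
\]
and running the same Codazzi/Bianchi reductions. The only truly new feature is the linear term $\aa\in\Gab$, which must be absorbed into $\Gag$ and $\Gab\c\Gab$. This is done by applying $\dds_2\dds_1\ddd_1\ddd_2$ to $\aa$, using $\ddd_2\aa=\div\aa$, and then invoking the Bianchi relation expressing $\div\aa$ (and the corresponding higher-order derivatives) in terms of $\nab_3\bb$, which in turn is rewritten via the transport equation for $\nab_3\bb$ in favour of $\dk$-derivatives of $\aa,\trchbc,\bbF\in\Ga$ and $\Gab\c\Gab$. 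This exchange \emph{costs two extra tangential derivatives}, which is precisely why the quadratic remainder tolerance jumps from $\dkb^{\le 4}$ to $\dkb^{\le 6}$.

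\medskip
\noindent
\emph{Expected main obstacle.} The heart of the argument is the cancellation mechanism in Step~2 at the Reissner--Nordstr\"om coefficient level, and the analogous bookkeeping for $\xib$ in Step~4 where $\aa$ must be fully converted into $\Gag$ and $\Gab\c\Gab$ buckets via repeated Bianchi. Keeping track of $r$-weights, $Q$-weights, and the precise derivative count --- so that every surviving linear term genuinely belongs to $r^{-5}\dkb^{\le 5}\Gag$ and every quadratic term to the appropriate $r^{-4}\dkb^{\le k}(\Gab\c\Gab)$ --- is the main computational challenge.
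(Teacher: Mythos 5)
Your plan starts from the shear equations $\nab_3\chih$, $\nab_3\chibh$ (extracting $\dds_2\eta$, $\dds_2\xib$ from the $\nab\hot\eta$, $\nab\hot\xib$ terms), whereas the paper follows Klainerman--Szeftel and works with the \emph{trace} equations $\nab_3\trchc$, $\nab_3\trchbc$ (which contain $\div\eta$, $\div\xib$ undifferentiated) combined with the $\nab_3\ze$ equation. This is not a cosmetic difference, and as sketched your route has genuine gaps. First, you never address the term $-2\nab\ombc$ sitting inside $\nab_3\ze$ (equivalently the $\frac{4}{r}\ombc$ in $\nab_3\trchc$): since $\ombc\in\Gab$, the contribution $\frac{1}{r}\nab\ombc$ is of exactly the same size as the leading $\frac{1}{r}\bb$ term and cannot be dumped into $r^{-3}\dk^{\le 3}\Gag$. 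The paper's first intermediate identity, $2\nab\ombc-\frac{1}{r}\xib=-\nab_3\ze-\bb+\frac{1}{r}\eta+r^{-1}\Gag+\Gab\c\Gab$, exists precisely to eliminate $\nab\ombc$ in favour of $\nab_3\ze$ and $\bb$; without it your Step~2 conclusion ``$\ddd_2\dds_2\eta=-\frac{1}{2r}\bb+\text{errors}$'' is false. Second, your coefficient bookkeeping cannot produce $-\frac{4}{r}\dds_2\dds_1\div\bb$. In the correct derivation $2\ddd_2\dds_2\eta=-\nab_3\nab\kac-\frac{2}{r}\nab_3\ze-\frac{2}{r}\bb+\dots$, and after applying $\dds_2\dds_1\ddd_1$ there are \emph{two} sources of $\div\bb$: the explicit $-\frac{2}{r}\ddd_1\bb=-\frac{2}{r}(\div\bb,\curl\bb)$, and $-\frac{2}{r}\nab_3\ddd_1\ze$ with $\ddd_1\ze=(-\muc-\rhoc+\dots,\rhod+\dots)$ together with $\nab_3\rhoc=-\div\bb+\dots$ and $\nab_3\rhod=-\curl\bb+\dots$, which yields $-\frac{2}{r}(\div\bb,-\curl\bb)$. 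The $\div\bb$ contributions \emph{add} to give $-\frac{4}{r}$ and the $\curl\bb$ contributions \emph{cancel exactly}. Your single source gives at best $-\frac{2}{r}\dds_2\dds_1(\div\bb,\curl\bb)$, and your proposal to absorb $\curl\bb$ by substituting the Bianchi identity $\curl\bb=-\nab_3\rhod+\dots$ both introduces a transversal derivative that does not belong to the error class $r^{-5}\dkb^{\le 5}\Gag$ and, if applied consistently to $\div\bb$ via $\nab_3\rhoc$, would erase the main term of the statement.

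Two further points. The GCM conditions are indispensable and absent from your plan: they are what dispose of $\nab_3\nab\kac$ (since $\trchc=0$ on $\Si_*$) and of the $\muc$ and $\trchbc$ contributions (which are combinations of $\ell\le 1$ modes, annihilated by $\dds_2\dds_1$ up to $\dds_2\dds_1\Jp\in r^{-1}\dkb^{\le 3}\Gag$ errors via Lemma~\ref{JpGag}). Finally, for $\xib$ your Step~4 is circular: starting from $\nab_3\chibh$ forces you to confront the linear term $\aa$ and, through $\ddd_2\nab_3\chibh\sim\nab_3\div\chibh\ni\nab_3\bb$, the quantity $\div\aa$; but the only equation relating $\nab_3\bb$ and $\div\aa$ is the Bianchi identity itself, so ``rewriting $\div\aa$ via $\nab_3\bb$ and then $\nab_3\bb$ via its transport equation'' goes in a circle. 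The paper avoids this entirely by taking $\div\xib$ from the $\nab_3\trchbc$ equation, where no linear $\aa$ ever appears; $\aa$ only enters quadratically (through $\chibh\c\aa$ in $\nab_3\rhoc$ and $\nab_3(\chih\c\chibh)$). To repair your argument you would essentially have to abandon the shear-equation starting point and reproduce the paper's intermediate identities.
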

\begin{proof}
The proof follows the same steps as in Proposition 5.23 and Corollary 5.24 in \cite{klainermanKerrStabilitySmall2023}. 
Observe that, even in the presence of the electromagnetic components, using the bootstrap assumption $\bbF \in \Ga_b$, following the same steps as in Proposition 5.22 in \cite{klainermanKerrStabilitySmall2023}, we can still write 
  \begin{align*}
    2 \nab \ombc -\frac{1}{r}\xib &= - \nab_3 \ze   -\bb +\frac{1}{r}\eta +r^{-1}\Ga_g+\Ga_b\c \Ga_b,\\
    2\ddd_2\dds_2\eta &=  -\nab_3\nab\kac    -\frac{2}{r}\nab_3 \ze  -\frac{2}{r}\bb  + r^{-2} \dkb^{\le 1} \Ga_g+ r^{-1} \dkb^{\le 1 } (\Ga_b\c \Ga_b),\\
    2\ddd_2\dds_2\xib  &= -\nab_3\nab\kabc   -\frac{2}{r}\nab_3 \ze    -\frac{2}{r}\bb     +r^{-2}\dkb^{\leq 1}\Ga_g+r^{-1}\dkb^{\leq 1}(\Ga_b\c \Ga_b).
  \end{align*}
  and also
\begin{align*}
    \nab_3\chibh&=-\aa+r^{-1}\dkb^{\leq1}\Gab+\Gab\c\Gab,\\
    \nab_3\rhoc&=-\div\bb-\frac{1}{2}\chibh\c \aa + r^{-2}\Gag + r^{-1}\dkb^{\leq1}(\Gab\c\Gab),\\
    \nab_3\rhod&=-\curl\bb-\frac{1}{2}\chibh\c \dual\aa + r^{-2}\Gag + r^{-1}\dkb^{\leq1}(\Gab\c\Gab).
\end{align*}
In view of the above, the proof follows the same steps as in \cite{klainermanKerrStabilitySmall2023}.
\end{proof}

We collect here a general lemma on commutations. 

\begin{lemma}\label{Lemma:Commutation-Si_*}
  The following commutation formulas hold true  for any  tensor $f$ on $S\subset\Si_*$:
  \begin{align*}
    [ \nab_3, \nab] f &= \frac{\Up}{r}\nab f +\Gab \c\nab_3 f + r^{-1} \Gab \c \dk^{\leq 1} f,\\
    [ \nab_4, \nab] f &=-\frac{1}{r}  \nab f +r^{-1}\Gag \c \dk^{\leq 1} f,\\
    [\nab_3,\De] f&= \frac{2\Up}{r} \lap  f+ r^{-1}\dkb^{\le 1}\Big( \Gab \c\nab_3 f + r^{-1} \Gab \c \dk f \Big),\\
    [\nab_4,\De] f&= -\frac{2}{r}\lap   f+ r^{-1} \dkb^{\le 1}\Big( \Gag \c  \dk f\Big),\\
    [\nab_\nu,\nab]f &=  \frac{2}{r}\nab f+\Gab\c\nab_\nu f+ r^{-1} \Gab \c \dk^{\leq 1} f,\\
    [\nab_\nu,\De]f&= \frac{4}{r} \De f+r^{-1}\dkb^{\le 1}\Big( \Gab \c\nab_\nu f + r^{-1} \Gab \c \dk f \Big).
  \end{align*}
\end{lemma}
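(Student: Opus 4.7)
The starting point is the standard set of commutation formulas in a null frame (see e.g.\ the null frame commutator identities in Klainerman--Nicol\`o or the appendix of \cite{klainermanKerrStabilitySmall2023}), which for an arbitrary tensor $f$ tangent to the spheres take the schematic form
\begin{align*}
[\nab_3,\nab_a]f &= -\tfrac{1}{2}\trchb\,\nab_a f - \hchb_a{}^b\nab_b f + (\ze_a-\eta_a)\nab_3 f - \xib_a \nab_4 f + (\text{curvature/electromagnetic lower order}),\\
[\nab_4,\nab_a]f &= -\tfrac{1}{2}\trch\,\nab_a f - \hch_a{}^b\nab_b f - \ze_a \nab_4 f + (\text{curvature/electromagnetic lower order}),
\end{align*}
plus tensorial corrections proportional to $\chi, \chib$ contracted with the indices of $f$, which are of the same type. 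I will substitute the linearization from Definition~\ref{ref:definition-linearized} together with the GCM condition $\trchc=0$ and the transversality conditions $\xi=0$, $\om=0$, $\etab=-\ze$. The principal part then reads $\trch = 2/r$ and $\trchb = -2\Up/r + \trchbc$, giving $-\tfrac12 \trchb = \Up/r + r^{-1}\Gag$ and $-\tfrac12\trch = -1/r$; the remaining coefficients $\hch,\hchb, \ze, \eta, \xib$ and the electromagnetic perturbations lie in $\Gag\cup\Gab$, so by the bootstrap scaling rule \eqref{eq:rule-Gag-Gab} every such contribution collapses into the stated schematic remainders $\Gab \c \nab_3 f + r^{-1}\Gab\c\dk^{\leq 1} f$ for $[\nab_3,\nab]$, and $r^{-1}\Gag\c\dk^{\leq 1} f$ for $[\nab_4,\nab]$ (noting that $\nab_4 f$ may be replaced using the $\nab_4$-transport equations, or else absorbed in $\dk^{\le 1}f$).

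For $[\nab_3,\De]f$ and $[\nab_4,\De]f$ I will apply the first-order commutators twice, using $\De = g^{ab}\nab_a\nab_b$ and the induced identity $\nab_3 g^{ab} = -2\hchb^{ab} - \trchb\, g^{ab}$ (analogously for $\nab_4$). The doubling of the principal coefficient $\Up/r$ yields $2\Up/r$ for $[\nab_3,\De]f$ and $-2/r$ for $[\nab_4,\De]f$, while the angular derivative hitting the lower-order $\Gab$ and $\Gag$ factors produces the claimed $r^{-1}\dk^{\leq 1}(\cdots)$ structure after invoking \eqref{eq:rule-Gag-Gab}.

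The $\nu$-commutators follow by linearity once $\nab_\nu = \nab_3 + b_* \nab_4$ is expanded and the GCM mean condition $\ov{b_*} = -1-\tfrac{2M}{r}+\tfrac{Q^2}{r^2} = \Up-2$ from \eqref{eq:Si_*-GCM2} is used. Indeed, the principal coefficient becomes
\begin{equation*}
\frac{\Up}{r} + (\Up-2)\cdot\Big(-\frac{1}{r}\Big) = \frac{2}{r},
\end{equation*}
which produces the stated $\tfrac{2}{r}\nab f$ in $[\nab_\nu,\nab]f$ and, after a second application, the $\tfrac{4}{r}\De f$ in $[\nab_\nu,\De]f$. The fluctuation $\widecheck{b_*}$ contributes through $r^{-1}\widecheck{b_*}\in\Gab$, and so its remainders are absorbed into $\Gab\c\nab_\nu f + r^{-1}\Gab\c\dk^{\leq 1}f$ after rewriting $\nab_4 = \nu - \nab_3$ mod lower order in $\nab_\nu$.

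The main bookkeeping difficulty will be tracking which lower-order terms carry an extra factor of $r^{-1}$ and which do not: the rule \eqref{eq:rule-Gag-Gab} makes every $\Gag\c\dk^{\leq1}f$ in the $\nab_3,\nab_\nu$ identities subordinate to the $r^{-1}\Gab\c \dk^{\leq 1} f$ remainder and justifies its omission, while for the $\nab_4$ commutators only $\Gag$ appears at leading order, which is why the remainder has the form $r^{-1}\Gag\c\dk^{\leq 1}f$ and not $\Gab$. Once this accounting is in place, each of the six identities is verified by direct substitution, and the expected principal coefficients $\Up/r$, $-1/r$, $2\Up/r$, $-2/r$, $2/r$, $4/r$ match the stated formulas.
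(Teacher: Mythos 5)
Your derivation follows exactly the route of the result the paper cites (Lemma 5.1.20 of \cite{klainermanKerrStabilitySmall2023}): start from the raw null-frame commutators, substitute the linearizations, the GCM condition $\trchc=0$ and the transversality conditions $\xi=0$, $\om=0$, $\etab=-\ze$, and read off the principal coefficients $-\tfrac12\trchb=\tfrac{\Up}{r}+\Gag$, $-\tfrac12\trch=-\tfrac1r$, $b_*=\Up-2+r\Gab$. The paper itself gives no argument beyond the citation, so your proposal is essentially a correct reconstruction of the intended proof, and the principal coefficients $\Up/r$, $-1/r$, $2\Up/r$, $-2/r$, $2/r$, $4/r$ all come out right.

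Two bookkeeping points deserve correction. First, in the $\nu$-commutators the expansion $\nab_\nu=\nab_3+b_*\nab_4$ produces not only $[\nab_3,\nab]f+b_*[\nab_4,\nab]f$ but also the extra term $-(\nab b_*)\,\nab_4 f$; you do not mention it. It is harmless — by \eqref{byz} one has $\nab b_*=\xib+(\ze-\eta)b_*\in\Gab$, so this term is $\Gab\cdot\nab_4 f=r^{-1}\Gab\cdot\dk f$ and lands in the stated remainder — but it must be accounted for. Second, you repeatedly invoke \eqref{eq:rule-Gag-Gab} to justify collapsing everything into the stated remainders, but that rule reads $r^{-1}\Gab\les\Gag$ and therefore only helps in the $\nab_4$-commutators (whose remainder is phrased in $\Gag$). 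For the $\nab_3$- and $\nab_\nu$-commutators the term $-\tfrac12\trchbc\,\nab f=r^{-1}\Gag\cdot\dk f$ (and similarly $b_*\cdot r^{-1}\Gag\cdot\dk^{\le1}f$) must be absorbed into $r^{-1}\Gab\cdot\dk^{\le1}f$, which requires the opposite inequality $\Gag\les\Gab$; this holds on $\Si_*$, but only by virtue of the dominance condition \eqref{dominanceM3} (since $\Gag/\Gab\sim u^{1/2}/r\les\ep_0 u^{-1/2-\dec}$), not by \eqref{eq:rule-Gag-Gab}. With these two corrections the argument is complete.
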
 
\begin{proof}
  See Lemma 5.1.20 in \cite{klainermanKerrStabilitySmall2023}.
\end{proof}

\subsection{Elliptic estimates on \texorpdfstring{$\Si_*$}{}}\label{sec:preliminaryestimatesonSistar:chap5}

We collect here basic Hodge elliptic estimates.    
For a tensor $f$ on $S$, we define the following standard weighted Sobolev norms for any integer $k\geq 0$
\begin{align*}
\|f\|_{\hk_k(S)}\vcentcolon=\sum_{j=0}^k\|\dkb^jf\|_{L^2(S)},
\end{align*}
where $\dkb$ denotes derivatives tangential to the sphere $S$.
\begin{lemma}\label{prop:2D-Hodge1}
For any sphere  $S=S(u)\subset \Si_*$ we have:
\begin{enumerate}
\item  If   $f$ is a 1-form
\begin{align*}
 \|  f\|_{\hk_{k+1}(S)}&\les r   \|\ddd_1   f  \|_{\hk_k(S)},\\
\|f\|_{\hk_{k+1}(S)}&\les r\|\dds_2 f\|_{\hk_k(S)}+r^2 \big|(\ddd_1f)_{\ell=1}\big|,\\
|(\ddd_1f)_{\ell=1}| &\les r^{-1} \|\ddd_1f\|_{L^2(S)}.
\end{align*}
Moreover, 
\begin{align}\label{eq:extra-estimate-Hodge}
     ||\dk_*^{k}f||_{\hk_3(S)}
 &\les r^4 ||\dds_2\dds_1\ddd_1\dk_*^{k}f||_{L^\infty(S)}+r^2\left|\left(\ddd_1\nab_\nu^{\leq k}f\right)_{\ell=1}\right|.
\end{align}
\item If $f$ is a symmetric traceless 2-tensor
\begin{align*}
\|v\|_{\hk_{k+1}(S)}\les r\|\ddd_2 v \|_{\hk_k(S)}.
\end{align*}
\item  If $(h, \dual h)$ is a pair of scalars 
\begin{align*}
\| (h-\ov{h}, \dual h-\ov{\dual h})\|_{\hk_{k+1} (S)} &\les r \|\dds_1(h,\dual h)\|_{\hk_{k}(S)},\\
\|(h-\ov{h},\dual h-\ov{\dual h})\|_{\hk_{k+2}  (S)} &\les  r^2\|\dds_2\,\dds_1(h, \dual h)\|_{\hk_k(S)}+r^3\big| (\Delta h)_{\ell=1}\big|+r^3\big| (\Delta\dual h)_{\ell=1}\big|.
\end{align*}
\end{enumerate}
\end{lemma}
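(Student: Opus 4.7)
The plan is to treat this as a collection of standard 2D Hodge elliptic estimates on the spheres $S(u) \subset \Si_*$, which, by the bootstrap assumptions, are small perturbations of round spheres of area radius $r$. The approach mirrors Chapter 2 of \cite{klainermanKerrStabilitySmall2023}, and the only nonstandard ingredient is the mixed $L^\infty/L^2$ estimate \eqref{eq:extra-estimate-Hodge}.

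First I would establish the base $L^2$ case via the Bochner-type identities. For a 1-form $f$ on $S$,
\begin{align*}
\int_S |\ddd_1 f|^2 &= \int_S \bigl(|\nabb f|^2 + K|f|^2\bigr), \\
2\int_S |\dds_2 f|^2 &= \int_S \bigl(|\nabb f|^2 - K|f|^2\bigr),
\end{align*}
with analogous identities for $\ddd_2$ on symmetric traceless 2-tensors and for $\dds_1$ on pairs of scalars. Since $\widecheck K \in \Gag$ is controlled by the bootstrap, $K = r^{-2} + \widecheck K$ is bounded below by $\sim r^{-2}$, and since $\ddd_1$ (on 1-forms) and $\ddd_2$ (on symmetric traceless 2-tensors) have trivial kernels on the round sphere, these identities directly yield the first estimate in (i) and the estimate in (ii). For $\dds_2$ on 1-forms, the kernel on the round sphere is the 3-dimensional space of conformal Killing vector fields spanned by $\nabb J^{(p)}$ and $\dual\nabb J^{(p)}$, exactly characterized by nontrivial $\ell=1$ modes of $\ddd_1 f$, giving the correction term $r^2|(\ddd_1 f)_{\ell=1}|$. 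Part (iii) uses Hodge decomposition on pairs of scalars: $\dds_1$ has constants in its kernel (justifying the subtraction of averages), and $\dds_2\dds_1$ on mean-zero pairs has kernel spanned by the $\ell=1$ eigenfunctions of $\Delta$, producing the corrections $r^3|(\Delta h)_{\ell=1}|$ and $r^3|(\Delta \dual h)_{\ell=1}|$.

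The $\hk_k$ estimates then follow by iterating the $L^2$ bounds and commuting tangential derivatives $\dkb$ through the Hodge operators using Lemma \ref{Lemma:Commutation-Si_*}; commutator errors are perturbatively absorbed by the bootstrap. The bound $|(\ddd_1 f)_{\ell=1}| \les r^{-1}\|\ddd_1 f\|_{L^2(S)}$ follows by Cauchy--Schwarz applied to the defining integrals $\int_S J^{(p)} \ddd_1 f$, using $\|J^{(p)}\|_{L^2(S)} \sim r$ together with the appropriate volume normalization of the $\ell=1$ projections.

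Finally, for the mixed estimate \eqref{eq:extra-estimate-Hodge}, I would apply part (i) to $\dk_*^k f$ and then part (iii) to the pair $\ddd_1 \dk_*^k f$, yielding $\|\dk_*^k f\|_{\hk_3} \les r\|\ddd_1 \dk_*^k f\|_{\hk_2} \les r^3 \|\dds_2\dds_1\ddd_1 \dk_*^k f\|_{L^2(S)} + r^4|(\Delta\ddd_1 \dk_*^k f)_{\ell=1}|$. The Hölder inequality $\|\cdot\|_{L^2(S)} \les r\|\cdot\|_{L^\infty(S)}$ upgrades the first term to $r^4 \|\dds_2\dds_1\ddd_1 \dk_*^k f\|_{L^\infty(S)}$, while for the $\ell=1$ correction, the eigenvalue relation $(\Delta h)_{\ell=1} = -\frac{2}{r^2}(h)_{\ell=1}$ (valid up to $\Gag$-controlled errors from $\widecheck K$) reduces the prefactor $r^4$ to $r^2$, and commuting $\ddd_1$ past $\dk_*^k$ and exchanging $\dk_*^k$ for $\nab_\nu^{\leq k}$ plus tangential derivatives (whose $\ell=1$ projections are gained-of-order $r^{-1}$ via integration by parts against $J^{(p)}$) yields the stated $r^2|(\ddd_1\nab_\nu^{\leq k} f)_{\ell=1}|$. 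I expect this last estimate to be the main technical obstacle, since it requires simultaneously controlling three Hodge iterations, the $L^2 \to L^\infty$ Hölder upgrade, the $\Delta \to -2/r^2$ reduction on $\ell=1$, and the commutation of $\ddd_1$ with $\dk_*^k$; the remaining parts are routine consequences of 2D Hodge theory on nearly-round spheres.
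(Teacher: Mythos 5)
Your proposal is correct and follows essentially the same route as the paper: the first three families of estimates are the standard 2D Hodge/Bochner estimates on nearly-round spheres (which the paper simply imports from Lemmas 5.27--5.28 and Corollary 5.29 of \cite{klainermanKerrStabilitySmall2023}), and your derivation of \eqref{eq:extra-estimate-Hodge} — chaining part (i) into part (iii), upgrading $L^2(S)$ to $r\,L^\infty(S)$ by H\"older, reducing $(\Delta\cdot)_{\ell=1}$ to $-\tfrac{2}{r^2}(\cdot)_{\ell=1}$ modulo absorbable $\Gag$ errors, and trading $\dk_*^{\le k}$ for $\nab_\nu^{\le k}$ in the $\ell=1$ projection — is exactly the paper's argument, with the minor cosmetic difference that the paper inserts $\dds_1$ as an intermediate step while you apply the second estimate of (iii) in one stroke.
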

\begin{proof}
See Lemmas 5.27 and 5.28 and Corollary 5.29 in \cite{klainermanKerrStabilitySmall2023}. To prove \eqref{eq:extra-estimate-Hodge}, we make use of the Hodge estimate to write
    \begin{align*}
      ||\dk_*^{k}f||_{\hk_3(S)}&\les r ||\ddd_1\dk_*^{k}f||_{\hk_2(S)} \les r^2 ||\dds_1\ddd_1\dk_*^{k}f||_{\hk_1(S)}\\
&\les r^3 ||\dds_2\dds_1\ddd_1\dk_*^{k}f||_{L^2(S)}+r^2\left|\left(\ddd_1\dk_*^{\leq k}f\right)_{\ell=1}\right|\\
  &\les r^4 ||\dds_2\dds_1\ddd_1\dk_*^{k}f||_{L^\infty(S)}+r^2\left|\left(\ddd_1\dk_*^{\leq k}f\right)_{\ell=1}\right|.
    \end{align*}
Finally, the $\dk_*^{\leq k}$ in the second term on right hand side can be reduced to only contains $\nab_\nu^{\leq k}$ since $\dkb^{\leq k}  \Jp$ is a basis of $\ell=1$ modes.
\end{proof}

\begin{lemma}\label{JpGag}
  \label{lemma:controloftheconformalfactorphi}
 We have the following estimate for $\phi$ on $\Si_*$:
  \begin{align*}
    \|\dkb^{\leq N}\phi\|_{L^\infty(S_*)} &\les \frac{\ep}{ru^{\frac 1 2 +\dec}}.
  \end{align*}
  The functions $\Jp$ verify the following properties on $\Si_*$:
  \begin{enumerate}
  \item We have
    \begin{align*}
      \begin{split}
        \int_{S}J^{(p)}=O\left(\frac{\ep r}{u^{\frac 1 2 +\dec}}\right), \qquad \int_{S}J^{(p)}J^{(q)}=\frac{4\pi}{3}r^2\de_{pq}+O\left(\frac{\ep r}{u^{\frac 1 2 +\dec}}\right).
      \end{split}
    \end{align*}
  \item We have
    \begin{align*}
\left(\triangle+\frac{2}{r^2}\right)\Jp\in r^{-1}\dkb^{\leq 1}\Gag,\qquad
\dds_2\dds_1\Jp\in r^{-1}\dkb^{\leq 3}\Gag.
    \end{align*}
  \end{enumerate}
\end{lemma}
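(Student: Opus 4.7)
The proof rests on the fact that the induced metric on $S_*$ is conformally equivalent to the round metric $\gamma$ on $\Ss^2$ with conformal factor $r^2 e^{2\phi}$, as in \eqref{eq:metric-on-S*}. All stated identities can be obtained by transferring standard round-sphere facts (vanishing averages of $\Jp$, the eigenvalue identity $\lap_\gamma \Jp = -2\Jp$, and the fact that the traceless Hessian of $\ell=1$ modes vanishes) through the conformal factor $\phi$, which is itself governed by $\widecheck{K}$ via the Gauss equation; recall that $r\widecheck{K}\in\Gag$ by \eqref{eq:defintionofGagforChapter5glsdfiuhgs}.

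For the bound on $\phi$, I would start from the conformal transformation law for the Gauss curvature,
\begin{align*}
K = \frac{1}{r^2 e^{2\phi}}\bigl(1 - \lap_\gamma \phi\bigr),
\end{align*}
which, using $\widecheck{K} = K - r^{-2}$, rearranges into the semilinear equation
\begin{align*}
\lap_\gamma \phi = -(e^{2\phi} - 1) - r^2 e^{2\phi}\widecheck{K}.
\end{align*}
Since $r_*$ is chosen to be the area radius of $S_*$, we have $\int_{\Ss^2} e^{2\phi} d\Omega_\gamma = 4\pi$, which fixes the zero-mode of $\phi$ and, together with Gauss--Bonnet, ensures that the mean value of the right-hand side vanishes. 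Standard elliptic regularity on $\Ss^2$ iterated $N$ times then yields $\|\phi\|_{H^{N+2}(\Ss^2)}\les r^2 \|\widecheck{K}\|_{H^N(\Ss^2)}$, and combining with the bootstrap bound $|\dkb^{\leq N}\widecheck{K}|\les \ep/(r^3 u^{1/2+\dec})$ and Sobolev embedding gives the claimed pointwise estimate.

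With $\phi$ under control, the integral identities follow from $dA_g = r^2 e^{2\phi} d\Omega_\gamma$, the round-sphere identities $\int_{\Ss^2}\Jp d\Omega_\gamma = 0$ and $\int_{\Ss^2}\Jp J^{(q)} d\Omega_\gamma = \frac{4\pi}{3}\de_{pq}$, and $|e^{2\phi}-1|\les|\phi|$, producing errors of size $O(r^2\|\phi\|_{L^\infty}) = O(\ep r / u^{1/2+\dec})$. For the operator identities, the conformal relation $\lap = r^{-2}e^{-2\phi}\lap_\gamma$ combined with $\lap_\gamma \Jp = -2\Jp$ gives
\begin{align*}
\Bigl(\lap + \frac{2}{r^2}\Bigr)\Jp = \frac{2(1-e^{-2\phi})}{r^2}\Jp,
\end{align*}
which is $O(\phi/r^2)$, and therefore lies in $r^{-1}\dkb^{\leq 1}\Gag$ after relating $\phi$ algebraically to $r^2\widecheck{K}$ via the equation above and commuting with tangential derivatives. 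For the traceless Hessian $\dds_2\dds_1 \Jp$, I would exploit the round-sphere identity $\nabb_\gamma^2 \Jp = -\Jp\gamma$, so that the only nonzero contribution comes from the difference between $\nabb$ and $\nabb_\gamma$, controlled by $\nabb^{\leq 2}\phi$; this places $\dds_2\dds_1\Jp$ in $r^{-1}\dkb^{\leq 3}\Gag$.

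The most delicate point is the $\phi$-estimate: one must solve a nonlinear elliptic equation on $\Ss^2$, bootstrap up to $N$ derivatives from the $L^\infty$ bootstrap bound on $\widecheck{K}$, and carefully use the area radius normalization to kill the zero-mode of $\lap_\gamma$ before inverting it.
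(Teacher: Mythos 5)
Your overall route is the same as the paper's: reduce everything to the conformal representation $g=r^2e^{2\phi}\ga_{\Ss^2}$, establish $\phi\in r\Gag$, and then read off the integral and operator identities from $dA_g=r^2e^{2\phi}\,d\Omega_{\ga}$, $\lap_\ga \Jp=-2\Jp$, and the vanishing of the round traceless Hessian of $\ell=1$ harmonics. That second half of your argument is correct and matches what the paper does (the paper's displayed identity $(\lap+\tfrac{2}{r^2})\Jp=\tfrac{2}{r^2}(1-e^{-2\phi})\Jp$ is exactly your computation).

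The gap is in the step you yourself flag as delicate: the estimate for $\phi$. Your semilinear equation $\lap_\ga\phi=-(e^{2\phi}-1)-r^2e^{2\phi}\widecheck{K}$ linearizes to $\bigl(\lap_\ga+2\bigr)\phi=-r^2\widecheck{K}+O(\phi^2)$, and the operator $\lap_\ga+2$ has a three-dimensional kernel spanned by the $\ell=1$ spherical harmonics. The area-radius normalization $\int_{\Ss^2}e^{2\phi}\,d\Omega_\ga=4\pi$ only fixes the $\ell=0$ mode; it gives no control whatsoever on the $\ell=1$ part of $\phi$. Indeed, composing the uniformization map with a M\"obius transformation of $\Ss^2$ changes $\phi$ by a conformal factor with large $\ell=1$ content while leaving $K$ and the area unchanged, so the inequality $\|\phi\|_{H^{N+2}}\les r^2\|\widecheck{K}\|_{H^N}$ is false without a further normalization. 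The missing input is precisely the \emph{effective} uniformization theorem of Klainerman--Szeftel \cite{klainermanEffectiveResultsUniformization2022}, which the paper cites: one must choose the uniformization so that the balanced conditions $\int_{S}\Jp=0$ hold (these are part of the geometric setup of $\Si_*$ in Section \ref{sec:geometric-setting-Sigma}), and it is this choice that pins down the $\ell=1$ modes of $\phi$ and yields $\phi\in r\Gag$. Your proof would be repaired by invoking that theorem (or by explicitly projecting the equation onto $\ell=1$ and using the balanced conditions there), after which the rest of your argument goes through. Note also that the paper needs the conformal representation on every sphere $S\subset\Si_*$, not only on $S_*$ where the coordinates are posited, which is a second reason the uniformization theorem cannot be bypassed.
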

\begin{proof} By the Gauss equation $\Kc\in r^{-1}\Gag$ and so by the bootstrap assumptions 
\begin{equation*}
    \left|K-\frac{1}{r^2}\right|_{L^\infty(S)}\les \frac{\ep}{r^3 u^{\frac 1 2 +\dec}}.
\end{equation*}
As a consequence, by Klainerman-Szeftel's effective uniformization theorem in \cite{klainermanEffectiveResultsUniformization2022}, the induced metric $g$ on any sphere $S$ is conformal to $\ga_{\mathbb S^2}$ as in \eqref{eq:metric-on-S*}, i.e.
\begin{equation}
    g=r^2 e^{2\phi}\ga_{\mathbb S^2},
\end{equation}
and moreover $\phi\in r \Gag$. The identities above then follow from the relations between $\Jp$ and $\phi$, for example
\begin{align*}
    \left(\triangle+\frac{2}{r^2}\right)\Jp=\frac{2}{r^2}(1-e^{-2\phi}) \Jp.
\end{align*}
  See also Lemmas 5.35, 5.36 and 5.37 in \cite{klainermanKerrStabilitySmall2023}.
\end{proof}

\begin{corollary}\label{prop:2D-Hodge4}
  On a fixed sphere  $S=S(u)\subset\Si_*$, we have for any pair of scalars  $(h, \dual h)$:
  \begin{align*}
    \|(h, \dual h)\|_{\hk_{k+2}  (S)}\les  r^2\|\dds_2\,\dds_1(h, \dual h)\|_{\hk_k(S)}+r\big| (h)_{\ell=1}\big|+r\big| (\dual h)_{\ell=1}\big|+r|\ov{h}|+r|\ov{\dual h}|.
  \end{align*}
\end{corollary}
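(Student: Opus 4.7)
The plan is to start from the third estimate in Lemma~\ref{prop:2D-Hodge1}, which controls the oscillatory part $(h-\ov{h},\,\dual h-\ov{\dual h})$ in $\hk_{k+2}(S)$ in terms of $\dds_2\dds_1(h,\dual h)$ and the $\ell=1$ modes $(\Delta h)_{\ell=1}$ and $(\Delta \dual h)_{\ell=1}$. The two remaining tasks are then: (i) to trade these Laplacian $\ell=1$ modes, weighted by $r^3$, for the $r$-weighted $\ell=1$ modes of $h$ and $\dual h$ themselves, and (ii) to add back the averages so that the left-hand side controls $\|(h,\dual h)\|_{\hk_{k+2}(S)}$ rather than just the zero-average part.

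For step (i), I would use Lemma~\ref{JpGag}, which asserts that on $\Si_*$
\begin{align*}
  \left(\Delta + \tfrac{2}{r^2}\right)\Jp \in r^{-1}\dkb^{\leq 1}\Gag,
\end{align*}
together with integration by parts on the sphere:
\begin{align*}
  (\Delta h)_{\ell=1} = \int_S J^{(p)} \Delta h = \int_S (\Delta J^{(p)})\, h = -\frac{2}{r^2} (h)_{\ell=1} + \int_S \left(r^{-1}\dkb^{\leq 1}\Gag\right) h.
\end{align*}
Applying Cauchy--Schwarz to the error term and using the bootstrap assumption $\|\Gag\|_{\infty,1}\les \ep\, r^{-2}u^{-1/2-\dec}$ together with $|S|\sim r^2$, the error is bounded by a small constant times $r^{-2}\|h\|_{L^2(S)}$. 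Multiplying by $r^3$ gives
\begin{align*}
  r^3 \left|(\Delta h)_{\ell=1}\right| \les r\, \left|(h)_{\ell=1}\right| + \ep\, r\, \|h\|_{L^2(S)},
\end{align*}
and similarly for $\dual h$. The second term on the right can then be absorbed into the left-hand side $\|(h,\dual h)\|_{\hk_{k+2}(S)}$ using the smallness of $\ep$ (this absorption is the one mildly delicate point; it works because $\|h\|_{L^2(S)}\leq \|h-\ov h\|_{L^2(S)} + r|\ov h|$ and the first piece is controlled by $\|h-\ov h\|_{\hk_{k+2}(S)}$).

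For step (ii), I write $h = (h-\ov h) + \ov h$, and note that tangential derivatives annihilate the constant $\ov h$, so $\|\ov h\|_{\hk_{k+2}(S)} = \|\ov h\|_{L^2(S)} = |\ov h|\,|S|^{1/2}\les r|\ov h|$, and analogously for $\dual h$. Combining this with the bound from step (i) and the estimate of Lemma~\ref{prop:2D-Hodge1} part 3 yields the claimed inequality. The main obstacle, as noted, is checking the absorption in step (i); once the smallness provided by the bootstrap assumptions and the dominance condition is invoked, the rest is a direct assembly.
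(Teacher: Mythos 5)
Your overall strategy is the right one, and it is in fact the argument behind Corollary 5.38 of \cite{klainermanKerrStabilitySmall2023}, which is all the paper itself offers as a proof: start from the third estimate of Lemma~\ref{prop:2D-Hodge1}, convert $r^3|(\Delta h)_{\ell=1}|$ into $r|(h)_{\ell=1}|$ using the near-eigenfunction property of $\Jp$ from Lemma~\ref{JpGag}, and add back the averages via $\|\ov h\|_{\hk_{k+2}(S)}=|\ov h|\,|S|^{1/2}\les r|\ov h|$. Steps (ii) and the integration by parts in (i) are fine.

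The gap is exactly at the point you flag as ``mildly delicate'': the absorption does \emph{not} work as you have set it up. With the unnormalized projection you write, $(\Delta h)_{\ell=1}=\int_S\Jp\Delta h$, the error term is $\big|\int_S h\,(\Delta+\tfrac{2}{r^2})\Jp\big|\les \tfrac{\ep}{r^2u^{1/2+\dec}}\|h\|_{L^2(S)}$, so after multiplying by $r^3$ you are left with $\tfrac{\ep r}{u^{1/2+\dec}}\|h\|_{L^2(S)}$. The prefactor $\ep r\,u^{-1/2-\dec}$ is not small: on $\Si_*$ the dominance condition gives $r\ges \de_*\ep_0^{-1}u^{1+\dec}$, hence $\ep r\,u^{-1/2-\dec}\ges \de_*\ep_0^{-1/3}u^{1/2}\gg 1$. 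Consequently neither $\ep r\|h-\ov h\|_{L^2(S)}$ can be absorbed into the left-hand side, nor is $\ep r^2|\ov h|$ controlled by the allowed $r|\ov h|$; ``smallness of $\ep$'' does not save you because the loss is a full factor of $r$. The resolution is that the projection $(\cdot)_{\ell=1}$ in these Hodge lemmas must be taken in the normalized sense $\frac{1}{|S|}\int_S\Jp(\cdot)$ --- this is the convention the paper actually uses in its estimates despite the definition in Section~\ref{sec:geometric-setting-Sigma} (see, e.g., the proof of Lemma~\ref{Lemma: K-ell=1} and Step~1 of Proposition~\ref{prop:control.ell=1modes-Si}, and note that the bound $|(\ddd_1 f)_{\ell=1}|\les r^{-1}\|\ddd_1 f\|_{L^2(S)}$ in Lemma~\ref{prop:2D-Hodge1} is only consistent with the normalized convention). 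With the extra factor $|S|^{-1}\sim r^{-2}$ the error becomes $\tfrac{\ep}{r\,u^{1/2+\dec}}\|h\|_{L^2(S)}$ after the $r^3$ weight, and the absorption then goes through with room to spare. You should restate your identity with the $\frac{1}{|S|}$ normalization and redo the bookkeeping of powers of $r$ accordingly.
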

\begin{proof}
  See Corollary 5.38 in \cite{klainermanKerrStabilitySmall2023}.
\end{proof}
\begin{corollary}\label{cor:Cb0CbpM0MareGagandrm1Gag}
  The scalar functions $\underline{C}_0$, $\underline{C}_p$, $M_0$, $M_p$ in \eqref{eq:Si_*-GCM1}  verify on $\Si_*$
  \begin{align}
    \underline{C}_0\in \Gag, \qquad \underline{C}_p\in \Gag, \qquad M_0\in r^{-1}\Gag, \qquad M_p\in r^{-1}\Gag.
  \end{align}
\end{corollary}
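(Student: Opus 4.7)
The plan is to read off $\underline{C}_0, \underline{C}_p$ (respectively $M_0, M_p$) by projecting the defining identities in \eqref{eq:Si_*-GCM1} onto the approximately orthogonal basis $\{1, J^{(0)}, J^{(+)}, J^{(-)}\}$ of the $\ell\le 1$ modes on each sphere $S=S(u)\subset \Si_*$, and then using the bootstrap bounds on $\trchbc$ and $\muc$.

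First, I would integrate the identity $\trchbc = \underline{C}_0 + \sum_{p} \underline{C}_p J^{(p)}$ over $S$, and then multiply by $J^{(q)}$ and integrate again for each $q\in\{0,+,-\}$. Since $\underline{C}_0$ and $\underline{C}_p$ are constant along the leaves, this produces a $4\times 4$ linear system
\begin{align*}
\begin{pmatrix} |S| & \int_S J^{(0)} & \int_S J^{(+)} & \int_S J^{(-)} \\ \int_S J^{(0)} & \int_S (J^{(0)})^2 & \int_S J^{(0)}J^{(+)} & \int_S J^{(0)}J^{(-)} \\ \int_S J^{(+)} & \int_S J^{(+)}J^{(0)} & \int_S (J^{(+)})^2 & \int_S J^{(+)}J^{(-)} \\ \int_S J^{(-)} & \int_S J^{(-)}J^{(0)} & \int_S J^{(-)}J^{(+)} & \int_S (J^{(-)})^2 \end{pmatrix}
\begin{pmatrix} \underline{C}_0 \\ \underline{C}_+ \\ \underline{C}_- \\ \underline{C}_0' \end{pmatrix}=
\begin{pmatrix} \int_S \trchbc \\ \int_S J^{(0)}\trchbc \\ \int_S J^{(+)}\trchbc \\ \int_S J^{(-)}\trchbc \end{pmatrix}.
\end{align*}
By Lemma \ref{JpGag}, the coefficient matrix is $\mathrm{diag}\bigl(4\pi r^2,\tfrac{4\pi}{3}r^2,\tfrac{4\pi}{3}r^2,\tfrac{4\pi}{3}r^2\bigr)$ up to errors of size $O(\epsilon r/u^{1/2+\dec})$, hence invertible with inverse of size $O(r^{-2})$. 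The right-hand side is controlled by $\trchbc\in \Gag$, which gives $|\int_S J^{(q)}\trchbc|,\,|\int_S\trchbc|\lesssim \epsilon/u^{1/2+\dec}$ (since $|J^{(q)}|\le 1$ and $|S|\sim r^2$). Inverting yields $|\underline{C}_0|+\sum_p |\underline{C}_p|\lesssim \epsilon/(r^2 u^{1/2+\dec})$, which is exactly the $\Gag$ pointwise bound.

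The same argument applied to $\widecheck{\mu}=M_0+\sum_p M_p J^{(p)}$, using $\muc\in r^{-1}\Gag$ so that $|\int_S J^{(q)}\muc|\lesssim \epsilon/(r u^{1/2+\dec})$, gives $|M_0|+\sum_p|M_p|\lesssim \epsilon/(r^3 u^{1/2+\dec})$, which is the $r^{-1}\Gag$ bound. For higher-order estimates, since these quantities are constants on each $S$, only $\nu$-derivatives are nontrivial; one differentiates the integral relations in $\nu$, uses the commutator $[\nab_\nu,\nab]f=\tfrac{2}{r}\nab f+\Gab\cdot\nab_\nu f+r^{-1}\Gab\cdot\dk^{\le 1}f$ from Lemma \ref{Lemma:Commutation-Si_*}, and absorbs $\nu$-derivatives of $\Jp$ using $\nu(\Jp)=0$ together with the decay of $\phi$ from Lemma \ref{JpGag}, iterating up to order $N$.

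The only delicate step is verifying that the coefficient matrix remains invertible with a uniformly bounded inverse after taking $\nu$-derivatives, but this follows from the dominance condition \eqref{dominanceM3}, which ensures the off-diagonal errors $\ep/u^{1/2+\dec}$ are negligible compared to the diagonal entries of size $r^2$; once this is done, the argument is essentially an algebraic extraction of the coefficients, with no hidden analytic difficulty beyond the estimates already established.
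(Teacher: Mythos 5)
Your proof is correct and follows essentially the same route as the paper, which simply defers to Corollary 5.39 of Klainerman--Szeftel: one projects the GCM decompositions onto the near-orthonormal family $\{1,J^{(0)},J^{(+)},J^{(-)}\}$, inverts the almost-diagonal Gram matrix supplied by Lemma \ref{JpGag}, and reads off the coefficients from the bootstrap bounds on $\trchbc$ and $\muc$, with $\nu$-differentiation of the integral identities (Lemma \ref{Corr:nuSof integrals}) handling the higher-order and $\nab_\nu$ estimates. The only blemish is the mislabelled fourth unknown $\underline{C}_0'$ in your linear system, which should be the coefficient $\underline{C}_{p=0}$ of $J^{(0)}$.
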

\begin{proof}
  See Corollary 5.39 in \cite{klainermanKerrStabilitySmall2023}.
\end{proof}

\subsection{Renormalized equations on \texorpdfstring{$\Si_*$}{}}

Here we define and derive a set of renormalized quantities, in addition to $(\bm{C}, \bm{J})$ defined in \eqref{eqn:C-J-renormalization},  which exhibit an improvement in the transport equation along the tangent vector to $\Si_*$, $\nu$, at the level of their
    $\ell=1$ mode.

  \begin{definition}
    \label{def:renormalized-quantities}
    We define the following quantities:
    \begin{align}
      \underline{\bm{C}}
      \vcentcolon={}& \div \underline{\beta} - \frac{Q}{r^2}\div \bbF - \frac{2Q}{r^3}\rhoFc, \label{eqn:CBar-renormalization}\\
      (\bm{Z}, \bm{Y})
      \vcentcolon={}& \ddd_1\zeta + \frac{2Q}{r^2}(\rhoFc, \dual\rhoF),\label{eq:zeta-renormalization}\\
      \widehat{\rho}
      \vcentcolon={}& \rhoc-\frac{2Q}{r^2}\rhoFc-\frac{1}{2}\hch\c\hchb,     \label{eq:rho-renormalization}\\
      \dual\widehat{\rho}
      \vcentcolon={}& \dual \rho + \frac{2Q}{r^2}\dual\rhoF -\frac{1}{2}\hch\wedge\hchb.   \label{eq:rhod-renormalization}
    \end{align}
  \end{definition}

\begin{remark}
The definition of $\widehat{\rho}$ is motivated by the Gauss equation \eqref{eq:Gauss}.
In fact, using that
\begin{align*}
     \muc=-\div\zeta-\rhoc+\frac{1}{2}\chih\c\chibh, \qquad \curl\ze=\dual\rho-\frac{1}{2}\chih\wedge\chibh,
\end{align*}
 one can see that the renormalized quantities satisfy
    \begin{align*}
   \widecheck{K}=-\frac{1}{2r} \trchbc  -\widehat{\rho}, \qquad      \bm{Z}=-\widehat{\rho}-\muc,\qquad \bm{Y}=\widehat{\dual\rho}.
    \end{align*}
\end{remark}

We now express the transport equations of each renormalized quantity in terms of the others.
\begin{proposition}
  \label{lemma:improved-transport-along-Sigma-star}
  We have the following transport equations along $\Sigma_{*}$:
  \begin{align*}         
    \nab_\nu\widehat{\rho}
    ={}&\frac{6}{r}\widehat{\rho} -\underline{\bm{C}}
         -(1+O(r^{-1}))\bm{C}  +r^{-3}\dk^{\leq 1}\Gab 
         +r^{-1}\dkb^{\le  1}( \Ga_b \c \Ga_b),\\
    \nabla_{\nu}\bm{C}
  ={}& \big(\frac{8}{r}+O(r^{-2})\big)\bm{C}- \frac{4Q}{r^3}\div \bbF 
       +\bigtriangleup\widehat{\rho}+\frac{2Q}{r^2}\big(\bigtriangleup +\frac{2}{r^2}\big)\rhoFc+O(1+O(r^{-1}))\div\div\a \\
       & -\big(\frac{6M}{r^3} -\frac{4Q^2}{r^4}\big)\div\eta \nonumber+r^{-5}\dk^{\leq 1} \Gag
       +r^{-2}\dkb^{\leq 1}(\Ga_b\c\Ga_g), \\
    \nabla_\nu\bm{J}&= \big(\frac{8}{r}+O(r^{-2})\big)\bm{J}  -\bigtriangleup\dual\widehat{\rho}-\frac{6M}{r^3} \dual\widehat{\rho} +\frac{2Q}{r^2}\big(\bigtriangleup +\frac{2}{r^2}\big)\dual\rhoF +O(1+O(r^{-1}))\curl\div\a \\
    & +r^{-5}\dk^{\le 1}\Gamma_g +r^{-2}\dkb^{\leq 1}(\Ga_b\c\Ga_g).
\end{align*}
\end{proposition}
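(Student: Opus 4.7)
The plan is to derive each of the three transport equations by combining $\nabla_3$- and $\nabla_4$-evolutions of the renormalized quantities through the decomposition $\nabla_\nu = \nabla_3 + b_*\nabla_4$, and then using the expansion $b_* = -\Upsilon + \widecheck{b_*}$ (with $\widecheck{b_*} \in r\Ga_b$ by Definition \ref{ref:definition-linearized}) so that the correction $\widecheck{b_*}\nabla_4$ contributes only to error terms. Throughout, I use the bootstrap rule \eqref{eq:rule-Gag-Gab} to collect nonlinear contributions into $\Ga_b\cdot\Ga_b$ or $\Ga_b\cdot\Ga_g$ classes with the correct $r$-weight, and the commutation formulas of Lemma \ref{Lemma:Commutation-Si_*} whenever $\nab$ must be commuted with $\nab_3$ or $\nab_4$.

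For $\widehat{\rho}$, I differentiate the definition \eqref{eq:rho-renormalization} along $e_3$ and $e_4$ using the Bianchi identities for $\rho$ and the Maxwell equations for $\rhoFc$ from Proposition \ref{Prop.NullStr+Bianchi-lastslice}. The structural cancellation that motivates the renormalization appears when pairing $\nabla_4 \rhoc + \frac{3}{r}\rhoc = \div\beta - \frac{4Q}{r^3}\rhoFc + \frac{Q}{r^2}\div\bF + \ldots$ with $\nabla_4\!\left(\frac{2Q}{r^2}\rhoFc\right)$: the electromagnetic corrections combine with $-\frac{Q}{r^2}\div\bF + \frac{2Q}{r^3}\rhoFc$ inside $\bm{C}$ to produce exactly $-\bm{C}$, up to terms of order $\frac{1}{r}\bm{C}$. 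An identical manipulation on the ingoing side produces $-\underline{\bm{C}}$. The $\frac{6}{r}\widehat{\rho}$ prefactor is obtained by tracking $\nabla_\nu$ acting on the $r^{-k}$ weights through $\nabla_\nu r = y + b_* = -z = -2 + O(r^{-1})$ from \eqref{byz}, combined with the leading Bianchi weights $-\frac{3\Up}{r}$ on $\rhoc$ and $\frac{2\Up}{r}$ on $\rhoFc$.

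For $\bm{C}$ and $\bm{J}$, I first apply $\ddd_1 = (\div, \curl)$ to the Bianchi equations for $\beta$ and to the Maxwell equations for $\bF$, and then assemble them using the renormalization \eqref{eqn:C-J-renormalization}. The commutation $[\nab_3,\nab] = \frac{\Up}{r}\nab + \ldots$ of Lemma \ref{Lemma:Commutation-Si_*} converts $\nab_3\beta - \frac{2\Up}{r}\beta = \ldots$ into $\nab_3\div\beta - \frac{3\Up}{r}\div\beta = \ldots$, and likewise $\nab_4\div\beta + \frac{5}{r}\div\beta = \ldots$. Adding $b_*$ times the outgoing equation with $b_* \approx -\Upsilon$ produces the coefficient $\frac{3\Up}{r} + \frac{5\Up}{r} = \frac{8}{r} + O(r^{-2})$ in front of $\bm{C}$. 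On the right-hand side, the transverse sources from $\nab\rho + \,{}^*\nab\,{}^*\rho$ on the ingoing side, together with $-\div\a$ on the outgoing side and the Maxwell corrections involving $\nab\rhoF$, $\,{}^*\nab\,{}^*\rhoF$, reassemble via the definitions of $\widehat{\rho}$ and $\widehat{\dual\rho}$ into $\Delta\widehat{\rho} + \frac{2Q}{r^2}(\Delta + \frac{2}{r^2})\rhoFc + (1 + O(r^{-1}))\div\div\a - (\frac{6M}{r^3} - \frac{4Q^2}{r^4})\div\eta - \frac{4Q}{r^3}\div\bbF$. The derivation for $\bm{J}$ is parallel but uses $\curl$, so that the $\eta$- and $\bbF$-contributions drop out using $\curl\xib \in \Ga_b\cdot\Ga_b$ and the identity $\curl\eta = \dual\rho + \Ga_b\cdot\Ga_g$ from Proposition \ref{Prop.NullStr+Bianchi-lastslice}.

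The principal obstacle is the algebraic bookkeeping verifying that the transverse derivatives of the $r$-dependent renormalization coefficients $\frac{Q}{r^2}$ and $\frac{2Q}{r^3}$ combine with the electromagnetic sources in the Bianchi identities so as to eliminate all $\Ga_b$-scale terms from the leading-order structure. A related subtlety is verifying that the leftover pieces genuinely lie in the stated error classes $r^{-3}\dk^{\le 1}\Ga_b + r^{-1}\dkb^{\le 1}(\Ga_b\cdot\Ga_b)$ or $r^{-5}\dk^{\le 1}\Ga_g + r^{-2}\dkb^{\le 1}(\Ga_b\cdot\Ga_g)$: this requires tracking, at each step, that every naively $\Ga_b$-scale contribution either carries an extra $r^{-1}$ factor coming from an outgoing $\nab_4$ weight or else appears inside a nonlinear product where the rule \eqref{eq:rule-Gag-Gab} can be used to trade $\Ga_b$ for $r\Ga_g$.
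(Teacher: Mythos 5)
Your proposal follows essentially the same route as the paper's proof (Appendix \ref{appendix:lemma:improved-transport-along-Sigma-star}): derive separate $\nabla_3$ and $\nabla_4$ equations for $\widehat{\rho}$, $\bm{C}$, $\bm{J}$ from the linearized Bianchi/Maxwell system of Proposition \ref{Prop.NullStr+Bianchi-lastslice}, commute with $\ddd_1$ via Lemma \ref{Lemma:Commutation-Si_*}, and combine through $\nu=e_3+b_*e_4$, with the renormalization coefficients arranged so the electromagnetic sources reassemble into $\bm{C}$, $\underline{\bm{C}}$ and $\big(\triangle+\tfrac{2}{r^2}\big)$ up to admissible errors. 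The only slip is the expansion $b_*=-\Upsilon+\widecheck{b_*}$ (Definition \ref{ref:definition-linearized} gives $b_*=\Upsilon-2+\widecheck{b_*}$), which is harmless here since both equal $-1+O(r^{-1})$ and the statement only tracks coefficients to that order.
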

\begin{proof}
  See Appendix \ref{appendix:lemma:improved-transport-along-Sigma-star}. 
\end{proof}

\begin{remark}
Notice that the transport equation for $\widehat{\rho}$ is not sensitive to the lower order term in $\rhoFc$ (as they behave better than the $r^{-3}\Gab$ terms on the right hand side), while the non linear term $\chih \c \chibh$ is important to cancel a nonfavorable nonlinear term in its transport equation.

While in the equation for $\bm{J}$ the term $\curl\bbF$ gets cancelled, in the equation for $\bm{C}$ the term $\div \bbF$ persists. For this reason we need to impose the vanishing of $\bm{C}$ on $\Si_*$ in order to avoid to transport $\bm{C}$ using its equation. As a consequence, the equation for $\nabla_\nu\bm{C}$ is not used as a transport equation but rather as an algebraic relation, so we only need slower decay and we can simplify it to 
  \begin{align*}
     \nabla_{\nu}\bm{C}
    ={}& \frac{8}{r}\bm{C}+\lap \widehat{\rho}-\frac{6M}{r^3}\div\eta- \frac{4Q}{r^3}\div \bbF         
         + O(1+O(r^{-1}))\div\div\alpha         
    \\
       & + r^{-4}\dk^{\le 1}\Gamma_g
         + r^{-2}\dk^{\le 1}(\Gamma_b\cdot\Gamma_g).
           \end{align*}
\end{remark}

We will now deduce the transport equations for the $\ell=1$ modes of the renormalized quantities. We first recall the following lemma, which appeared as Corollary 5.32 in \cite{klainermanKerrStabilitySmall2023}.

\begin{lemma}\label{Corr:nuSof integrals}
  For any scalar function $h$ on $\Si_*$, we have  
  \begin{align}
    \nu\left(\int_Sh\right) = \int_S\nu (h)-\frac{4}{r}\int_S h+r^3\Gab \c \nu(h)+r^2\Gab \c h
  \end{align}
  and $\nu(r)=\frac{rz}{2}\ov{z^{-1}(\kab + b_*\ka)}= -2 +r\Gab.$
  In particular, we have
  \begin{align*}
    \nu(\ov{h}) = \ov{\nu(h)}+r\Gab\nu(h)+\Gab h, 
  \end{align*}
  where $\ov{h}$ and $\ov{\nu(h)}$ denote respectively the average of
  $h$ and $\nu(h)$ on the spheres of $\Si_*$.
\end{lemma}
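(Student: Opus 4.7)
The strategy is a direct first variation computation, carefully accounting for the fact that $\nu$ is tangent to $\Si_*$ but transverse to the $r$-foliation and is not affinely parameterized with respect to $u$. I would introduce coordinates $(u,\th,\vphi)$ on $\Si_*$ in which the transversality conditions $\nu(\th)=\nu(\vphi)=0$ force $\nu = z\,\partial_u$ as a coordinate vector field. In these coordinates, $\int_{S(u)}h = \int h\sqrt{g_S}\,d\th\,d\vphi$, so the standard chain rule gives
\begin{align*}
\nu\!\left(\int_S h\right)\!(p) \;=\; z(p)\,\partial_u\!\left(\int_S h\right)\!\bigg|_{u(p)} \;=\; z(p)\int_S\Big[\partial_u h + h\,\partial_u \ln\sqrt{g_S}\Big]\,d\text{vol}_S,
\end{align*}
with $\partial_u h=\nu(h)/z$.

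The main algebraic step is to compute $\partial_u\ln\sqrt{g_S}$. Writing it as $\frac{1}{2}g_S^{AB}(\mathcal{L}_{\nu/z}g)(e_A,e_B)$ for a tangent frame $\{e_A\}$ on $S$, and using $\mathcal{L}_{fX}g = f\mathcal{L}_X g + df\otimes X^\flat + X^\flat\otimes df$ together with $\g(\nu,e_A)=0$, the $z$-derivative terms drop out, and a direct null-frame computation $\g(\nab_{e_A}(e_3+b_*e_4),e_B) = \chib_{AB}+b_*\chi_{AB}$ yields
\begin{align*}
\partial_u\ln\sqrt{g_S} \;=\; \frac{\kab + b_*\ka}{z}.
\end{align*}
This gives the master identity
\begin{align*}
\nu\!\left(\int_S h\right) \;=\; z\int_S z^{-1}\!\left[\nu(h)+h(\kab+b_*\ka)\right] \;=\; 4\pi r^2 z\,\ov{z^{-1}\!\left[\nu(h)+h(\kab+b_*\ka)\right]}.
\end{align*}
Setting $h=1$ and using $|S|=4\pi r^2$, hence $\nu(|S|)=8\pi r\,\nu(r)$, immediately yields the claimed formula $\nu(r)=\frac{rz}{2}\ov{z^{-1}(\kab+b_*\ka)}$.

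The remaining task is to extract the claimed leading order and error structure. Using the GCM condition $\trchc=0$ (so $\ka=2/r$), the expansions $\trchb = -\frac{2\Up}{r}+\trchbc$ with $\trchbc\in \Gag$ by Corollary~\ref{cor:Cb0CbpM0MareGagandrm1Gag}, and $b_* = -(2-\Up)+\widecheck{b_*}$ with $\widecheck{b_*}\in r\Gab$, the $\Up$ contributions cancel and one obtains
\begin{align*}
\kab + b_*\ka \;=\; -\frac{4}{r} + \trchbc + \frac{2\widecheck{b_*}}{r} \;=\; -\frac{4}{r}+O(\Gab).
\end{align*}
Combined with $z = 2+\zc$, $z^{-1}=\tfrac12+O(\Gab)$ (since $\zc\in r\Gab$), an expansion of $z\cdot\ov{z^{-1}(\cdots)}$ in the master identity separates the leading term $\int_S\nu(h)-\tfrac{4}{r}\int_S h$ from correction integrals of the form $\int_S \Gab\cdot\nu(h)$ and $\int_S \Gab\cdot h$, which, after including the outer factor $z(p)$ and the area factor $r^2$ inherent in passing to averages, fall into the declared $r^3\Gab\cdot\nu(h)+r^2\Gab\cdot h$ class. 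The identity $\nu(r)=-2+r\Gab$ is then a direct consequence of $\nu(u+r)=0$ on $\Si_*$ and $\nu(u)=z=2+\zc$. Finally, the formula for $\nu(\ov{h})$ follows by applying the quotient rule to $\ov{h}=\frac{1}{4\pi r^2}\int_S h$ and substituting both the main identity and the expression for $\nu(r)$.

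The computation itself is routine; the only delicate point is the bookkeeping for the error terms, ensuring that the contributions from $\trchbc\in\Gag$, $\widecheck{b_*}/r\in\Gab$ and $\zc\in r\Gab$ (together with their couplings through the outer $z(p)$ factor and the average-of-product vs product-of-average discrepancies) are absorbed into the advertised $r^3\Gab\cdot\nu(h)+r^2\Gab\cdot h$ error structure; this is where care must be exercised, but no genuinely new idea is required.
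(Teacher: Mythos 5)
Your proof is correct and follows the same first-variation argument as the cited reference (the paper itself only points to Corollary 5.32 of \cite{klainermanKerrStabilitySmall2023}); indeed your master identity $\nu\big(\int_S h\big)=z\int_S z^{-1}\big[\nu(h)+(\kab+b_*\ka)h\big]$ is exactly the one the paper later invokes in the Hawking-mass computation. The error bookkeeping also checks out, with the one implicit step worth making explicit being that the $\trchbc\in\Gag$ contribution to $\kab+b_*\ka$ is absorbed into the $\Gab$ error class only because the dominance condition gives $u^{1/2}\lesssim r$ on $\Si_*$.
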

\begin{proof}
    See Corollary 5.32 in \cite{klainermanKerrStabilitySmall2023}.
\end{proof}

\begin{corollary}\label{COROFLEMMA:TRANSPORT.ALONGSI_STAR1} 
  We have the following transport equations along $\Sigma_{*}$, for $p=0,+,-$:
  \begin{align*}
    \nu\left(\int_S\widehat{\rho}\Jp\right)
  ={}& O(r^{-1})\int_S\widehat{\rho}\Jp -\int_S\underline{\bm{C}}\Jp 
      +r^{-1}\dk^{\leq1}\Gab +r\dkb^{\le  1}( \Ga_b \c \Ga_b),\\
    \nu\left( \int_S\bm{J}\Jp \right)
    ={}& O(r^{-1})\int_S\bm{J}\Jp
         + O(r^{-2})\int_S\dual\widehat{\rho}\Jp
         + r^{-3}\dk^{\le 1}\Gamma_g
         + \dk^{\le 1}(\Gamma_b\cdot\Gamma_g).
  \end{align*}
  
 We also have the following elliptic identity along $\Sigma_{*}$, for $p=0,+,-$:
  \begin{align} \label{eq:elliptic-identity-diveta}
  \frac{2}{r^2}\int_S \widehat{\rho}\Jp+\frac{6M}{r^3}\int_S\div\eta\Jp+ \frac{4Q}{r^3}\int_S\div \bbF  \Jp ={}& r^{-2}\dk^{\le 1}\Gamma_g  + \dk^{\le 1}(\Gamma_b\cdot\Gamma_g).
\end{align}
\end{corollary}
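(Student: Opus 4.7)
The plan is to derive all three identities by projecting the transport equations of Proposition \ref{lemma:improved-transport-along-Sigma-star} onto the $\ell=1$ modes, using Lemma \ref{Corr:nuSof integrals} to commute $\nu$ past $\int_S$ (noting $\nu(\Jp)=0$), combined with the mass-centered GCM condition $\bm{C}_{\ell=1}\equiv 0$ on $\Si_*$ and the identities for $\Jp$ from Lemma \ref{JpGag}.

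For the first equation, I would apply Lemma \ref{Corr:nuSof integrals} with $h=\widehat{\rho}\Jp$ and substitute the transport equation for $\nabla_\nu\widehat\rho$. The key point is that the $\bm{C}$ contribution vanishes after projection, since $\bm{C}_{\ell=1}=0$ on every sphere of $\Si_*$ by \eqref{eq:Si_*-GCM2}; the coefficient $\tfrac{6}{r}\int_S\widehat\rho\Jp$ from the transport equation then combines with $-\tfrac{4}{r}\int_S\widehat\rho\Jp$ from Lemma \ref{Corr:nuSof integrals} to produce the stated $O(r^{-1})\int_S\widehat\rho\Jp$. The pointwise errors $r^{-3}\dk^{\leq 1}\Gab+r^{-1}\dkb^{\leq 1}(\Gab\c\Gab)$ integrated against $|S|\sim r^2$ land in $r^{-1}\dk^{\leq 1}\Gab+r\dkb^{\leq 1}(\Gab\c\Gab)$, and the nonlinear Lemma \ref{Corr:nuSof integrals} remainders $r^3\Gab\c\nu(h)+r^2\Gab\c h$ land in the same classes after $\nu(h)=\nu(\widehat\rho)\Jp$ is expanded by a second use of the transport equation.

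The derivation of the $\bm{J}$ equation is analogous, but the substituted transport equation now contains the second-order operators $\bigtriangleup\dual\widehat{\rho}$, $(\bigtriangleup+\tfrac{2}{r^2})\dual\rhoF$, and $\curl\div\a$. I would integrate by parts on $S$ to move these onto $\Jp$: for the first, $\int_S\bigtriangleup\dual\widehat\rho\,\Jp=-\tfrac{2}{r^2}\int_S\dual\widehat\rho\,\Jp+\int_S\dual\widehat\rho\,(\bigtriangleup+\tfrac{2}{r^2})\Jp$, where the second piece is $\Gag\c\Gag$ by Lemma \ref{JpGag} and the leading $-\tfrac{2}{r^2}\dual\widehat\rho$ combines with $-\tfrac{6M}{r^3}\dual\widehat\rho$ to give the advertised $O(r^{-2})\int_S\dual\widehat\rho\,\Jp$. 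The $(\bigtriangleup+\tfrac{2}{r^2})\dual\rhoF$ term is purely $\Gag\c\Gag$ by the same integration by parts. For $\curl\div\a=\ddd_1\ddd_2\a$ tested against $\Jp$, adjunction gives $\int_S\a\c\dds_2\dds_1(0,\Jp)$, and the identity $\dds_2\dds_1\Jp\in r^{-1}\dkb^{\leq 3}\Gag$ from Lemma \ref{JpGag} produces a $\Gag\c\Gag$ contribution absorbable into $\dk^{\leq 1}(\Gab\c\Gag)$ via the bootstrap rule \eqref{eq:rule-Gag-Gab}.

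The elliptic identity \eqref{eq:elliptic-identity-diveta} is not a transport statement but an algebraic consequence of the GCM condition: since $\bm{C}_{\ell=1}\equiv 0$ everywhere on $\Si_*$, the function $\int_S\bm{C}\,\Jp$ vanishes identically, so $\nu(\int_S\bm{C}\,\Jp)=0$, and Lemma \ref{Corr:nuSof integrals} forces $\int_S\nu(\bm{C})\,\Jp$ to equal only the nonlinear remainders of that lemma. Substituting the transport equation for $\nabla_\nu\bm{C}$ and handling $\bigtriangleup\widehat\rho$, $(\bigtriangleup+\tfrac{2}{r^2})\rhoFc$, and $\div\div\a$ by the same integration-by-parts tricks yields the claimed relation among $\int_S\widehat\rho\,\Jp$, $\int_S\div\eta\,\Jp$ and $\int_S\div\bbF\,\Jp$, with the subleading $\tfrac{4Q^2}{r^4}\int_S\div\eta\,\Jp$ absorbed into $r^{-2}\dk^{\leq 1}\Gag$ again via \eqref{eq:rule-Gag-Gab}. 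The principal obstacle throughout is the careful bookkeeping of the three distinct target error classes, and in particular ensuring that the iterative re-expansion of $\nu(h)$ inside the Lemma \ref{Corr:nuSof integrals} nonlinear remainders produces terms consistently within the advertised classes.
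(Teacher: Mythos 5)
Your proposal is correct and follows essentially the same route as the paper's proof: apply Lemma \ref{Corr:nuSof integrals} with $\nu(\Jp)=0$ to reduce $\nu(\int_S h\Jp)$ to $\int_S\nu(h)\Jp-\tfrac{4}{r}\int_S h\Jp$ plus nonlinear remainders, substitute the transport equations of Proposition \ref{lemma:improved-transport-along-Sigma-star}, kill the $\bm{C}$ contributions via the GCM condition \eqref{eq:Si_*-GCM2} (turning the $\nabla_\nu\bm{C}$ equation into the elliptic identity), and integrate the second-order operators by parts onto $\Jp$ using $(\lap+\tfrac{2}{r^2})\Jp$ and $\dds_2\dds_1\Jp$ from Lemma \ref{JpGag}. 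The coefficient bookkeeping ($\tfrac{6}{r}-\tfrac{4}{r}$, $\tfrac{8}{r}-\tfrac{4}{r}$, the absorption of $\tfrac{4Q^2}{r^4}\div\eta$) and the treatment of the error classes all match the paper's argument.
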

\begin{proof}
Since $\nu(\Jp)=0$, we have in view of Lemma \ref{Corr:nuSof integrals}
for any scalar function $h$ on $\Si_*$ and any $S\subset\Si_*$
\begin{equation}
  \label{eq:transportequationforell=1modealongSigmastargeneral}
  \nu\left(\int_Sh\Jp\right) = \int_S\nu (h)\Jp  -\frac{4}{r}\int_Sh\Jp +r^3\Ga_b \c \nu(h)+r^2\Ga_b \c h,
\end{equation}
where we also used $\Jp=O(1)$, and where we recall that the notation
$O(r^a)$, for $a\in\mathbb{R}$, denotes an explicit function of $r$
which is bounded by $r^a$ as $r\to+\infty$.

From Proposition \ref{lemma:improved-transport-along-Sigma-star} we deduce, using the GCM condition \eqref{eq:Si_*-GCM2}, that 
\begin{align*}
  \nu\left(\int_S\widehat{\rho}\Jp\right)
  ={}& O(r^{-1})\int_S\widehat{\rho}\Jp -\int_S\underline{\bm{C}}\Jp
      +r^{-1}\dk^{\leq1}\Gab +r\dkb^{\le  1}( \Ga_b \c \Ga_b),
\end{align*}
as stated.

Using Proposition
\ref{lemma:improved-transport-along-Sigma-star}  and
\eqref{eq:transportequationforell=1modealongSigmastargeneral}, and
noticing that the terms $O(r^a)$ only depend on $r$ and are thus
constant on $S$, we infer
\begin{equation*}
  \begin{split}
    \nu\left( \int_S\bm{C}\Jp \right)
    ={}& O(r^{-1})\int_S\bm{C}\Jp
         + \int_S\lap \widehat{\rho}\Jp-\frac{6M}{r^3}\int_S\div\eta\Jp- \frac{4Q}{r^3}\int_S\div \bbF  \Jp\\
         &
         + O(1+O(r^{-1}))\int_S\div\div\alpha\Jp
         + r^{-2}\dk^{\le 2}\Gamma_g
         + \dk^{\le 2}(\Gamma_b\cdot\Gamma_g).
  \end{split}
\end{equation*}
Together with the GCM condition \eqref{eq:Si_*-GCM2}
and integrating by parts to use Lemma \ref{JpGag}, we deduce 
\begin{equation*}
  \begin{split}
    0
    ={}&  -\frac{2}{r^2}\int_S \widehat{\rho}\Jp-\frac{6M}{r^3}\int_S\div\eta\Jp- \frac{4Q}{r^3}\int_S\div \bbF  \Jp\\
    &
         +r\left(\left|\left(\Delta+\frac{2}{r^2}\right)\Jp\right|+\left|\dds_2\dds_1\Jp\right|\right)\Ga_g+ r^{-2}\dk^{\le 2}\Gamma_g
         + \dk^{\le 2}(\Gamma_b\cdot\Gamma_g)\\
    ={}&   -\frac{2}{r^2}\int_S \widehat{\rho}\Jp-\frac{6M}{r^3}\int_S\div\eta\Jp- \frac{4Q}{r^3}\int_S\div \bbF  \Jp
        + r^{-2}\dk^{\le 2}\Gamma_g
         + \dk^{\le 2}(\Gamma_b\cdot\Gamma_g),
  \end{split}
\end{equation*}
where by $\dds_1\Jp$, we mean $\dds_1(\Jp,0)$. Similarly, using Proposition
\ref{lemma:improved-transport-along-Sigma-star} and noticing that $\dual\rhoF$ appears in the combination $\big(\bigtriangleup\dual\rhoF +\frac{2}{r^2}\dual\rhoF\big)$, we deduce the equation for $\bm{J}$ as
\begin{equation*}
  \begin{split}
    \nabla_{\nu}\left( \int_S\bm{J}\Jp \right)
    ={}& O(r^{-1})\int_S\bm{J}\Jp
         - O(r^{-2})\int_S\dual\widehat{\rho}\Jp
        +r\left(\left|\left(\Delta+\frac{2}{r^2}\right)\Jp\right|+\left|\dds_2\dds_1\Jp\right|\right)\Ga_g\\
       &  + O(r^{-3})\dk^{\le 2}\Gamma_g
         + \dk^{\le 2}(\Gamma_b\cdot\Gamma_g)\\
    ={}& O(r^{-1})\int_S\bm{J}\Jp
         - O(r^{-2})\int_S\dual\widehat{\rho}\Jp
         + r^{-3}\dk^{\le 2}\Gamma_g
         + \dk^{\le 2}(\Gamma_b\cdot\Gamma_g),
  \end{split}
\end{equation*}
as stated. 
\end{proof}

The above transport equations in $\nu$ along $\Si_*$ will be used to obtain estimates according to the following lemma.
\begin{lemma}\label{evolutionlemmaSi*}
  Let $f$ and $h$ be two scalar functions on $\Si_*$ satisfying
  \begin{align*}
    \nu(f) =O(r^{-1})f+h.
  \end{align*}
  Then, we have for any integer $m$
  \begin{align}
    r^m\|f\|_{L^\infty(S(u))}\les r_*^m\|f\|_{L^\infty(S_*)}+\int_{u}^{u_*}\|r^mh\|_{L^\infty(S(u'))}du'.
  \end{align}
\end{lemma}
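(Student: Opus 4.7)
The plan is to reduce to a scalar Gr\"onwall inequality along the integral curves of $\nu$ in $\Si_*$. Multiplying the hypothesized equation by $r^m$ and invoking the formula $\nu(r) = -2 + r\Ga_b$ from Lemma \ref{Corr:nuSof integrals}, one computes
\begin{align*}
\nu(r^m f) = m r^{m-1}\nu(r) f + r^m \nu(f) = \Big[-\frac{2m}{r} + m\Ga_b + O(r^{-1})\Big](r^m f) + r^m h.
\end{align*}
Writing $F \vcentcolon= r^m f$ and $H \vcentcolon= r^m h$, the bootstrap bound $\Ga_b \les \ep/(ru^{1+\dec})$ makes every coefficient in the bracket bounded by $C/r$, so the equation reduces to $\nu(F) = a\, F + H$ with $|a|\les r^{-1}$.

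Next, I parameterize the flow of $\nu$ by the function $u$ itself. Since $\nu(u) = z = 2 + \zc$ with $|\zc|$ small by the bootstrap assumption $r^{-1}\zc\in\Ga_b$, every point $p\in S(u)$ lies on a unique integral curve $\gamma_p(u')$, $u'\in[u,u_*]$, of the rescaled field $\nu/\nu(u)$ connecting $p$ to $S_*$. Along this curve
\begin{align*}
\frac{dF(\gamma_p(u'))}{du'} = \frac{1}{\nu(u)}\Big(a\,F + H\Big)(\gamma_p(u')),
\end{align*}
and integration combined with the standard Gr\"onwall inequality yields
\begin{align*}
|F(p)| \les \Big(|F(\gamma_p(u_*))| + \int_u^{u_*} |H(\gamma_p(u'))|\, du'\Big)\exp\Big(C\int_u^{u_*}\frac{du'}{r(u')}\Big).
\end{align*}

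The only delicate point is showing that the Gr\"onwall factor is $O(1)$. Since $r = c_*-u$ on $\Si_*$, one has
\begin{align*}
\int_u^{u_*}\frac{du'}{r(u')} = \log\frac{r(u)}{r_*} = \log\Big(1 + \frac{u_*-u}{r_*}\Big).
\end{align*}
This is precisely where the dominance condition \eqref{dominanceM3} enters: it forces $r_* \gtrsim \ep_0^{-1}u_*^{1+\dec}\gg u_*$, so $r(u)/r_*\les 1$ uniformly on $\Si_*$, and the Gr\"onwall factor is bounded by a universal constant. Taking the $L^\infty$-norm on each sphere $S(u')$ and recalling that $F = r^m f$, $H = r^m h$ then yields the stated estimate. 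The main potential obstacle is precisely this Gr\"onwall factor: without the dominance condition one would obtain an unbounded multiplicative factor $(r/r_*)^C$ that would overwhelm the linear $r^m$-scaling in the conclusion.
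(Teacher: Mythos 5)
Your proof is correct and is essentially the standard argument behind the cited Lemma 5.33/Corollary 5.34 of \cite{klainermanKerrStabilitySmall2023}: renormalize by $r^m$ using $\nu(r)=-2+r\Ga_b$, reparametrize the flow of $\nu$ by $u$ via $\nu(u)=2+\zc$, and apply Gr\"onwall, with the factor $\exp\bigl(C\int_u^{u_*}r^{-1}\,du'\bigr)=\bigl(r(u)/r_*\bigr)^{C}$ kept bounded because $r=c_*-u$ and the dominance condition forces $r_*\gg u_*$. The paper gives no independent proof (it only cites the reference), and your argument, including the correct identification of where the dominance condition is used, fills that in accurately.
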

\begin{proof}
  See Lemma 5.33 and Corollary 5.34 in \cite{klainermanKerrStabilitySmall2023}.
\end{proof}

\section{Consequences of the bootstrap assumptions and the dominance condition}\label{sec:consequences-BA}

In this section we obtain control of the $\ell=0$ and $\ell=1$ modes of most scalar quantities and 1-forms on $\Si_*$ solely\footnote{In particular, here we do not use the condition  \eqref{eq:Si_*-GCM2-xib} of $(\div\xib)_{\ell=1}=0$.} as a consequence of the bootstrap assumptions, the GCM conditions \eqref{eq:GCM-cond1},\eqref{eq:S_*-GCM},\eqref{eq:Si_*-GCM2} and the dominance condition \eqref{dominanceM3}.

\subsection{Estimates for \texorpdfstring{$\ell=0$}{l=0} modes on \texorpdfstring{$\Si_*$}{}}\label{sec:l=0}
In this section, we control the average (i.e. the $\ell=0$ mode) of the scalar quantities involved, i.e.
\begin{align*}
    \trchbc, \quad \rhoc, \quad \rhod, \quad  \rhoFc, \quad \rhodF, \quad \muc.
\end{align*} 
As a consequence of definition of charge, the Maxwell equations and Stokes theorem, recall Remark \ref{rem:charge-constant}, we have on $\Sigma_*$ 
\begin{equation}\label{eq:rhofc-ov-0}
    \ov{\rhoFc}=\ov{\rhoF}-\frac{Q}{r^2}=0, \qquad \ov{\dual\rhoF}=0,
\end{equation}
so there is no need to control the $\ell=0$ mode of these two quantities any further.

In what follows, we will make use of the control of the Hawking mass function defined in \eqref{dfHawking-mass}, in terms of the constant auxiliary mass defined in \eqref{eq:definition-auxiliary-m}. Notice that, by definition,
\begin{equation*}
  \frac{2m_H}{r} = 1+\frac{1}{16\pi}\int_S\trch\trchb
  = 1+\frac{1}{16\pi}\int_S\frac{2}{r}\left(-\frac{2\Up}{r}+\trchbc \right)
  =\frac{2M}{r}-\frac{Q^2}{r^2}+2r\ov{\trchbc}=\frac{2M}{r}-\frac{Q^2}{r^2}+r\Ga_g,
\end{equation*}
and so $M = m_H + \frac{Q}{2r}+r^2\Ga_g$.

\begin{lemma}
    We have on $\Si_*$:
    \begin{equation}
    |m_H-m|+ \abs*{m_H+\frac{Q^2}{2r}-M}\les \frac{\ep_0}{u^{1+2\dec}}.\label{eq:Mcontrol}
\end{equation}
\end{lemma}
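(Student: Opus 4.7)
My plan is to reduce both quantities in \eqref{eq:Mcontrol} to a single surface integral, and then to estimate that integral via transport along $\nu$. Substituting the GCM condition $\trchc=0$ from \eqref{eq:Si_*-GCM1} into the Hawking-mass definition \eqref{dfHawking-mass} and using $|S|=4\pi r^{2}$ gives $\int_{S}\trch\,\trchb = -16\pi\Upsilon + \tfrac{2}{r}\int_{S}\trchbc$, hence the exact identity $m_{H} = M - Q^{2}/(2r) + (16\pi)^{-1}\int_{S}\trchbc$. Combined with $M = m + Q^{2}/(2r_{*})$ and the GCM condition $\trchbc|_{S_{*}}=0$ (which via the same identity yields $m_{H}|_{S_{*}}=m$), this produces
\[
m_{H} + \frac{Q^{2}}{2r} - M = \frac{1}{16\pi}\int_{S}\trchbc, \qquad m_{H}-m = \frac{Q^{2}}{2r_{*}} - \frac{Q^{2}}{2r} + \frac{1}{16\pi}\int_{S}\trchbc,
\]
reducing both quantities in \eqref{eq:Mcontrol} to the control of $Q^{2}/(2r)-Q^{2}/(2r_{*})$ and of $\int_{S}\trchbc$.

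The algebraic difference is handled by the dominance condition. Since $u+r=c_{*}$ on $\Si_{*}$ one has $r-r_{*}=u_{*}-u$, so using \eqref{dominanceM3}, $r\ge r_{*}\ge \de_{*}\ep_{0}^{-1}u_{*}^{1+\dec}$, and $u\le u_{*}$, I would estimate
\[
\abs*{\frac{Q^{2}}{2r_{*}} - \frac{Q^{2}}{2r}} = \frac{Q^{2}(u_{*}-u)}{2rr_{*}} \le \frac{Q^{2}u_{*}}{2r_{*}^{2}} \lesssim \frac{\ep_{0}^{2}}{u_{*}^{1+2\dec}} \lesssim \frac{\ep_{0}}{u^{1+2\dec}},
\]
the essential point being that \emph{squaring} the dominance bound produces the decay $u^{1+2\dec}$ rather than the naive $u^{1+\dec}$ one would get from the linear dominance.

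The remaining estimate $|\int_{S}\trchbc|\lesssim \ep_{0}/u^{1+2\dec}$ is the heart of the argument. Using \eqref{eq:Si_*-GCM1} to decompose $\trchbc = \underline{C}_{0}+\sum_{p}\underline{C}_{p}\Jp$, the cross terms $\sum_{p}\underline{C}_{p}\int_{S}\Jp$ are quadratically small by Lemma \ref{JpGag} and Corollary \ref{cor:Cb0CbpM0MareGagandrm1Gag}, contributing $O(\ep^{2}/(ru^{1+2\dec}))$. For the main piece $4\pi r^{2}\underline{C}_{0}$, I would derive a $\nu$-transport equation for $\int_{S}\trchbc$: applying $\nu$ to the Hawking-mass identity and using Lemma \ref{Corr:nuSof integrals} for the action of $\nu$ on surface integrals, the evolution equations for $\nab_{3}\trchbc$ and $\nab_{4}\trchbc$ from Proposition \ref{Prop.NullStr+Bianchi-lastslice}, the vanishing of $\int_{S}\div\ze$ and $\int_{S}\div\xib$ by Stokes, and the averaged GCM relation $\ov{b_{*}}=\Upsilon-2$ from \eqref{eq:Si_*-GCM2}. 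The resulting equation expresses $\nu(\int_{S}\trchbc)$ in terms of the other $\ell=0$ averages $\int_{S}\rhoc$, $\int_{S}\ombc$, $\int_{S}\yc$ plus quadratic remainders; backward integration from $S_{*}$, where $\int_{S_{*}}\trchbc=0$, via Lemma \ref{evolutionlemmaSi*} then produces the bound.

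The main obstacle is that this $\nu$-transport equation does not close in isolation: it couples to the $\ell=0$ averages of $\rhoc$, $\ombc$, and $\yc$, which must be controlled simultaneously. The correct scheme is thus to run all the $\ell=0$ estimates of Section \ref{sec:l=0} in parallel, exploiting the mass-aspect GCM condition \eqref{eq:Si_*-GCM1} (which via \eqref{def-mu} ties $\int_{S}\rhoc$ to $M_{0}\in r^{-1}\Ga_{g}$) together with the transport equations for $\ombc$ and $\yc$ along $\nu$. The quadratic structure of the remainders, combined with $\ep^{2}\ll\ep_{0}$, then closes the linear system and produces the claimed $\ep_{0}/u^{1+2\dec}$ decay.
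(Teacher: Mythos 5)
Your reduction of both quantities to $\tfrac{1}{16\pi}\int_S\trchbc$ plus the difference $\tfrac{Q^2}{2r_*}-\tfrac{Q^2}{2r}$ is correct, and so is the ``dominance squared'' estimate for the latter. The gap is in the core step, the bound $|\int_S\trchbc|\les \ep_0 u^{-1-2\dec}$, and it is twofold. First, you propose to transport $\int_S\trchbc$ using the equations for $\nab_3\trchbc$ and $\nab_4\trchbc$ from Proposition \ref{Prop.NullStr+Bianchi-lastslice}. But the $\nab_3\trchbc$ equation carries the \emph{linear} source terms $\tfrac{4\Up}{r}\ombc$ and $-\tfrac{2}{r^2}\yc$, whose $\ell=0$ modes are controlled at this stage only by the bootstrap assumptions, $|\ov{\ombc}|\les \ep r^{-1}u^{-1-\dec}$ and $|\ov{\yc}|\les \ep u^{-1-\dec}$; they have no improved $\ell=0$ estimates in Section \ref{sec:l=0} (they are only recovered much later, in flux norms, from the seed data). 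Feeding these into $\nu(\int_S\trchbc)$ and integrating from $S_*$ yields a contribution of size $\ep u^{-\dec}$, which is nowhere near $\ep_0 u^{-1-2\dec}$, so your coupled linear system does not close. The paper avoids this entirely by transporting the full product $\trch\trchb$ (i.e.\ the Hawking mass itself): in the Raychaudhuri-type computation for $\nu(\trch\trchb)$ the $\omb$ terms cancel exactly between $\trchb\, e_3(\trch)$ and $\trch\, e_3(\trchb)$, no $\yc$ is generated (one never linearizes), and every remaining linear term on the right-hand side except $2(\trchb+b_*\trch)\rho$ is an exact divergence that vanishes upon integration over $S$.

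Second, for the surviving $\rho$ term you invoke the mass-aspect GCM condition and $M_0\in r^{-1}\Ga_g$, which again only gives the bootstrap-level bound $|\ov{\rhoc}|\les \ep r^{-3}u^{-\frac12-\dec}$ — insufficient, since after integration it contributes $\sim \ep_0^{5/3}u^{-\frac12-2\dec}$, not $\ep_0 u^{-1-2\dec}$. The ingredient you are missing is Gauss--Bonnet: integrating the Gauss equation over $S$ and using $\int_S K=4\pi$ gives the \emph{exact} identity $\ov{\rho}=-\tfrac{2m_H}{r^3}+\ov{(\rhoF)^2}+\Ga_b\c\Ga_g$, so that the combination $\rho+\tfrac{2m_H}{r^3}$ appearing in the transport equation for $m_H$ is, after averaging, $\tfrac{Q^2}{r^4}$ plus quadratic errors. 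This is what makes the equation $\nu(m_H)=-\tfrac{Q^2}{r^2}+r^2\Ga_b\c\Ga_g+r^2\dkb^{\leq1}(\Ga_b\c\Ga_b)$ self-contained, with the $Q^2/r^2$ term absorbed by the squared dominance condition exactly as in your algebraic step. Without both the Raychaudhuri cancellation and the Gauss--Bonnet identity, the argument does not produce the claimed decay.
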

\begin{proof}
  From the null
  structure equations, we deduce on $\Si_*$, 

  \begin{align*}
    \nu(\trch\trchb) ={}& e_3(\trch\trchb)+b_* e_4(\trch\trchb)\\
    ={}&  -\trch\trchb(\trchb+b_*\trch) +2(\trchb+b_*\trch)\rho+2\trch\div\xib +2\trchb\div\eta\\
                        & +\trch\left(  2 \xib\c(\eta-3\ze)-|\chibh|^2\right)+\trchb\left(  2|\eta|^2 -\chibh\c\chih\right)                  -2\trch\bbF\cdot\bbF  \\
                        & +b_*\left(    -2\trch\div\ze  +\trch\left( 2|\ze|^2 -\chih\c\chibh\right)-\trchb|\chih|^2   -2\trchb\bF\cdot\bF\right)        ,
  \end{align*}
  which we rewrite
  \begin{align*}
    \nu(\trch\trchb)+\trch\trchb(\trchb+b_*\trch)
    ={}&   2(\trchb+b_*\trch)\rho+\frac{4}{r}\div\xib -\frac{4\Up}{r}\div\eta\\
       & -\frac{4\left(\Up -2\right)}{r}\div\ze+r^{-1}\dkb^{\leq 1}(\Ga_b\c\Ga_b).
  \end{align*}
  Applying Lemma \ref{Corr:nuSof integrals}, we infer
  \begin{align*}
    \nu\left(\int_S\trch\trchb\right)
    ={}&  z\int_S\frac{1}{z}\Big(\nu(\trch\trchb)+(\kab+b_*\ka)\trch\trchb     \Big)\\
    ={}& z\int_S\frac{1}{z}\Bigg(2(\trchb+b_*\trch)\rho+\frac{4}{r}\div\xib -\frac{4\Up}{r}\div\eta -\frac{4\left(\Up-2\right)}{r}\div\ze+r^{-1}\dkb^{\leq 1}(\Ga_b\c\Ga_b)\Bigg).
  \end{align*}
  Integrating by parts the divergences, we deduce
  \begin{equation*}
    \nu\left(\int_S\trch\trchb\right) = 2z\int_S\frac{1}{z}(\trchb+b_*\trch)\rho +r\dkb^{\leq 1}(\Ga_b\c\Ga_b).
  \end{equation*}
  We infer
  \begin{equation*}
    \nu\left(\frac{2m_H}{r}\right) = \frac{z}{8\pi}\int_S\frac{1}{z}(\trchb+b_*\trch)\rho +r\dkb^{\leq 1}(\Ga_b\c\Ga_b).
  \end{equation*}
  On the other hand, using that $\nu(r)=\frac{rz}{2}\ov{z^{-1}(\trchb + b_*\trch)}$, we have 
  \begin{equation*}
    \nu\left(\frac{2m_H}{r}\right) = \frac{2\nu(m_H)}{r}-\frac{2m_H}{r^2}\frac{z}{8\pi r}\int_S\frac{1}{z}(\trchb+b_*\trch),
  \end{equation*}
  which yields
  \begin{equation*}
    \nu(m_H) = \frac{rz}{16\pi}\int_S\frac{1}{z}(\trchb+b_*\trch)\left(\rho+\frac{2m_H}{r^3}\right) +r^2\dkb^{\leq 1}(\Ga_b\c\Ga_b).
  \end{equation*}
  Observe that 
  \begin{align*}
    \rhoc=   \rho+\frac{2M}{r^3}-\frac{2Q^2}{r^4}=\rho+\frac{2m_H}{r^3}-\frac{Q^2}{r^4}+r^{-1}\Ga_g,
  \end{align*}
  and using the dominance condition to bound $\frac{1}{r^4} \les \frac{\ep_0}{r^3u^{1+\dec}} \les r^{-1}\Ga_g$, we deduce $\rho+\frac{2m_H}{r^3} \in r^{-1}\Ga_g$.
  Plugging in the above,
  \begin{equation*}
    \nu(m_H) = -\frac{1}{4\pi}\int_S\left(\rho+\frac{2m_H}{r^3}\right) +r^2\dkb^{\leq 1}(\Ga_b\c\Ga_b).
  \end{equation*}
  In order to deduce the decay for the average of $\rho$, recall the Gauss equation
  \begin{equation*}
    K = -\rho+\rhoF^2+\rhodF^2 -\frac{1}{4}\trch\trchb +  \frac{1}{2}\chih\c\chibh.
  \end{equation*}
  Integrating on $S$, and using the definition of the Hawking mass $m_H$, we obtain
  \begin{equation*}
    \int_SK = -\int_S\rho+\int_S\rhoF^2+\int_S\rhodF^2 -4\pi\left(\frac{2m_H}{r}-1\right) +\frac{1}{2}\int_S\chih\c\chibh.
  \end{equation*}
  Since from Gauss--Bonnet we have $\int_SK = 4\pi$,
  we infer
  \begin{equation*}
    \int_S\rho =  -\frac{8\pi m_H}{r} +\frac{1}{2}\int_S\chih\c\chibh
    +\int_S\rhoF^2+\int_S\rhodF^2,
  \end{equation*}
  and hence, 
  \begin{equation}\label{eq:average-rho}
    \ov{\rho}= -\frac{2m_H}{r^3}+\ov{\rhoF^2}+\Ga_b\c\Ga_g=-\frac{2m_H}{r^3}+\frac{Q^2}{r^4}+\Ga_b\c\Ga_g,
  \end{equation}
  where we used that $\ov{\rhoF^2}=\ov{\rhoF}^2+\Ga_g \c \Ga_g$. This gives
  \begin{equation*}
    \nu(m_H) = -\frac{Q^2}{r^2}+r^2 \Ga_b \c \Ga_g  +r^2\dkb^{\leq 1}(\Ga_b\c\Ga_b).
  \end{equation*}
  Using the bootstrap assumptions \eqref{decayGagGabM3} and the
  dominance condition \eqref{dominanceM3}, we deduce
  \begin{align*}
    |\nu(m_H-m)| \les\frac{\ep_0}{u^{2+2\dec}}+\frac{\ep^2}{ru^{\frac 32 +2\dec}}+ \frac{\ep^2}{u^{2+2\dec}}\les \frac{\ep_0}{u^{2+2\dec}}. 
  \end{align*}
  Also, recall that we have $m=m_H$ on $S_*$. Integrating on $\Si_*$
  from $S_*$, we deduce the stated inequality for $\abs*{m_H-m}$ in
  \eqref{eq:Mcontrol}. Similarly, using that $\nu(r)=-2+r\Ga_b$, we
  have
  \begin{align*}
    \nu\left(m_H+\frac{Q}{2r}-M\right)&=-\frac{Q^2}{r^2}-\frac{Q}{2r^2}\nu(r)+r^2 \Ga_b \c \Ga_g  +r^2\dkb^{\leq 1}(\Ga_b\c\Ga_b)\\
                                      &=r^{-1}\Ga_b+r^2 \Ga_b \c \Ga_g  +r^2\dkb^{\leq 1}(\Ga_b\c\Ga_b).
  \end{align*}
  Using the bootstrap assumptions \eqref{decayGagGabM3} and the dominance condition \eqref{dominanceM3}, we deduce
  \begin{align*}
    \abs*{\nu\left(m_H+\frac{Q}{2r}-M\right)}\les \frac{\ep_0}{u^{2+2\dec}}.
  \end{align*}
  Also, recall that we have $M=m_H+\frac{Q}{2r}$ on $S_*$. Integrating
  on $\Si_*$ from $S_*$, we deduce the stated inequality for
  $\abs*{m_H+\frac{Q^2}{2r}-M}$ in \eqref{eq:Mcontrol}.
\end{proof}

\begin{proposition}\label{prop:controlofell=0modesonSigmastar}
  The following estimates hold true on $\Si_*$:
  \begin{equation}
    r\left|\ov{\rhoc}\right|+r\left| \ov{\rhod}\right|+ r\left| \ov{\muc} \right|+\left|\ov{\trchbc}\right|\les\frac{\ep_0}{r^2u^{1+2\dec}}.
  \end{equation}
\end{proposition}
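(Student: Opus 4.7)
The plan is to control each of the four averages separately, each via a different algebraic or integral identity, rather than by setting up transport equations along $\nu$.

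For $\ov{\trchbc}$: I would start from the definition of the Hawking mass and the GCM condition $\trchc=0$ (so that $\trch=\frac{2}{r}$) to obtain the explicit algebraic identity already used in the proof of \eqref{eq:Mcontrol}, which reads schematically $\frac{2m_H}{r}=\frac{2M}{r}-\frac{Q^2}{r^2}+\frac{r}{2}\ov{\trchbc}$. Solving for $\ov{\trchbc}$ expresses it as a multiple of $\bigl(M-m_H-\frac{Q^2}{2r}\bigr)/r^2$, whose size is already controlled by \eqref{eq:Mcontrol}. This immediately yields the desired bound on $|\ov{\trchbc}|$.

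For $\ov{\rhoc}$ I would invoke the averaged Gauss equation \eqref{eq:average-rho}, derived en route to \eqref{eq:Mcontrol}, which gives $\ov{\rho}=-\frac{2m_H}{r^3}+\frac{Q^2}{r^4}+\Ga_b\c\Ga_g$. Combined with the linearization $\rhoc=\rho+\frac{2M}{r^3}-\frac{2Q^2}{r^4}$, this rewrites $\ov{\rhoc}$ as $\frac{2}{r^3}\bigl(M-m_H-\frac{Q^2}{2r}\bigr)+\Ga_b\c\Ga_g$. Applying \eqref{eq:Mcontrol} handles the linear part, while the bootstrap assumptions bound the quadratic term; the inequality $\ep^2\ll\ep_0$ and the dominance condition \eqref{dominanceM3} convert the $r$-weight of the nonlinear term into the required $u$-decay so the $\ep_0$ part dominates. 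The same strategy handles $\ov{\muc}$: integrating the definition $\muc=-\div\ze-\rhoc+\frac{1}{2}\hch\c\hchb$ over $S$ eliminates the divergence by Stokes' theorem and reduces the estimate for $\ov{\muc}$ to that for $\ov{\rhoc}$ up to a quadratic $\Ga_g\c\Ga_b$ term which is already admissible.

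For $\ov{\rhod}$ I would use the null structure identity $\curl\ze=\rhod-\frac{1}{2}\hch\wedge\hchb$ from Proposition \ref{Prop.NullStr+Bianchi-lastslice}. Integrating over a sphere $S\subset\Si_*$ kills the curl, so $\int_S\rhod=\frac{1}{2}\int_S\hch\wedge\hchb$, identifying $\ov{\rhod}$ with a pure quadratic interaction of $\Ga_g$ with $\Ga_b$. The bootstrap assumptions \eqref{decayGagGabM3} then give a bound of order $\ep^2/(r^3u^{3/2+2\dec})$, which the dominance condition upgrades to the claimed bound after trading one power of $1/r$ for $\ep_0/u^{1+\dec}$ and using $\ep^2\ll\ep_0$.

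I do not foresee a genuine obstacle: the argument is essentially a bookkeeping exercise combining the Hawking mass control \eqref{eq:Mcontrol}, Stokes' theorem on the spheres, and the Gauss equation. The only delicate point is making sure that the residual nonlinear contributions $\Ga_b\c\Ga_g$ entering via \eqref{eq:average-rho} and the definitions of $\muc$, $\rhod$ are dominated by the linear $\ep_0$-terms; this is exactly where the smallness hierarchy $\ep^2\ll\ep_0\ll\ep$ and the dominance condition $1/r\lesssim\ep_0/u^{1+\dec}$ are used to exchange powers of $r^{-1}$ for powers of $u^{-1-\dec}$.
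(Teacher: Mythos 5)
Your proposal is correct and follows essentially the same route as the paper: $\ov{\trchbc}$ from the Hawking-mass identity plus \eqref{eq:Mcontrol}, $\ov{\rhoc}$ from the averaged Gauss equation \eqref{eq:average-rho} plus \eqref{eq:Mcontrol}, $\ov{\muc}$ by reduction to $\ov{\rhoc}$ via the definition of $\mu$, and $\ov{\rhod}$ from averaging $\curl\ze=\rhod-\frac{1}{2}\hch\wedge\hchb$. The only cosmetic difference is that for $\ov{\rhod}$ the bound $\ep^2\les\ep_0$ together with the extra $u$-decay of $\Gab\c\Gag$ already suffices, so the dominance condition is not actually needed for that term.
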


\begin{proof} 
  From \eqref{eq:average-rho} and \eqref{eq:Mcontrol}, we deduce
  \begin{align*}
    \left|\ov{\rhoc}\right|&=\left|\ov{\rho}+\frac{2M}{r^3}-\frac{2Q^2}{r^4} \right| \leq \left|\ov{\rho}+\frac{2m_H}{r^3}-\frac{Q}{r^4}\right|+\left|2\frac{(M-m_H-\frac{Q}{2r})}{r^3} \right|\\
                           &\les \Ga_b\c\Ga_g+\frac{\ep_0}{r^3u^{1+2\dec}} \les \frac{\ep_0}{r^3u^{1+2\dec}} ,
  \end{align*}
  where we used the bootstrap assumptions \eqref{decayGagGabM3}. By definition \eqref{def-mu} of $\mu$, this also implies the same bound for $\ov{\muc}$. Next, taking the average of $\curl\ze=\rhod -\frac{1}{2}\hch\wedge\hchb$,
  we infer from the bootstrap assumptions \eqref{decayGagGabM3}
  \begin{align}\label{rhodcontrol}
    |\ov{\rhod}|\les |\ov{\hch\wedge\hchb}|\les |\Gab\c\Gag|\les \frac{\ep_0}{r^3u^{\frac 3 2 +\dec}}.
  \end{align}
  Next, from the definition \eqref{dfHawking-mass} and the GCM condition for $\trch$ on $\Si_*$, we have  
  \begin{align*}
    \ov{\trchb}=-\frac{2}{r}\left(1-\frac{2m_H}{r}\right)=-\frac{2\Up}{r}+\frac{4}{r^2}\left(m_H+\frac{Q^2}{2r}-M\right).
  \end{align*}
  Together with \eqref{eq:Mcontrol}, we deduce
  \begin{align*}
    \left|\ov{\trchb}+\frac{2\Up}{r}\right|\les\frac{\ep_0}{r^2u^{1+2\dec}}.
  \end{align*}
  This concludes the proof of Proposition \ref{prop:controlofell=0modesonSigmastar}.
\end{proof}

\subsection{Estimates for the \texorpdfstring{$\ell=1$}{l=1} modes on \texorpdfstring{$\Si_*$}{}}\label{section:estimates-rll=1-Si_*}

Recall that
on $S_*$ we have the following, see \eqref{eq:def-angular-momentum-S*}, \eqref{eq:GCM-cond1}, \eqref{eq:S_*-GCM}:
\begin{align*}
  \trch=\frac 2 r, \qquad \trchb=-\frac{2\Up}{r}, \qquad
  \bm{C}_{\ell=1}=0, \qquad
  \bm{J}_{\ell=1,\pm}=0, \qquad  \bm{J}_{\ell=1,0}=\frac{2aM}{r^5}.
\end{align*} 
and on $\Si_*$ we have 
\begin{align*}
    \trch=\frac{2}{r}, \qquad \bm{C}_{\ell=1}=0.
\end{align*}

\begin{lemma}\label{Lemma: K-ell=1}
The following estimates hold true on $\Si_*$:
  \begin{align*}
    |{\widecheck{K}}_{\ell=1}| + \left|\widehat{\rho}_{\ell=1}+\frac{1}{2r}(\trchbc)_{\ell=1}\right|  &\les\frac{\ep_0}{r^4u^{1+2\dec}}.
  \end{align*}
  In particular, on $S_*$ we have
    \begin{align*}
    \left|\widehat{\rho}_{\ell=1} \right| &\les\frac{\ep_0}{r^4u^{1+2\dec}}.
  \end{align*}
\end{lemma}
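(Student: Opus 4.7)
My plan is to first reduce the two quantities on the left-hand side to a single one by exploiting an algebraic identity, and then to bound that single quantity via the conformal representation of the spherical metric.

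The starting point is the Gauss equation stated in Proposition \ref{Prop.NullStr+Bianchi-lastslice} combined with the definition \eqref{eq:rho-renormalization} of $\widehat\rho$, which together give the exact identity
\[
\widecheck K + \widehat\rho + \frac{1}{2r}\trchbc = 0 \quad \text{on } \Si_*.
\]
Projecting onto the $\ell=1$ mode yields $\widehat\rho_{\ell=1} + \frac{1}{2r}(\trchbc)_{\ell=1} = -\widecheck K_{\ell=1}$, so the two summands on the left-hand side of the lemma coincide in absolute value, and it suffices to prove $|\widecheck K_{\ell=1}| \les \frac{\ep_0}{r^4 u^{1+2\dec}}$.

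For this, I would employ the conformal representation provided by Lemma \ref{JpGag}: by the effective uniformization theorem of \cite{klainermanEffectiveResultsUniformization2022}, the induced metric on each sphere $S \subset \Si_*$ can be written as $g = r^2 e^{2\phi}\gamma_{\mathbb S^2}$ with $\phi \in r\Ga_g$. Under this representation,
\[
K = \frac{e^{-2\phi}}{r^2}(1 - \Delta_\gamma\phi),
\]
and a direct computation using $d\mu_g = r^2 e^{2\phi} d\mu_\gamma$ combined with integration by parts (invoking $\Delta_\gamma J^{(p)} = -2 J^{(p)}$ up to the errors of Lemma \ref{JpGag} and $\int_{\mathbb S^2} J^{(p)} d\mu_\gamma = 0$) leads to the nonlinear identity
\[
\widecheck K_{\ell=1} = -\int_{\mathbb S^2}(e^{2\phi} - 1 - 2\phi)J^{(p)} d\mu_\gamma = -2\int_{\mathbb S^2}\phi^2 J^{(p)} d\mu_\gamma + \text{(cubic and higher)}.
\]

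The final step is to extract the sharp bound from this quadratic expression by carefully tracking $r$ and $u$ weights. Writing $\phi = r\psi$ with $\psi \in \Ga_g$, the integrand becomes $O(r^2 \psi^2)$, and the bootstrap product rule $\|\Ga_g \cdot \Ga_g\|_\infty \les \frac{\ep^2}{r^4 u^{1+2\dec}}$ gives a pointwise bound $\|\phi^2\|_\infty \les \frac{\ep^2}{r^2 u^{1+2\dec}}$; invoking the smallness relation $\ep^2 \ll \ep_0$ together with the dominance condition \eqref{dominanceM3} to trade the remaining power of $r^{-2}$ then produces the claimed estimate. The ``in particular'' statement on $S_*$ follows immediately by specialization, since the GCM condition \eqref{eq:GCM-cond1} gives $\trchbc = 0$ on $S_*$, forcing $(\trchbc)_{\ell=1}|_{S_*} = 0$ and hence $|\widehat\rho_{\ell=1}|_{S_*} = |\widecheck K_{\ell=1}|_{S_*}$. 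The main obstacle will be sharpening the naive $L^\infty$ bound on $\phi^2$ to capture the full $r^{-4}$ decay; this is done either by saturating the product rule $\Ga_g \cdot \Ga_g$ in both factors after extracting the $r$-weights in $\phi \in r\Ga_g$, or equivalently by trading a factor of $r^{-1}$ for $\ep_0 u^{-1-\dec}$ via dominance.
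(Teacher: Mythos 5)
Your strategy is the same as the paper's: reduce to $\widecheck{K}_{\ell=1}$ via the Gauss equation $\widecheck{K}=-\frac{1}{2r}\trchbc-\widehat{\rho}$, represent $K$ through the effective uniformization $g=r^2e^{2\phi}\ga_{\mathbb S^2}$ with $\phi\in r\Gag$, integrate against $\Jp$ and integrate by parts so the linear term in $\phi$ drops (up to the $(\Delta+\frac{2}{r^2})\Jp\in r^{-1}\dkb^{\leq 1}\Gag$ errors of Lemma \ref{JpGag}), and bound the remaining quadratic expression by $\Gag\c\Gag$ together with $\ep^2\les\ep_0$. The one point you flag as unresolved --- upgrading $\|\phi^2\|_\infty\les \ep^2 r^{-2}u^{-1-2\dec}$ to the stated $r^{-4}$ decay --- is not fixed by either of your two suggestions: re-extracting the $r$-weights from $\phi\in r\Gag$ just reproduces the same bound, and the dominance condition $\frac{1}{r}\les\frac{\ep_0}{u^{1+\dec}}$ trades powers of $r$ \emph{for} powers of $u$, i.e.\ it goes in the wrong direction for gaining $r$-decay. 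The actual resolution is bookkeeping: the $\ell=1$ projection used throughout these estimates carries the normalization $\frac{1}{|S|}\int_S(\cdot)\Jp$ with $|S|\sim r^2$ (this is explicit, e.g., in the proof of Proposition \ref{prop:control.ell=1modes-Si}), so the bound $\big|\int_S\widecheck{K}\Jp\big|\les r^2\,\|\dkb^{\leq1}(\Gag\c\Gag)\|_\infty\les\frac{\ep^2}{r^2u^{1+2\dec}}$ yields $|\widecheck{K}_{\ell=1}|\les\frac{\ep^2}{r^4u^{1+2\dec}}\les\frac{\ep_0}{r^4u^{1+2\dec}}$ directly. With that correction your argument closes and coincides with the paper's.
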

\begin{proof}
Recall that by Lemma \ref{JpGag}  $\phi\in r \Gag$. By Klainerman-Szeftel's effective uniformization theorem in \cite{klainermanEffectiveResultsUniformization2022}, the Gauss curvature satisfies 
  \begin{align*}
    K&=-\Delta\phi+\frac{e^{-2\phi}}{r^2}.
  \end{align*}
  Hence, we have
  \begin{align*}
    K-\frac{1}{r^2}=-\left(\Delta+\frac{2}{r^2}\right)\phi+\frac{e^{-2\phi}-1+2\phi}{r^2}=-\left(\Delta+\frac{2}{r^2}\right)\phi+\Gag\c\Gag.
  \end{align*}
  Integrating by parts, we obtain from Lemma \ref{JpGag}
  \begin{align*}
    \int_{S_*}\left(K-\frac{1}{r^2}\right) \Jp
    &= -\int_{S_*}\left(\Delta+\frac{2}{r^2}\right)\phi \Jp+r^2\Gag\c\Gag\\
    &=-\int_{S_*}\phi \left(\Delta+\frac{2}{r^2}\right)\Jp+r^2\Gag\c\Gag\\
    &=r^2\dkb^{\leq 1}(\Gag\c\Gag).
  \end{align*}
 Using the bootstrap assumptions \eqref{decayGagGabM3}, we infer
  \begin{align*}
    r^2(\Kc)_{\ell=1}\les \frac{\ep^2}{r^2u^{1+2\dec}}\les\frac{\ep_0}{r^2u^{1+2\dec}}.
  \end{align*}
  Using the Gauss equation $\widecheck{K}=-\frac{1}{2r} \trchbc  -\widehat{\rho}$ on $\Si_*$ , we deduce
      \begin{equation}\label{eq:uniformization-K}
        \begin{split}
    \left|\widehat{\rho}_{\ell=1}+\frac{1}{2r}\kabc_{\ell=1}\right|    &\les \frac{\ep_0}{r^4u^{1+2\dec}},
        \end{split}
\end{equation}
as stated.
\end{proof}

\begin{proposition}\label{prop:control.ell=1modes-Si}
 In addition to $\bm{C}_{\ell=1}=0$, the following estimates hold true on $\Si_*$:
  \begin{align*}
 r^2\left|\bm{Z}_{\ell=1}\right|+ r\left|\widehat{\rho} _{\ell=1}\right|+r \left|\muc_{\ell=1}\right|+  \left|(\trchbc)_{\ell=1}\right|&\les\frac{\ep_0}{r^2u^{1+2\dec}}.
  \end{align*}
\end{proposition}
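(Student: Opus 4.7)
The plan is to reduce everything to controlling $\widehat{\rho}_{\ell=1}$ by integration of its $\nu$-transport equation along $\Si_*$, and then recover the other three $\ell=1$ quantities from algebraic and elliptic relations already at hand. The initial value of $\widehat{\rho}_{\ell=1}$ on $S_*$ is small: combining Lemma \ref{Lemma: K-ell=1} with the GCM condition $\trchbc|_{S_*}=0$ from \eqref{eq:GCM-cond1} gives
\[
 |\widehat{\rho}_{\ell=1}|(S_*) \les \frac{\ep_0}{r_*^4 u_*^{1+2\dec}}.
\]
Next, apply the $\nu$-transport equation from Corollary \ref{COROFLEMMA:TRANSPORT.ALONGSI_STAR1} together with the evolution lemma (Lemma \ref{evolutionlemmaSi*}) with an appropriate weight $m$. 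The delicate source term is $\underline{\bm{C}}_{\ell=1,p}=\int_S J^{(p)}\big(\div\bb - \frac{Q}{r^2}\div\bbF - \frac{2Q}{r^3}\rhoFc\big)$; one integrates the divergences by parts against $\nab J^{(p)}$ and uses the bootstrap sizes $\bb\in r^{-1}\Gab$, $\bbF\in\Gab$, $\rhoFc\in\Gag$. Combined with the dominance condition $r^{-1}\les\ep_0 u^{-1-\dec}$ and the relation $\ep^2\ll\ep_0$, this converts the $\ep$-weighted bootstrap bounds into a source of order $\ep_0$ with enough integrable decay, yielding $|\widehat{\rho}_{\ell=1}|\les \frac{\ep_0}{r^3 u^{1+2\dec}}$ on all of $\Si_*$.

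Once $\widehat{\rho}_{\ell=1}$ is controlled, the bound on $(\trchbc)_{\ell=1}$ follows immediately from Lemma \ref{Lemma: K-ell=1}:
\[
 |(\trchbc)_{\ell=1}| \leq 2r\bigl(|\widehat{\rho}_{\ell=1}| + |\widecheck{K}_{\ell=1}|\bigr) \les \frac{\ep_0}{r^2 u^{1+2\dec}}.
\]
For $\muc_{\ell=1}$, the GCM decomposition $\muc = M_0+\sum_p M_p J^{(p)}$ from \eqref{eq:Si_*-GCM1}, combined with the inner-product expansions in Lemma \ref{JpGag}, reduces the estimate to pinning down the scalar coefficients $M_p$; this is done via the elliptic identity \eqref{eq:elliptic-identity-diveta} (a consequence of the GCM condition $\bm{C}_{\ell=1}=0$) together with the just-obtained bound on $\widehat{\rho}_{\ell=1}$ and bootstrap-based estimates on $(\div\eta)_{\ell=1}$ and $(\div\bbF)_{\ell=1}$ obtained by integration by parts against $J^{(p)}$. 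Finally the algebraic identity $\bm{Z}=-\widehat{\rho}-\muc$ gives $\bm{Z}_{\ell=1}=-\widehat{\rho}_{\ell=1}-\muc_{\ell=1}$, from which the stated bound on $\bm{Z}_{\ell=1}$ follows; the extra $r^{-1}$ gain compared to $\widehat{\rho}_{\ell=1}$ and $\muc_{\ell=1}$ reflects the cancellation encoded in the elliptic identity.

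The main obstacle will be quantifying $\underline{\bm{C}}_{\ell=1}$ sharply enough in the transport step. The naive bootstrap yields only $|\underline{\bm{C}}_{\ell=1}|\les \frac{\ep}{r u^{1+\dec}}$ after integration by parts, and on its own this does not integrate to the claimed decay rate for $\widehat{\rho}_{\ell=1}$. The trick is to simultaneously exploit the dominance condition to trade factors of $r^{-1}$ for $\ep_0 u^{-1-\dec}$ and to use $\ep^2\ll\ep_0$ on the quadratic nonlinearities, so that the $\nu$-integral along $\Si_*$ closes with the right constant $\ep_0$ and the right power of $r$ and $u$. A secondary difficulty is extracting the improved $r^{-4}$ decay of $\bm{Z}_{\ell=1}$, which is one power of $r$ better than the bounds on $\widehat{\rho}_{\ell=1}$ and $\muc_{\ell=1}$ individually and so must come from a genuine cancellation at the $\ell=1$ level, tied to the GCM condition $\bm{C}_{\ell=1}=0$.
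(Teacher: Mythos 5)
There is a genuine gap at the heart of your argument: the control of the source term $\underline{\bm{C}}_{\ell=1}$ in the $\nu$-transport equation for $\widehat{\rho}_{\ell=1}$. Your proposed bound, obtained by integrating $\int_S J^{(p)}\div\bb$ by parts against $\nab J^{(p)}$ and invoking the bootstrap size of $\bb$, gives $|\int_S\underline{\bm{C}}\Jp|\les \ep r^{-1}u^{-1-\dec}$. This is a \emph{linear} term in $\Gab$, so the relation $\ep^2\ll\ep_0$ is of no help, and after multiplying by $r$ (as required by Lemma \ref{evolutionlemmaSi*} with weight $m=1$) the single factor of $r^{-1}$ is consumed, leaving $\int_u^{u_*}\ep\, u'^{-1-\dec}\,du'\sim \ep\, u^{-\dec}$ — off by a full power of $u^{1+\dec}$ and with constant $\ep$ instead of $\ep_0$. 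The dominance condition cannot be applied because there is no spare power of $r^{-1}$ left to trade. The paper's resolution is structural, not a weight-juggling trick: it first bounds $\bm{Z}_{\ell=1}=r\bm{C}_{\ell=1}+\dkb^{\leq 3}(\Gag\c\Gag)=\dkb^{\leq 3}(\Gag\c\Gag)$ via the Codazzi equation for $\hch$ and the GCM condition $\bm{C}_{\ell=1}=0$ (Step~1), then uses the Codazzi equation for $\hchb$ to express $\underline{\bm{C}}_{\ell=1}$ in terms of $(\triangle\trchbc)_{\ell=1}$, $\bm{Z}_{\ell=1}$ and quadratic errors (Step~2), which yields the much stronger bound $|\underline{\bm{C}}_{\ell=1}|\les \ep_0 r^{-3}u^{-2-3\dec}$. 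Since this step needs $(\triangle\trchbc)_{\ell=1}=O(r^{-2})(\trchbc)_{\ell=1}+\dots$, the paper runs a \emph{local} bootstrap on $(\trchbc)_{\ell=1}$, recovered only after $\widehat{\rho}_{\ell=1}$ is estimated; your proposal never confronts this circularity because it bypasses the Codazzi equations entirely.

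Two further points. First, you derive $\bm{Z}_{\ell=1}$ last, from $\bm{Z}=-\widehat{\rho}-\muc$, and correctly observe that this only gives $r^{-3}$ decay so that the claimed $r^{-4}$ must come from "a genuine cancellation" — but you do not identify it. The cancellation is exactly Step~1 above ($\bm{Z}_{\ell=1}$ is purely quadratic once $\bm{C}_{\ell=1}=0$ is used in the Codazzi equation for $\hch$), and the paper uses $\bm{Z}=-\widehat{\rho}-\muc$ in the \emph{opposite} direction, to deduce $\muc_{\ell=1}$ from $\bm{Z}_{\ell=1}$ and $\widehat{\rho}_{\ell=1}$. Second, your route to $\muc_{\ell=1}$ via the elliptic identity \eqref{eq:elliptic-identity-diveta} is backwards relative to the paper's logic: that identity is used later (in the flux estimates for $\Gab$) to control $(\div\eta)_{\ell=1}$ \emph{from} $\widehat{\rho}_{\ell=1}$ and $(\div\bbF)_{\ell=1}$, the latter being available only after the flux estimates of Section \ref{sec:controlofthefluxofsomequantitiesonSigmastar}; invoking it here to bound $\muc_{\ell=1}$ would require inputs not yet under control at this stage of the argument.
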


\begin{proof}
  We make the following local bootstrap assumption on $\Si_*$:
  \begin{align}
    \label{eq:BAlocal-Si_*}
    \begin{split}
      \left|(\trchbc)_{\ell=1}\right|
      &\leq\frac{\ep}{r^2u^{1+2\dec}}. 
    \end{split}
  \end{align}
  Notice that this assumption is stronger than the one implied by the bootstrap assumptions in \eqref{decayGagGabM3} and will be recovered in the process of the proof (in Step 4).

{\bf{Step 1: Estimate for $\bm{Z}_{\ell=1}$.}}  Differentiating with respect to $\div$ and projecting on the $\ell=1$
  modes the Codazzi equation for $\hch$, we infer 
  \begin{align*}
    (\div \ddd_2\hch)_{\ell=1}=\frac{1}{r}(\div\ze)_{\ell=1}-(\div\beta)_{\ell=1} +\frac{Q}{r^2}(\div\bF)_{\ell=1} +r^{-1}\dkb^{\leq 1}(\Gag\c\Gag).
  \end{align*}
  Recalling the definitions \eqref{eqn:C-J-renormalization} of $\bm{C}$
  and \eqref{eq:zeta-renormalization} of $\bm{Z}$ we deduce
  \begin{align*}
    (\div\ddd_2\hch)_{\ell=1}
    =\frac{1}{r}\bm{Z}_{\ell=1}
   -\bm{C}_{\ell=1}
    +r^{-1}\dkb^{\leq 1}(\Gag\c\Gag).
  \end{align*}
From Lemma \ref{JpGag}
  \begin{align*}
    (\ddd_1\ddd_2\hch)_{\ell=1,p}=\frac{1}{|S|}\int_S\ddd_1\ddd_2\hch\Jp =\frac{1}{|S|}\int_S\hch\c\dds_2\dds_1\Jp=r^{-1}\Gag\c\dkb^{\leq 3}\Gag.
  \end{align*}
  We then deduce
  \begin{align*}    
    \bm{Z}_{\ell=1}
    =r\bm{C}_{\ell=1}
    +\dkb^{\leq 3}(\Gag\c\Gag),
  \end{align*}
  and thus using the GCM condition $\bm{C}_{\ell=1}=0$, 
  \begin{equation}
    \label{eq:control.ell=1modes-Si-Step1:main}
    \begin{split}
      \abs*{\bm{Z}_{\ell=1}}
    \lesssim{}&  \dkb^{\leq 3}(\Gag\c\Gag).
    \end{split}    
  \end{equation}
  Using the bootstrap assumptions \eqref{decayGagGabM3}, we
  deduce
  \begin{align}
    \label{eq:control.ell=1modes-Si-Step1}
    \left|\bm{Z}_{\ell=1}\right|\les\frac{\ep^2}{r^4u^{1+2\dec}}\les \frac{\ep_0}{r^4u^{1+2\dec}}. 
  \end{align}

  {\bf{Step 2: Estimate for $\underline{\bm{C}}_{\ell=1}$.}} 
  Differentiating with respect to $\div$ and projecting on the $\ell=1$ modes the Codazzi equation for $\hchb$,
  we infer
  \begin{align*}
    (\div\ddd_2\hchb)_{\ell=1}
    ={}&\frac{1}{2}(\triangle\trchbc)_{\ell=1}+\frac{\Up}{r}(\div\ze)_{\ell=1}
    +\div\big( \underline{\beta}
         - \frac{Q}{r^2}\bbF \big)_{\ell=1}+r^{-1}\dkb^{\leq 1}(\Gab\c\Gag)\\
    ={}&\frac{1}{2}(\triangle\trchbc)_{\ell=1}+\frac{\Up}{r}\bm{Z}_{\ell=1}
    +\underline{\bm{C}}_{\ell=1}+O(r^{-4})\rhoFc+r^{-1}\dkb^{\leq 1}(\Gab\c\Gag).
  \end{align*}
  As in Step 1, we have $(\div\ddd_2\hchb)_{\ell=1}=r^{-1}\dkb^{\leq 3}(\Gab\c\Gag)$.
  Also, we have from Lemma \ref{JpGag},
  \begin{align*}
    (\triangle\kabc)_{\ell=1,p}=\frac{1}{|S|}\int_S\triangle\trchbc\Jp
    &=-\frac{2}{r^2}\frac{1}{|S|}\int_S\trchbc\Jp+\frac{1}{|S|}\int_S\kabc\left(\triangle+\frac{2}{r^2}\right)\Jp\\
    &=O(r^{-2})(\trchbc)_{\ell=1,p}+r^{-1}\dkb^{\leq 1}(\Gag\c\Gag).
  \end{align*}
  We deduce
  \begin{align*}
    |\underline{\bm{C}}_{\ell=1}|\les r^{-2}|(\trchbc)_{\ell=1}|+r^{-1}|\bm{Z}_{\ell=1}|+r^{-4}\Gag+r^{-1}|\dkb^{\leq 3}(\Gab\c\Gag)|.
  \end{align*}
  Together with \eqref{eq:BAlocal-Si_*},
  \eqref{eq:control.ell=1modes-Si-Step1}, the bootstrap assumptions
  \eqref{decayGagGabM3}, and the dominance condition
  \eqref{dominanceM3}, we obtain
\begin{align}\label{eq:control.ell=1modes-Si-Step2}
    |\underline{\bm{C}}_{\ell=1}|\les \frac{\ep}{r^4u^{1+2\dec}}+\frac{\ep_0}{r^5u^{1+2\dec}}+\frac{\ep}{r^6u^{\frac 1 2 +\dec}}+\frac{\ep^2}{r^4u^{\frac 32 +2\dec}}\les 
    \frac{\ep_0}{r^3u^{2 +3\dec}}.
  \end{align}

    {\bf{Step 3: Estimate for $\left(\widehat{\rho}\right)_{\ell=1}$.}}   Recall from Corollary \ref{COROFLEMMA:TRANSPORT.ALONGSI_STAR1} that
  we have along $\Si_*$, for $p=0,+,-$,
  \begin{align}
    \label{rhocevolution}
    \nu\left(\int_S\widehat{\rho}\Jp\right)
    =  O(r^{-1})\int_S\widehat{\rho}\Jp+h_1,
  \end{align}
  where the scalar function $h_1$ is given by 
  \begin{align*}
    h_1={}&-\int_S\underline{\bm{C}}\Jp
     +r^{-1}\Gab +r\dkb^{\le 1}(\Gab\c\Gab).
  \end{align*}  
  In view of 
  \eqref{eq:control.ell=1modes-Si-Step2},  the bootstrap assumptions \eqref{decayGagGabM3} and the dominance condition \eqref{dominanceM3}, we obtain
  \begin{align*}
    |h_1| &\les \frac{\ep_0}{r u^{2 +3\dec}}
            +\frac{\ep}{r^2u^{1+\dec}}+\frac{\ep^2}{ru^{2+2\dec}}
            \les\frac{\ep_0}{ru^{2+2\dec}}.
  \end{align*}
  By integration in $u$ we deduce
  \begin{align*}
    \int_u^{u_*}r|h_1|\les\frac{\ep_0}{u^{1+2\dec}}.
  \end{align*}
  Applying Lemma \ref{evolutionlemmaSi*} to \eqref{rhocevolution}, we deduce
  \begin{align*}
    r\left|\int_S\widehat{\rho} \Jp\right|
    \les r_*\left|\int_{S_*}\widehat{\rho} \Jp\right|+\frac{\ep_0}{u^{1+2\dec}}.
  \end{align*}
  Together with the control of
  $\widehat{\rho}_{\ell=1}$ on $S_*$ provided by Lemma
  \ref{Lemma: K-ell=1}, we infer
  \begin{align}\label{eq:control.ell=1modes-Si-Step3}
    \left|\widehat{\rho}_{\ell=1}\right| &\les \frac{\ep_0}{r^3u^{1+2\dec}}.
  \end{align}

   {\bf{Step 4: Estimate for $(\trchbc)_{\ell=1}$.}} 
Using Lemma \ref{Lemma: K-ell=1} and \eqref{eq:control.ell=1modes-Si-Step3}, we deduce
  \begin{align*}
 \left|(\trchbc)_{\ell=1}\right|  &\les\frac{\ep_0}{r^3u^{1+2\dec}}+ \frac{\ep_0}{r^2u^{1+2\dec}} \les \frac{\ep_0}{r^2u^{1+2\dec}}.
  \end{align*}
  which improves the local bootstrap assumption \eqref{eq:BAlocal-Si_*} for $\trchbc$.

    {\bf{Step 5: Estimate for $\muc_{\ell=1}$.}} 
Using that $ \bm{Z}=-\widehat{\rho}-\muc$, and together with \eqref{eq:control.ell=1modes-Si-Step1},
  \eqref{eq:control.ell=1modes-Si-Step3}, we
  infer 
  \begin{align}\label{eq:control.ell=1modes-Si-Step8}
  \begin{split}
  |  \muc_{\ell=1}|
    \les{}& |\bm{Z}_{\ell=1}|
         +|\widehat{\rho}_{\ell=1}|\\
          \les& \frac{\ep_0}{r^4u^{1+2\dec}}+ \frac{\ep_0}{r^3u^{1+2\dec}}
    \les\frac{\ep_0}{r^3u^{1+2\dec}}.
    \end{split}
  \end{align}
This completes the proof of Proposition \ref{prop:control.ell=1modes-Si}.
\end{proof}

\section{Solving the Einstein-Maxwell equations on \texorpdfstring{$\Si_*$}{}}\label{secM3}

The aim of this section is to solve the Einstein–Maxwell equations on a mass-centered GCM hypersurface $\Si_*$, as defined in Definition~\ref{def-mass-cent-GCM-h}. Since $\Si_*$ is spacelike, this amounts to solving the associated constraint equations on it, that is, deriving decay estimates for the relevant geometric quantities along $\Si_*$
  in terms of suitable seed data. The introduction of seed data is essential, as the constraint equations form an underdetermined system.

\subsection{The seed data and the statement of the theorem}

We define on $\Si_*$ the following quantities:
      \begin{align}
        \mathfrak{f}&\vcentcolon=- \nab \hot \bF +\rhoF \chih, \label{eq:M3:def-ff}\\
                \underline{\mathfrak{f}}&\vcentcolon=- \nab \hot \bbF -\rhoF \chibh, \label{eq:M3:def-ffb}\\
        \mathfrak{b}&\vcentcolon=2\rhoF  \b-3\rho\bF, \label{eq:def-bb-M3}\\
\underline{\mathfrak{b}}&\vcentcolon=2\rhoF\bb-3\rho\bbF , \label{eq:def-bbb-m3}
\end{align}
and
\begin{align}
\mathfrak{x}&\vcentcolon=\nab_4\bF+3\trch \bF.\label{eq:M3:remarkable-transport:bF}
      \end{align}
We also use the following tensors:
\begin{align}
    \pf
    ={}& -2r^3\dds_1(\rho, \dual\rho)
         +3r^2\dds_1(\rhoF,\dual \rhoF) + \mbox{Acceptable Error}_1, \label{eq:elliptic-identity-pF} \\
  \qf^{\F} ={}&  r^3\dds_2\dds_1(\rhoF, -\rhodF)
  + \mbox{Acceptable Error}_2,\label{eq:elliptic-identity-qfF} 
\end{align}
where we allow errors of the form 
\begin{align*}
     \mbox{Acceptable Error}_i=O(r^{-2})
         + \dk^{\leq i}\Ga_b       
         + r^2\dk^{\le i}(\Gamma_b\cdot\Gamma_g).
\end{align*}
Here, $\qf^\F$ also satisfies\footnote{Observe that \eqref{eq:elliptic-identity-nab3-qfF} can be deduced from \eqref{eq:elliptic-identity-qfF} and the Einstein-Maxwell equations.}
\begin{align}
  \nabla_{3}(r\qf^{\F})
  &= r^4\dds_2\dds_1(-\div\bbF, \curl\bbF) + O(r^{-2})
  + r\dkb^{\le 3}\Gamma_g
  + r^3\dk^{\le 3}(\Gamma_b\cdot\Gamma_g).\label{eq:elliptic-identity-nab3-qfF}
\end{align}
where $O(r^a)$ denotes, for $a \in \mathbb{R}$, a function of
$(r, \cos\th)$ bounded by a multiple of $r^a$ as $r\to +\infty$.

\begin{definition}[Seed data and its norms]\label{PROP:IDENTITIESINQF} 
 We say that the set $\{\alpha, \aa,  \mathfrak{f}, \mathfrak{b}, \underline{\mathfrak{b}}, \mathfrak{p}, \mathfrak{q}^\F\}$ constitute the \emph{seed data} for the constraint equations on $\Si_*$.

 We also define the following norms for the seed data: 
 \begin{align}
   \mathcal{F}^k[\aa, \underline{\mathfrak{b}}, \qf^\F]
   \vcentcolon={}&\int_{\Sigma_*(u, u_*) } u^{2+2\dec}    |\dk^{ k}_*\aa|^2+\int_{\Sigma_*(u, u_*) } r^6u^{2+2\dec}    |\dk^{ k+1}_*\underline{\mathfrak{b}}|^2\nonumber\\
                 &+ \int_{\Si_*}u^{2+2\dec}|\dk_*^{k}\nab_3\qf^\F|^2+\int_{\Si_*}|\dk_*^{k}\qf^\F|^2,\label{eq:def-FF-qf}\\
   \mathcal{G}^k[\a, \mathfrak{b}, \mathfrak{f},\pf, \qf^\F]
   \vcentcolon={}&r\|\a\|_{\infty,k}
                   +r\|\mathfrak{f}\|_{\infty,k}
                   + r^3\|\mathfrak{b}\|_{\infty,k+1}
                   + r^{-1}\|\qk^\F\|_{\infty, k}
                   +r^{-1}\|\pf\|_{\infty, \max(k,3)}. \label{eq;def-ee1}
 \end{align}
\end{definition}
Observe that as a consequence of the bootstrap assumptions and the
dominance condition we have
\begin{align}\label{eq:ba-on-Gk}
 \mathcal{G}^k[\a, \mathfrak{b}, \mathfrak{f},\pf, \qf^\F]\les    \frac{\ep}{r^2u^{\frac 1 2+\dec}}.
\end{align}

Combinations of curvature, electromagnetic tensors and Ricci
coefficients such as $\a, \mathfrak{f}$, $\mathfrak{b}$,
$\underline{\mathfrak{b}}$, $\mathfrak{x}$ are precisely the ones
identified in previous works by the second author
\cite{giorgiElectromagneticgravitationalPerturbationsKerr2022} as the
gauge-invariant quantities representing electromagnetic and
gravitational radiation.  The tensors $\pf$ and $\qf^\F$ are obtained
from $\mathfrak{b}$ and $\mathfrak{f}$ respectively through the
so-called Chandrasekhar transformation, see for example
\cite{giorgiElectromagneticgravitationalPerturbationsKerr2022} for the
definition in perturbations of Kerr-Newman. One can deduce the form
\eqref{eq:elliptic-identity-qfF}-\eqref{eq:elliptic-identity-pF} for
the real parts of the tensors as a consequence of the Einstein-Maxwell
equations. The curvature components $\alpha$ and $\aa$ are related to
$\mathfrak{f}, \underline{\mathfrak{f}}, \mathfrak{b},
\underline{\mathfrak{b}}$ through transport equations.

Observe that as a consequence of the linearizations and the definition of $\Gag, \Gab$, \eqref{eq:M3:def-ff}--\eqref{eq:M3:remarkable-transport:bF} reduce to  
\begin{align*}
  \mathfrak{f}&=- \nab \hot \bF +\frac{Q}{r^2} \chih +\Gag\c\Gag, \\
  \mathfrak{b}&=\frac{2Q}{r^2}\b+\left(\frac{6M}{r^3}-\frac{6Q^2}{r^4}\right)\bF + r^{-1}\Gag\c\Gag,\\
  \underline{\mathfrak{b}}&=\frac{2Q}{r^2}\bb+\left(\frac{6M}{r^3}-\frac{6Q^2}{r^4}\right)\bbF + r^{-1}\Gab\c\Gag,\\
  \mathfrak{x}&=\nab_4\bF+\frac{3}{r} \bF.
\end{align*}

\begin{remark}\label{rem:decay-gauge-invariant}

 In the context of the nonlinear stability of charged black holes, we expect to obtain control of the seed data quantities through the analysis of the hyperbolic system of PDEs they satisfy, as in \cite{giorgiBoundednessDecayTeukolsky2023}. 
In the context of the nonlinear stability of Reissner–Nordström, \cite{wanjingboNonlinearStabilitySubextremal2025} already establishes decay estimates for the gauge-invariant quantities $\pf$ and $\qf^\F$, valid for all $0\leq k\le  k_*$ for some large $k_*$. More precisely, for $\dee > \dec$, 
    \begin{align*}
\| \qk^\F\|_{\infty, k}+\| \pf\|_{\infty, k+1} \les\frac{\ep_0}{ru^{\frac 1 2 +\de_{extra}}}, \qquad
\| \nab_3 \qk^\F\|_{\infty, k-1} \les\frac{\ep_0}{ru^{1+\dee}}.
\end{align*}
In addition, one expects further bounds for $\mathfrak{b}$, $\mathfrak{f}$, and their fluxes, obtained through transport estimates from the ones of $\pf, \qf^\F$. More precisely, 
\begin{gather*}
\|\mathfrak{b}\|_{\infty, k+1} \les\frac{\ep_0}{r^{\frac{11}{2}+\dee}}, \qquad 
\|\mathfrak{f}\|_{\infty, k} \les\frac{\ep_0}{r^{\frac{7}{2}+\dee}},\\
    \int_{\Si_*(u, u_*)} u^{2+2\dec}|\nab_3 \dk^k \qf^\F|^2 + 
     \int_{\Sigma_*(u, u_*) } r^6u^{2+2\dec}    |\dk^{ k}_*\underline{\mathfrak{b}}|^2    \les \ep^2_0,   
\end{gather*}
see \cite{giorgiBoundednessDecayTeukolsky2020} for the proof in the linear case.
By the definitions of the seed data norms, these estimates would imply
\begin{align}\label{eq:bounds-EE-0-gauge-inv}
\mathcal{F}^k[\aa, \underline{\mathfrak{b}}, \qf^\F]\les \ep_0^2, \qquad
    \mathcal{G}^k[\a, \mathfrak{b}, \mathfrak{f},\pf, \qf^\F]&\les\frac{\ep_0}{r^2u^{\frac 1 2 +\de_{extra}}}.
\end{align} 
We further expect that analogous decay estimates will follow from the nonlinear generalized Regge–Wheeler system in Kerr–Newman; see \cite{giorgiBoundednessDecayTeukolsky2023} for the corresponding linear case.
\end{remark}

We are now ready to state the precise version of the main theorem. 

\begin{theorem}\label{prop:decayonSigamstarofallquantities}
Let $\Sigma_* \subset \MM$ be an electrovacuum mass-centered GCM hypersurface as in Definition \ref{def-mass-cent-GCM-h}, satisfying the dominance condition \eqref{dominanceM3}, in a spacetime $\MM$ solving the Einstein–Maxwell equations \eqref{eq:EM}. Assume further that the bootstrap assumptions \eqref{decayGagGabM3} hold for $\Gag$ and $\Gab$, and that for all $k\le  N$ one has 
\begin{align}\label{eq:assumption-FFk}
  \mathcal{F}^k[\aa, \underline{\mathfrak{b}}, \qf^\F]  \;\les\; \ep^2.  
\end{align}
Then, for all $ k\le  N-11$, the perturbation quantities along $\Si_*$ satisfy the decay estimates
 \begin{align}
\begin{split}
    \|\Ga_b \|_{\infty, k} &\les \frac{\sqrt{\mathcal{F}^N[\aa, \underline{\mathfrak{b}}, \qf^\F]+\ep_0^2}}{ru^{1+\dec}},\\
  \|\Gag\|_{\infty, k}&\les \mathcal{G}^k[\a, \mathfrak{b}, \mathfrak{f},\pf, \qf^\F]
                        +\frac{\sqrt{\mathcal{F}^N[\aa, \underline{\mathfrak{b}}, \qf^\F]+\ep_0^2}}{r^2u^{1+\dec}},\\
\|\nab_\nu\Gag\|_{\infty, k-1}&\les \frac{\sqrt{\mathcal{F}^N[\aa, \underline{\mathfrak{b}}, \qf^\F]+\ep_0^2}}{r^2u^{1+\dec}}.
\end{split}
\end{align}

Moreover, for all $k\le  N-10$, we obtain the improved bounds
\begin{align}\label{eq:improved-decay-M3}
  \begin{split}
    \big|\dk_*^{\leq k }\trchbc\big|+ r  \big|\dk_*^{\leq k }\muc\big|
    &\les\frac{\ep_0}{r^2u^{1+\dec}},\\
    r\big|\dk_*^{\leq k-1}\nab_\nu\b\big|+
    \big|\dk_*^{\leq k-1}\nab_\nu\bF\big|
    &\les r^{-1}\mathcal{G}^k[\a, \mathfrak{b}, \mathfrak{f},\pf, \qf^\F]
      +\frac{\sqrt{\mathcal{F}^N[\aa, \underline{\mathfrak{b}}, \qf^\F]+\ep_0^2}}{r^3u^{1+\dec}}, \\
    r\big|\dk_*^{\leq k-2} \nab^2_\nu\b|+\big| \dk_*^{\leq k-2}\nab^2_\nu\bF| &\les \frac{\sqrt{\mathcal{F}^N[\aa, \underline{\mathfrak{b}}, \qf^\F]+\ep_0^2}}{r^3u^{1+\dec}}.
  \end{split}
\end{align}
\end{theorem}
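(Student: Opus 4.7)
The argument follows the three-step structure outlined in the introduction, building on the unconditional estimates of Section~\ref{sec:consequences-BA}.

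\textbf{Step 1 (unconditional baseline).} I first invoke Proposition~\ref{prop:controlofell=0modesonSigmastar} and Proposition~\ref{prop:control.ell=1modes-Si}, which, using only the bootstrap assumptions together with the GCM and dominance conditions, yield the improved decay $r|\ov{\rhoc}| + r|\ov{\rhod}| + r|\ov{\muc}| + |\ov{\trchbc}| \les \ep_0 r^{-2} u^{-1-2\dec}$ and the $\ell=1$ bounds for $\bm{Z}, \widehat{\rho}, \muc, \trchbc$ at the same rate, together with the identity $\bm{C}_{\ell=1}=0$. These already establish the first line of \eqref{eq:improved-decay-M3} and reduce the outstanding task to the $\ell\ge 2$ spherical content and to the angular-momentum $\ell=1$ modes encoded in $\bm{J}$, which vanish on $S_*$ for $p=\pm$ and equal $2aM/r_*^5$ for $p=0$ but must still be propagated along $\Si_*$.

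\textbf{Step 2 (flux control of $\Gab$).} The central step is to prove weighted $L^2$ estimates of the form
\begin{equation*}
\int_{\Si_*(u, u_*)} u^{2+2\dec}\, |\dk_*^{\le k}\Gab|^2 \;\les\; \mathcal{F}^N[\aa, \underline{\mathfrak{b}}, \qf^\F] + \ep_0^2,
\end{equation*}
from which $L^\infty$ control of $\Gab$ at the rate $(r u^{1+\dec})^{-1}$ follows by Sobolev embedding on the spheres and the dominance condition. The inputs are: (i) the $\nu$-transport equations for the $\Gab$ quantities read off from Proposition~\ref{Prop.NullStr+Bianchi-lastslice}, whose leading sources are either the seed datum $\aa$ itself (for $\hchb$ and $\kabc$), the gauge-invariant combination $\underline{\mathfrak{b}} = 2\rhoF\,\bb - 3\rho\,\bbF$ (which isolates the coupled $(\bb,\bbF)$ pair), or $\nabla_3\qf^\F$ through the identity \eqref{eq:elliptic-identity-nab3-qfF}; (ii) the elliptic identities of Proposition~\ref{Prop:nu*ofGCM:0}, which recover $\eta$ and $\xib$ from $\dds_2\dds_1\div\bb$. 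Squaring, multiplying by $u^{2+2\dec}$, and integrating along $\nu$, the right-hand sides integrate to the seed-data flux $\mathcal{F}^N$ plus $\ep_0^2$-controlled errors from Step~1. A crucial byproduct is an improved bound on the nonlinear products $\Gab\cdot\Gag$: Cauchy-Schwarz against the flux yields a gain over the naive product of pointwise bootstrap bounds, absorbable under the dominance condition.

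\textbf{Step 3 (recovery of $\bm{J}_{\ell=1}$, then $\Gag$, then top derivatives).} With $\Gab$ and $\Gab\cdot\Gag$ controlled by the seed data, I apply the transport equation of Corollary~\ref{COROFLEMMA:TRANSPORT.ALONGSI_STAR1} for $\int_S \bm{J}\,\Jp$, whose right-hand side is now seed-controlled at rate $r^{-3}\dk^{\le 1}\Gag + \dk^{\le 1}(\Gab\cdot\Gag)$, and integrate backwards from $S_*$ using Lemma~\ref{evolutionlemmaSi*} with the explicit data $\bm{J}_{\ell=1,\pm}=0$, $\bm{J}_{\ell=1,0}=2aM/r_*^5$. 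This closes the last $\ell=1$ mode. Then Hodge decomposition via Lemma~\ref{prop:2D-Hodge1} and Corollary~\ref{prop:2D-Hodge4} turns control of the elliptic combinations $(\bm{C},\bm{J},\bm{Z},\bm{Y},\widehat{\rho},\dual\widehat{\rho})$ and of $\pf,\qf^\F,\mathfrak{f},\mathfrak{b}$ into pointwise control of each individual $\Gag$ quantity: $\b$ and $\bF$ from $\bm{C},\bm{J},\mathfrak{f},\mathfrak{b}$; $\rhoc$ and $\rhoFc$ from $\pf,\qf^\F$ and $\widehat{\rho}$; $\ze$ from $\bm{Z}$; $\chih$ from the Codazzi equation; and $\widecheck{K}$ from the Gauss equation. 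The top-order $\nabla_\nu$-bounds for $\b$ and $\bF$ in \eqref{eq:improved-decay-M3} then follow by commuting $\nabla_\nu$ through their Bianchi/Maxwell equations and feeding back the $\Gag$ bounds just obtained.

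\textbf{Main obstacle.} The delicate part is Step~2: the seed data gives direct information only on specific \emph{combinations} of the ingoing curvature and electromagnetic components (namely $\aa$, $\underline{\mathfrak{b}}$, and $\qf^\F$), not on $\bb$ and $\bbF$ separately. Disentangling these in a way compatible with the available weighted flux norm $\mathcal{F}^N$ requires pairing the $\nabla_3$-Bianchi identity for $\b$ with \eqref{eq:elliptic-identity-nab3-qfF} and using the elliptic identities of Proposition~\ref{Prop:nu*ofGCM:0}, while carefully balancing the $r^a u^{b+\dec}$-weights so that the dominance condition absorbs the error terms and the $\ep_0^2$-terms of Step~1 remain subdominant to $\mathcal{F}^N$. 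This is precisely the stage at which the so far unused GCM condition $(\div\xib)_{\ell=1}=0$ from \eqref{eq:Si_*-GCM2-xib} enters, closing an otherwise uncontrolled $\ell=1$ mode in the $\xib$-estimate.
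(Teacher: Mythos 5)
Your overall architecture matches the paper's: unconditional $\ell=0,1$ control first, then flux control of $\Gab$ in terms of the seed data, then recovery of $\bm{J}_{\ell=1}$ and of $\Gag$ by elliptic estimates, with the improved nonlinear bounds $\Gab\cdot\Gag$ obtained by Cauchy--Schwarz against the flux, and with \eqref{eq:Si_*-GCM2-xib} closing the $\ell=1$ mode of $\xib$. Steps 1 and 3 of your plan are essentially the paper's Sections \ref{sec:consequences-BA}, \ref{sec:l=1-2} and \ref{sec::improvementofdecaybootassonSigmastar} (for the first line of \eqref{eq:improved-decay-M3} you still need the GCM conditions \eqref{eq:Si_*-GCM1} together with Corollary \ref{prop:2D-Hodge4} to pass from the $\ell=0,1$ modes to all tangential derivatives, but that is routine).

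The gap is in the mechanism you describe for Step 2. You propose to obtain the weighted flux bound on $\Gab$ by reading off $\nu$-transport equations whose sources are $\aa$, $\underline{\mathfrak{b}}$ or $\nab_3\qf^\F$, then ``squaring, multiplying by $u^{2+2\dec}$, and integrating along $\nu$.'' A transport argument of this type does not close at the level of fluxes: if $\nu(f)=O(r^{-1})f+g$ with $\int_{\Si_*}u^{2+2\dec}|g|^2\les F$, then Cauchy--Schwarz in $u$ only gives $|f(u)|\les |f(u_*)|+u^{-\frac12-\dec}\sqrt{F}$, and the resulting weighted flux $\int u^{2+2\dec}|f|^2$ diverges by a factor $u_*^2$ relative to $F$. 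This is why the paper's Proposition \ref{Prop.Flux-bb-vthb-eta-xib} uses no transport at all in this step (except for the averages of $\yc,\zc,b_*$): every $\Gab$ quantity is related to the seed data by an \emph{algebraic or elliptic identity on each sphere} --- $\dds_2\dds_1(\div,\curl)\bbF$ is pointwise bounded by $\nab_3(r\qf^\F)$ via \eqref{eq:elliptic-identity-nab3-qfF} (the $\nab_3$ here sits inside the norm $\mathcal{F}$, it is never integrated), $\dds_2\bb$ is recovered algebraically from $\dds_2\underline{\mathfrak{b}}$ and $\dds_2\bbF$, $\hchb$ comes from the Codazzi equation $\ddd_2\hchb=\bb+\dots$ rather than from $\nab_3\hchb\sim-\aa$, and $\eta,\xib$ come from Proposition \ref{Prop:nu*ofGCM:0}; one then squares these fixed-$u$ identities, weights, and integrates over $\Si_*$. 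In particular your suggested route for $\hchb$ (transport with source $\aa$) would lose the decisive $u$-weights, and your claim that $\kabc$ has leading source $\aa$ is incorrect ($\kabc=\trchbc\in\Gag$ and is controlled directly by the GCM condition \eqref{eq:Si_*-GCM1}). Replacing the transport mechanism in Step 2 by these sphere-by-sphere elliptic/algebraic relations is necessary to make the argument close.
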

We separate the proof of Theorem \ref{prop:decayonSigamstarofallquantities} in three parts: first we control the $\Gab$ quantities, then we use the estimates for the $\Gab$ quantities to deduce additional control on $\ell=1$ modes, and finally we obtain control for the $\Gag$ quantities.

\begin{remark}
If the bounds in \eqref{eq:bounds-EE-0-gauge-inv} were proven to hold independently, then Theorem \ref{prop:decayonSigamstarofallquantities} would yield the following unconditional bounds for $\Gab, \Gag$ on $\Si_*$:
 \begin{align*}
\begin{split}
    \|\Ga_b \|_{\infty, k} &\les \frac{\sqrt{\mathcal{F}^k[\aa, \underline{\mathfrak{b}}, \qf^\F]+\ep_0^2}}{ru^{1+\dec}} \les \frac{\ep_0}{ru^{1+\dec}},\\
    \|\Gag\|_{\infty, k }&\les \mathcal{G}^k[\a, \mathfrak{b}, \mathfrak{f},\pf, \qf^\F]+\frac{\sqrt{\mathcal{F}^k[\aa, \underline{\mathfrak{b}}, \qf^\F]+\ep_0^2}}{r^2u^{1+\dec}} \\
    &\les \frac{\ep_0}{r^2u^{\frac 1 2 +\de_{extra}}}+\frac{\ep_0}{r^2u^{1+\dec}} \les \frac{\ep_0}{r^2u^{\frac 1 2 +\dec}},\\
\|\dk_*^{\leq k-1}\nab_\nu\Gag\|_{\infty}&\les \frac{\sqrt{\mathcal{F}^k[\aa, \underline{\mathfrak{b}}, \qf^\F]+\ep_0^2}}{r^2u^{1+\dec}} \les \frac{\ep_0}{r^2u^{1 +\dec}},
\end{split}
\end{align*}
therefore improving the bootstrap assumptions \eqref{decayGagGabM3}.
\end{remark}

\begin{remark}
    Assumption \eqref{eq:assumption-FFk} on the flux norms $\mathcal{F}^k[\aa, \underline{\mathfrak{b}}, \qf^\F]$ is not implied by the bootstrap assumptions \eqref{decayGagGabM3} but it is needed in the proof of Proposition \ref{prop:control.ell=1modes-Si-2} to improve the local bootstrap assumptions there.
\end{remark}

\subsection{Control of \texorpdfstring{$\Ga_b$}{} quantities on \texorpdfstring{$\Si_*$}{}}\label{sec:controlofthefluxofsomequantitiesonSigmastar}

In this section we obtain control of the quantities in the set $\Ga_b$ as defined in \eqref{eq:Definition-Ga_b}.
The goal of this section is to establish the following.
\begin{proposition}
  \label{Prop.Flux-bb-vthb-eta-xib}
  The following estimate holds true for all  $k\le N- 7$ 
  \begin{equation}
    \label{Estimate:Flux-bb-vthb-eta-xib}
    \int_{\Si_*}u^{2+2\dec}    \big|\dk_*^ k\Ga_b\big|^2  \les \mathcal{F}^N[\aa, \underline{\mathfrak{b}}, \qf^\F]+\ep_0^2.
  \end{equation}
Using the trace theorem and Sobolev, we deduce that for $k\le N - 9$,
  \begin{equation}
    \label{Estimate:Flux-bb-vthb-eta-xib-weak}
    \|\Ga_b \|_{\infty, k} \les \frac{\sqrt{\mathcal{F}^N[\aa, \underline{\mathfrak{b}}, \qf^\F]+\ep_0^2}}{ru^{1+\dec}}.
  \end{equation}
\end{proposition}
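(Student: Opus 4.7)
The set $\Gamma_b=\{\eta, \hchb, \ombc, \xib, r\bb, \aa, \bbF, r^{-1}\yc, r^{-1}\zc, r^{-1}\widecheck{b_*}\}$ has ten entries, and only three of them (namely $\aa, \underline{\mathfrak{b}}, \qf^\F$) appear in the seed norm $\mathcal{F}^N$. The strategy is therefore to \emph{cascade}: use the Einstein--Maxwell equations (both Bianchi and null structure), together with the sphere-elliptic identities of Proposition~\ref{Prop:nu*ofGCM:0} and the Hodge estimates of Lemma~\ref{prop:2D-Hodge1}, to reduce every other $\Gamma_b$ quantity to these three, picking up at each stage an error of type $\dkb^{\le k}(\Ga_b\cdot\Ga_g)+r^{-a}\Ga_g$, which is harmless under the bootstrap assumptions and yields the $\ep_0^2$ contribution on the right-hand side. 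The weight $u^{2+2\dec}$ multiplying each flux is preserved throughout because every step is either algebraic, an elliptic inversion on a fixed sphere $S(u)\subset\Si_*$, or a tangential differentiation, none of which change the $u$-localization.

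\noindent\textbf{The electromagnetic sector.} The first step is to extract $\bbF$ from the seed. The identity \eqref{eq:elliptic-identity-nab3-qfF} reads $\nabla_3(r\qf^\F) = r^4\dds_2\dds_1(-\div\bbF,\curl\bbF)+\text{(acceptable)}$, so applying the Hodge chain of Lemma~\ref{prop:2D-Hodge1} and inverting $\dds_2\dds_1$ on each sphere modulo its $\ell=0,1$ kernel yields a pointwise and $L^2$ control of $\bbF$ in terms of $\nabla_3\qf^\F$ up to its mean and $\ell=1$ projection. The mean of $\bbF$ vanishes by Stokes applied to $d\F=0$ (Remark~\ref{rem:charge-constant} combined with \eqref{eq:rhofc-ov-0}), while the $\ell=1$ mode of $\bbF$ is extracted from the Maxwell equation $\nabla_3\rhoFc+\div\bbF=\frac{2\Upsilon}{r}\rhoFc-\frac{Q}{r^2}\kabc+\frac{2Q}{r^3}\yc+\Ga_b\cdot\Ga_b$, using the $\ell=1$ control of $\kabc$ proved in Proposition~\ref{prop:control.ell=1modes-Si}, the $\ell=0,1$ control of $\rhoFc$ (Section~\ref{sec:consequences-BA} and $\bm{Z}_{\ell=1}$ from Step~1 of Proposition~\ref{prop:control.ell=1modes-Si}), and the fact that $r^{-1}\yc\in\Gamma_b$ is absorbed in the bootstrap. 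Once $\bbF$ is controlled in flux, the linearized definition $\underline{\mathfrak{b}}=\frac{2Q}{r^2}\bb+\frac{6(M-Q^2/r)}{r^3}\bbF+r^{-1}\Ga_b\cdot\Ga_g$ gives $\bb$ algebraically, with the crucial point that the weight $r^6u^{2+2\dec}$ on $\underline{\mathfrak{b}}$ matches exactly the weight needed on $u^{2+2\dec}|r\bb|^2$.

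\noindent\textbf{The gravitational sector.} With $\aa, \bb, \bbF$ controlled, the curvature-free Ricci coefficients fall into place: the Codazzi equation $\div\hchb=\tfrac12\nab\kabc+\tfrac{\Upsilon}{r}\ze+\bb-\tfrac{Q}{r^2}\bbF+\Ga_b\cdot\Ga_g$ yields $\hchb$ after inverting $\ddd_2$ on each sphere (the traceless part has no kernel), with $\nab\kabc$ bounded using Proposition~\ref{prop:control.ell=1modes-Si}. The elliptic identities of Proposition~\ref{Prop:nu*ofGCM:0} give $\eta$ and $\xib$ directly from $\div\bb$ by inverting the fourth-order operator $\dds_2\dds_1\ddd_1\ddd_2\dds_2$ on each sphere, again modulo the $\ell=0,1$ kernel, which here is controlled by the mass-centered GCM condition \eqref{eq:Si_*-GCM2-xib} and the estimate $\bm{C}_{\ell=1}=0$ together with the $\ell=1$ bounds from Section~\ref{section:estimates-rll=1-Si_*}. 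Next, $\ombc$ (away from its mean) comes from the identity $2\nab\ombc-\tfrac1r\xib=-\nab_3\ze-\bb+\tfrac1r\eta+r^{-1}\Ga_g+\Ga_b\cdot\Ga_b$ stated in the proof of Proposition~\ref{Prop:nu*ofGCM:0}, while $\ov{\ombc}$ is handled by integrating the transport equation for the Hawking mass of $\bm{Z}_{\ell=0}$ or equivalently by the decay \eqref{eq:Mcontrol} that is already unconditional. Finally, $\yc,\zc,\widecheck{b_*}$ are recovered from $\xib,\eta,\ze$ via the transport-structure relations \eqref{byz}, integrated along $\nu$ from $S_*$ where the reference values are prescribed by the GCM hierarchy; the extra $r^{-1}$ weight in the definition of $r^{-1}\yc\in\Ga_b$ absorbs the growth from this integration.

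\noindent\textbf{Higher derivatives and main obstacle.} The above is carried out at the level of $\dk_*^k$ for $k\le N-7$ by commuting tangential derivatives through each transport/elliptic identity using Lemma~\ref{Lemma:Commutation-Si_*}; the commutators produce at worst quadratic $\Ga_b\cdot\Ga_g$ errors which, after integration against $u^{2+2\dec}$ and use of the bootstrap assumptions \eqref{decayGagGabM3} and the dominance condition \eqref{dominanceM3}, are bounded by $\ep_0^2$. The weak $L^\infty$ bound \eqref{Estimate:Flux-bb-vthb-eta-xib-weak} follows from \eqref{Estimate:Flux-bb-vthb-eta-xib} by Sobolev embedding on spheres (costing four derivatives) followed by the one-dimensional trace/$L^\infty$-in-$u$ estimate in the $\nu$-direction (costing one more derivative), which explains the loss from $N-7$ to $N-9$. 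The delicate point throughout---and what I expect to be the main obstacle---is the careful bookkeeping of $\ell=0,1$ modes of $\bbF, \eta, \xib, \hchb$ which form the kernels of the sphere-elliptic operators used to invert the Codazzi and Bianchi relations; the entire estimate hinges on the fact that Section~\ref{sec:consequences-BA} already provides these modes with strictly better $\ep_0/(r^\alpha u^{1+2\dec})$ decay, independently of the seed, so that they are absorbed into the $\ep_0^2$ term rather than contributing to the seed-dependent flux.
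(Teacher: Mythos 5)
Your overall architecture (cascade from the seed data through elliptic and algebraic relations, treating the $\ell=0,1$ kernel modes separately) matches the paper's, and several steps are essentially the ones used there: $\nabla_3(r\qf^\F)\to\dds_2\bbF$, the relation $\underline{\mathfrak{b}}\leftrightarrow\bb,\bbF$, Proposition \ref{Prop:nu*ofGCM:0} for $\eta,\xib$, the GCM condition \eqref{eq:Si_*-GCM2-xib} for $(\div\xib)_{\ell=1}$, and \eqref{byz} plus Poincar\'e for $\yc,\zc,\widecheck{b_*}$. However, there is a genuine gap at the step you yourself identify as the crux: the $\ell=1$ modes of $\bbF$ and $\bb$. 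You propose to extract $(\div\bbF)_{\ell=1}$ from the Maxwell equation $\nabla_3\rhoFc+\div\bbF=\cdots$. This fails for two reasons. First, $(\nabla_3\rhoFc)_{\ell=1}$ is only controlled at bootstrap level ($\nabla_3=\nabla_\nu-b_*\nabla_4$ and $\|\nabla_\nu\Gag\|_\infty\les \ep r^{-2}u^{-1-\dec}$), which gives $|(\ddd_1\bbF)_{\ell=1}|\les \ep r^{-2}u^{-1-\dec}$; feeding this into the Hodge estimate $\|f\|_{\hk_{k+1}}\les r\|\dds_2 f\|_{\hk_k}+r^2|(\ddd_1 f)_{\ell=1}|$ and then into the flux produces $\int_{\Si_*}u^{2+2\dec}\cdot O(\ep^2 u^{-2-2\dec})\,du\sim\ep^2 u_*$, which is unbounded --- the whole point of the proposition is that the $\ell=1$ modes need strictly better than bootstrap $u$-decay for the flux to converge. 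Second, your route also invokes ``$\ell=1$ control of $\rhoFc$'', but Section \ref{sec:consequences-BA} only gives $\ov{\rhoFc}=0$; the estimate for $(\rhoFc)_{\ell=1}$ is established only in Step 2 of Section \ref{sec::improvementofdecaybootassonSigmastar}, using the seed $\pf$ \emph{and} the very $\Gab$ flux you are trying to prove, so your argument would be circular. The paper's resolution is different and is the key trick of the proof: it first bounds $(\ddd_1\nabla_\nu^k\bb)_{\ell=1}$ from the Codazzi equation $\ddd_2\hchb=\bb+\cdots$, exploiting that $(\ddd_1\ddd_2\hchb)_{\ell=1}$ is \emph{quadratically} small after integrating by parts onto $\dds_2\dds_1\Jp\in r^{-1}\dkb^{\le3}\Gag$, and only then recovers $(\ddd_1\bbF)_{\ell=1}$ from the $\underline{\mathfrak{b}}$ relation; the order of your electromagnetic and gravitational sectors must be partially reversed.

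A second, smaller gap of the same nature occurs for $\ombc$: the identity $2\nab\ombc-\tfrac1r\xib=-\nabla_3\ze-\bb+\cdots$ requires $\nabla_3\ze$, again only available at bootstrap level, so it cannot produce the $\ep_0^2$-flux bound. The paper instead combines the $\nabla_3\trchc$ and $\nabla_4\trchc$ equations with the GCM condition $\trchc=0$ to obtain the purely algebraic relation $\ombc=-\tfrac{r}{2}\div\eta-\tfrac{1}{2r}\yc+\Gag+r\,\Gab\c\Gab$, which reduces $\ombc$ to quantities already estimated. Finally, note that Section \ref{sec:consequences-BA} does not supply the $\ell=1$ modes of $\bbF,\bb,\hchb$ independently of the seed, contrary to your closing claim; those are produced inside the proof by the Codazzi/integration-by-parts mechanism above.
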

Observe that the flux estimate \eqref{Estimate:Flux-bb-vthb-eta-xib} is trivially verified for $\aa \in \Gab$ in view of the definition of $\mathcal{F}^k[\qf^\F, \underline{\mathfrak{b}}, \aa]$ in \eqref{eq:def-FF-qf}.

\begin{proof}
  {\bf{Step 1: Estimate for $\bbF_{\ell\geq 2}$.}}
  From \eqref{eq:elliptic-identity-nab3-qfF}, we infer, using the bootstrap assumptions \eqref{decayGagGabM3} and the dominance condition \eqref{dominanceM3} that
  \begin{align*}
    r^4|\dk_*^{k}\dds_2\dds_1(-\div\bbF, \curl\bbF)|
    &\les|\dk_*^{k}\nab_3(r\qk^\F)|+\frac{1}{r^2}  + |r\dkb^{\le 3}\Gamma_g|
      + r^3|\dk^{\le 3}(\Gamma_b\cdot\Gamma_g)|\\
    &\les|\dk_*^{k}\nab_3(r\qk^\F)|+\frac{\ep_0}{u^{2+2\dec}} +\frac{\ep}{ru^{\frac 1 2 +\dec}}
      +\frac{\ep^2}{u^{\frac 3 2 +2\dec}}\\
    &\les r|\dk_*^{k}\nab_3\qk^\F|+|\dk_*^{\leq k}\qk^\F| +\frac{\ep_0}{u^{\frac 3 2 +2\dec}}.
  \end{align*}
  Dividing by $r$, squaring and integrating on $\Si_*$ we deduce
  \begin{align*}
    \int_{\Si_*}r^6u^{2+2\dec}|\dk_*^k\dds_2\dds_1\big(-\div\bbF, \curl\bbF\big)|^2\les{}
    & \int_{\Si_*}u^{2+2\dec}|\dk_*^{k}\nab_3\qf^\F|^2\\
    &+\int_{\Si_*}u^{2+2\dec}r^{-2}|\dk_*^{k}\qf^\F|^2+\int_{\Si_*}\frac{\ep_0^2}{r^2u^{1+\dec}}.
  \end{align*}
  Using the dominance condition \eqref{dominanceM3} in the second term on the right, we deduce for  $k\le  N$,
  \begin{equation*}
    \int_{\Si_*}r^6u^{2+2\dec}|\dk_*^k\dds_2\dds_1\big(-\div\bbF, \curl\bbF\big)|^2\lesssim \mathcal{F}^N[\aa, \underline{\mathfrak{b}}, \qf^\F]+\ep_0^2.
  \end{equation*}
  Taking into account the commutator Lemma \ref{Lemma:Commutation-Si_*},  this yields, for $k\leq N$,
  \begin{equation*}
    \int_{\Si_*}r^6u^{2+2\dec}|\dds_1(\div, -\curl)\dds_2\,\dk_*^k\bbF|^2 \lesssim \mathcal{F}^N[\aa, \underline{\mathfrak{b}}, \qf^\F]+\ep_0^2.
  \end{equation*}
  Using the Hodge estimates of Lemma \ref{prop:2D-Hodge1} for
  $\dds_1$ and $\ddd_1$, we infer, for $k\le N-2$, 
  \begin{equation}
    \label{Estim:Flux-dds_2bbF}
    \int_{\Si_*}r^2 u^{2+2\dec}|\dds_2\,\dk_*^k\bbF|^2 \les \mathcal{F}^N[\aa, \underline{\mathfrak{b}}, \qf^\F]+\ep_0^2.
  \end{equation}

  {\bf{Step 2: Estimate for $\bb_{\ell\geq 2}$.}}
  From the definition of $\underline{\mathfrak{b}}$ in
  \eqref{eq:def-bbb-m3}, we infer
  \begin{align*}
    \dds_2\underline{\mathfrak{b}}
    &= \frac{2Q}{r^2} \dds_2\bb+ \left(\frac{6M}{r^3}-\frac{6Q^2}{r^4} \right) \dds_2\bbF +r^{-2}\dk^{\leq 1}( \Ga_b \c \Ga_g).
  \end{align*}
  We deduce using the bootstrap assumptions \eqref{decayGagGabM3}:
  \begin{align*}
    r^2| \dds_2\dk_*^k\bb |
    &\les  r^4 |\dds_2\dk_*^k\underline{\mathfrak{b}}| + r|\dds_2\dk_*^k\bbF| +\frac{\ep_0}{ru^{\frac 3 2 +2\dec}}.
  \end{align*}
  Squaring and integrating on $\Si_*$ we deduce $k\le N-2$
  \begin{align*}
    \int_{\Si_*}r^4u^{2+2\dec}| \dds_2\dk_*^k\bb |^2
    &\les    \int_{\Si_*} r^8u^{2+2\dec} |\dds_2\dk_*^k\underline{\mathfrak{b}}|^2 +    \int_{\Si_*}r^2 u^{2+2\dec}|\dds_2\dk_*^k\bbF|^2 +   \int_{\Si_*} \frac{\ep_0^2}{r^2u^{3 +4\dec}}\\
    &\les \mathcal{F}^N[\aa, \underline{\mathfrak{b}}, \qf^\F]+ \ep_0^2,
  \end{align*}
  where we used \eqref{Estim:Flux-dds_2bbF}. Using the Hodge estimate
  \eqref{eq:extra-estimate-Hodge} of Lemma \ref{prop:2D-Hodge1}, we
  infer for $k\le  N-1$,
  \begin{align}\label{Estim:Flux-dds_2bb}
    \int_{\Si_*} r^2u^{2+2\dec}\big|\dk_*^k\bb\big|^2\les \int_{\Si_*}r^4u^{2+2\dec}\left|\left(\ddd_1\nab_\nu^{\leq k-3}\bb\right)_{\ell=1}\right|^2+\mathcal{F}^N[\aa, \underline{\mathfrak{b}}, \qf^\F]+\ep_0^2.
  \end{align}

  {\bf{Step 3: Estimate for $\bbF_{\ell=1}$ and $\bb_{\ell=1}$.}}
  We now need to add the control of the $\ell=1$ mode of $\bb$ and $\bbF$.
  Writing the Codazzi equation as
  \begin{align}\label{eq:codazzi-simplified}
    \ddd_2\hchb=\bb+r^{-1}\dkb^{\leq 1}\Gag+\Gab\c\Gag,
  \end{align}
  and differentiating it with respect to $\nu^k\ddd_1$, we infer
  \begin{align*}
    \nu^k\ddd_1\ddd_2\hchb=\nu^k\ddd_1\bb +r^{-2}\dk_*^{\leq k+2}\Gag+r^{-1}\dk_*^{\leq k+1}(\Gab\c\Gag).
  \end{align*}
  Taking into account Lemma \ref{Lemma:Commutation-Si_*} and projecting on the $\ell=1$ modes, this yields for
  \begin{align*}
    (\ddd_1\ddd_2\nab_\nu^k\hchb)_{\ell=1}=(\ddd_1\nab_\nu^k\bb)_{\ell=1} +r^{-2}\dk_*^{\leq k+2}\Gag+r^{-1}\dk_*^{\leq k+1}(\Gab\c\Gag).
  \end{align*}
  Next, we have from Lemma \ref{JpGag} that for 
  \begin{align*}
    (\ddd_1\ddd_2\nab_\nu^k\hchb)_{\ell=1,p}=\frac{1}{|S|}\int_S\ddd_1\ddd_2\nab_\nu^k\hchb\Jp =\frac{1}{|S|}\int_S\nab_\nu^k\hchb\c\dds_2\dds_1\Jp=r^{-1}\dk^k\Gab\c\dkb^{\leq 3}\Gag.
  \end{align*}
  Using the bootstrap assumptions \eqref{decayGagGabM3} and the dominance condition \eqref{dominanceM3}, we then infer for $k\le  N-2$
  \begin{align}\label{eq:estimate(bb*)(ell=1)}
    \begin{split}
      |(\ddd_1\nab_\nu^k\bb)_{\ell=1}|&\les|(\ddd_1\ddd_2\nab_\nu^k\hchb)_{\ell=1}|+\frac{\ep}{r^4u^{\frac 1 2 +\dec}}+\frac{\ep^2}{r^4u^{\frac 3 2 +2\dec}}\les \frac{\ep_0}{r^3u^{\frac 3 2 +2\dec}}.
    \end{split}
  \end{align}
  Combining \eqref{Estim:Flux-dds_2bb} and \eqref{eq:estimate(bb*)(ell=1)}, we obtain for $k\le N-2$,
  \begin{eqnarray}\label{eq:flux-estimate-bb}
    \int_{\Si_*} r^2u^{2+2\dec}\big|\dk_*^k\bb\big|^2\les \mathcal{F}^N[\aa, \underline{\mathfrak{b}}, \qf^\F]+\ep_0^2.
  \end{eqnarray}
  Using again
  \eqref{eq:def-bbb-m3}, i.e.
  \begin{align*}
    \ddd_1\underline{\mathfrak{b}}
    &= \frac{2Q}{r^2} \ddd_1\bb
      + 3\left( \frac{2M}{r^3}-\frac{2Q^2}{r^4}  \right) \ddd_1\bbF +r^{-2}\dk^{\leq 1}( \Ga_b \c \Ga_g),
  \end{align*}
  we deduce for $k\le  N-2$ 
  \begin{align*}
    | (\ddd_1\nab_\nu^k\bbF )_{\ell=1}|&\les r^3|(\ddd_1\nab_\nu^k\underline{\mathfrak{b}})_{\ell=1}|
                                         + r |(\ddd_1\nab_\nu^k\bb)_{\ell=1}|+\frac{\ep^2}{r^2u^{\frac 3 2 +2\dec}}\\
                                       &\les r^3|(\ddd_1\nab_\nu^k\underline{\mathfrak{b}})_{\ell=1}|
                                         +  \frac{\ep_0}{r^2u^{\frac 3 2 +2\dec}},
  \end{align*}
  where we used \eqref{eq:estimate(bb*)(ell=1)}.
  Combining \eqref{Estim:Flux-dds_2bbF} and the above we obtain for $k\le N-2$,
  \begin{equation*}
    \int_{\Si_*} u^{2+2\dec}\big|\dk_*^k\bbF\big|^2\les \mathcal{F}^N[\aa, \underline{\mathfrak{b}}, \qf^\F]+\ep_0^2.
  \end{equation*}

  {\bf{Step 4: Estimate for $\chibh$.}}
  Using once again the Codazzi equation \eqref{eq:codazzi-simplified}, in view of  Lemma \ref{Lemma:Commutation-Si_*}, we have that
  \begin{align*}
    \ddd_2\dk_*^k\hchb=(\dk_*^k\bb)+r^{-1}\dkb^{\le k+2}\Gag +\dkb^{\leq k+1}\left(\Gab\c\Gag\right).
  \end{align*}
  Using the bootstrap assumptions \eqref{decayGagGabM3}, the dominance
  condition \eqref{dominanceM3}, and \eqref{eq:flux-estimate-bb}, we
  infer for $k\le  N-2$,
  \begin{eqnarray}\label{eq:thefluxesimateforthetabaronSi*tobeprovedbelow}
    \int_{\Si_*}  u^{2+2\dec}   \big|\dk_*^{k} \chibh|^2  \les \mathcal{F}^k[\aa, \underline{\mathfrak{b}}, \qf^\F]+\ep_0^2.
  \end{eqnarray}

  {\bf{Step 5: Estimate for $\eta_{\ell\geq2}$ and
      $\xib_{\ell\geq2}$.}}  From Proposition \ref{Prop:nu*ofGCM:0},
  and using Lemma \ref{Lemma:Commutation-Si_*}, we deduce for  
  \begin{align*}
    \begin{split}
      2\dds_2\dds_1 \ddd_1\ddd_2\dds_2\dk_*^k\eta
      &=r^{-4}\dk_*^{\leq k+3}\bb+r^{-5}\dk_*^{\leq k+5}\Gag+r^{-4}\dk_*^{\leq k+4}(\Gab\c\Gab),\\
      2\dds_2\dds_1 \ddd_1\ddd_2\dds_2\dk_*^k\xib
      &=r^{-4}\dk_*^{\leq k+3}\bb+r^{-5}\dk_*^{\leq k+5}\Gag+r^{-4}\dk_*^{\leq k+6}(\Gab\c\Gab).
    \end{split}
  \end{align*}
  Using the bootstrap assumptions \eqref{decayGagGabM3}, the dominance
  condition \eqref{dominanceM3}, we infer for
  $k\le  N-6$,
  \begin{align*}
    |\dds_2\dds_1 \ddd_1\ddd_2\dds_2\dk_*^k\eta|+|\dds_2\dds_1 \ddd_1\ddd_2\dds_2\,\dk_*^k\xib|
    &\les r^{-4}|\dk_*^{\leq N-3}\bb|+\frac{\ep}{r^7u^{\frac 1 2 +\dec}}+\frac{\ep^2}{r^6u^{2+2\dec}}\\
    &\les r^{-4}|\dk_*^{\leq N-3}\bb|+\frac{\ep}{r^7u^{\frac 1 2 +\dec}}+\frac{\ep^2}{r^6u^{\frac 3 2 +2\dec}}.
  \end{align*}
  Using \eqref{eq:flux-estimate-bb} and the coercivity of
  $\dds_2 \ddd_2$ (see Lemma 5.44 in
  \cite{klainermanKerrStabilitySmall2023}), we deduce for
  $k\le  N-6$,
  \begin{equation}
    \label{fluxestimates-foreta-xib}
    \int_{\Si_*}r^2u^{2+2\dec}\Big(|\dds_2\,\dk_*^k\eta|^2+|\dds_2\,\dk_*^k\xib|^2\Big)
    \les \mathcal{F}^N[\aa, \underline{\mathfrak{b}}, \qf^\F]+\ep_0^2.
  \end{equation}

  {\bf{Step 6: Estimate for $\eta_{\ell=1}$ and $\xib_{\ell=1}$.}}
  Using the elliptic identity \eqref{eq:elliptic-identity-diveta}, we can estimate
  \begin{align*} 
    \abs*{\int_S\div\eta\Jp}
    \les{}&r\abs*{\int_S \widehat{\rho}\Jp}
            +\abs*{\int_S\div \bbF  \Jp }+ r\dk^{\le 1}\Gamma_g  + r^3\dk^{\le 1}(\Gamma_b\cdot\Gamma_g)\\
    \les{}&\frac{\ep_0}{u^{1+2\dec}}+\frac{\ep_0}{u^{1+\dec}}+ \frac{\ep}{ru^{\frac 1 2 +\dec}} + \frac{\ep^2}{u^{\frac 3 2 +2\dec}}\les \frac{\ep_0}{u^{1+\dec}},
  \end{align*}
  where we used Proposition \ref{prop:control.ell=1modes-Si} to bound $\widehat{\rho}_{\ell=1}$, Step 3 above to bound $(\div\bbF)_{\ell=1}$, the bootstrap assumptions  \eqref{decayGagGabM3} and the dominance condition \eqref{dominanceM3}. Similarly for the commuted identity with $\nu^k$.

  Also recall the GCM condition \eqref{eq:Si_*-GCM2-xib}
  $(\div\xib)_{\ell=1} =0$, from which in view of Lemma
  \ref{Corr:nuSof integrals} we infer 
  \begin{align*}
    (\nu^k\div\xib)_{\ell=1} = r^{-2}\dk_*^{\leq k-1}\Gab+\dk_*^{\leq k}(\Gab\c\Gab).
  \end{align*}
  Using the bootstrap assumptions \eqref{decayGagGabM3} and the
  dominance condition \eqref{dominanceM3}, we infer for
  $k\le  N$,
  \begin{align}\label{etaxibell=1est}
    |(\nu^k\div\xib)_{\ell=1}|\les\frac{\ep}{r^3u^{1+\dec}}+\frac{\ep^2}{r^2u^{2+2\dec}}\les\frac{\ep_0}{r^2u^{2+2\dec}}.
  \end{align}
  Using the null structure equations $\curl\eta=r^{-1}\Gag+\Gab\c\Gag$ and $\curl\xib=\Gab\c\Gab$ and Lemma \ref{Lemma:Commutation-Si_*}, we have for $k\le  N$,
  \begin{align}
    \begin{split}\label{ell=2etaxib}
      |(\ddd_1\nab_\nu^k\eta)_{\ell=1}|+|(\ddd_1\nab_\nu^k\xib)_{\ell=1}|
      &\les|(\nu^k\div\eta)_{\ell=1}|+|(\nu^k\div\xib)_{\ell=1}|+\frac{\ep}{r^3u^{\frac 1 2 +\dec}}+\frac{\ep^2}{r^2u^{2+2\dec}}\\
      &\les\frac{\ep_0}{r^2u^{1+\dec}}+\frac{\ep_0}{r^2u^{2+2\dec}}+\frac{\ep_0}{r^2u^{2+2\dec}}\les\frac{\ep_0}{r^2u^{1+\dec}},
    \end{split}
  \end{align}
  where we used once again the bootstrap assumptions \eqref{decayGagGabM3}, the dominance condition \eqref{dominanceM3} and \eqref{etaxibell=1est}. Using Lemma \ref{prop:2D-Hodge1} and \eqref{fluxestimates-foreta-xib} we deduce for $k\le  N-6$
  \begin{equation}
    \label{Estimates:Flux-eta-xib}
    \int_{\Si_*}u^{2+2\dec}\Big(|\dk_*^k\eta|^2+|\dk_*^k\xib|^2\Big)
    \lesssim \mathcal{F}^N[\aa, \underline{\mathfrak{b}}, \qf^\F]+\ep_0^2.
  \end{equation}

  {\bf{Step 7: Estimate for $\yc$, $\zc$ and $\widecheck{b_*}$.}}
  From \eqref{byz} and Lemma \ref{Lemma:Commutation-Si_*}, we infer
  \begin{align*}
    \nab\dk_*^k \widecheck{y} = -\dk_*^k\xib +\dk_*^k\eta+\dk_*^{\leq k}\Gag+r\dk_*^{\leq k}(\Gab\c\Gab),\qquad \nab\dk_*^k \widecheck{z} = -2\dk_*^k\eta+\dk_*^{\leq k}\Gag+r\dk_*^{\leq k}(\Gab\c\Gab),
  \end{align*}
  Using the bootstrap assumptions \eqref{decayGagGabM3} and the
  dominance condition \eqref{dominanceM3}, we obtain
  \begin{align*}
    |\nab\dk_*^k \widecheck{y}|+|\nab\dk_*^k \widecheck{z}| \les{}& |\dk_*^k\xib| +|\dk_*^k\eta|+\frac{\ep}{r^2u^{\frac{1}{2}+\dec}}+\frac{\ep^2}{ru^{2+2\dec}}\\
    \les{}& |\dk_*^k\xib| +|\dk_*^k\eta|+\frac{\ep_0}{ru^{\frac{3}{2}+2\dec}}.
  \end{align*} 
  Squaring and integrating on $\Si_*$, we infer that
  \begin{equation*}
    \int_{\Si_*}u^{2+2\dec}\Big(|\nab\dk_*^k \widecheck{y}|^2+|\nab\dk_*^k \widecheck{z}|^2\Big) \les \ep_0^2+\int_{\Si_*}u^{2+2\dec}\Big(|\dk_*^k\eta|^2+|\dk_*^k\xib|^2\Big).
  \end{equation*}
  Together with \eqref{Estimates:Flux-eta-xib}, this yields, for $k\le  N-6$,
  \begin{equation*}
    \int_{\Si_*}u^{2+2\dec}\Big(|\nab\dk_*^k \widecheck{y}|^2+|\nab\dk_*^k \widecheck{z}|^2\Big) \les  \mathcal{F}^N[\aa, \underline{\mathfrak{b}}, \qf^\F]+\ep_0^2.
  \end{equation*}
  Since $b_*=-y-z$, we deduce, for $k\le  N-6$,
  \begin{equation}\label{eq:intermediatefluxesitmateaveragefreexyb}
    \int_{\Si_*}u^{2+2\dec}\Big(|\nab\dk_*^k \widecheck{y}|^2+|\nab\dk_*^k \widecheck{z}|^2+|\nab\dk_*^k \widecheck{b_*}|^2\Big) \les \mathcal{F}^N[\aa, \underline{\mathfrak{b}}, \qf^\F]+\ep_0^2.
  \end{equation}

  In view of \eqref{eq:intermediatefluxesitmateaveragefreexyb}, it remains to estimate the averages $\ov{\nu^k(\widecheck{y})}$, $\ov{\nu^k(\widecheck{z})}$ and $\ov{\nu^k(\widecheck{b_*})}$, whose proof follows identically the one in Step 7 of Proposition 5.42 in \cite{klainermanKerrStabilitySmall2023}. 
  Together with \eqref{eq:intermediatefluxesitmateaveragefreexyb}, and using a Poincar\'e inequality, we infer, for $k\le  N-7$,
  \begin{equation}\label{eq:fluxesitmateforxyb}
    \int_{\Si_*}r^{-2}u^{2+2\dec}\left(\left|\dk_*^k(\widecheck{y})\right|^2+\left|\dk_*^k(\widecheck{z})\right|^2+\left|\dk_*^k(\widecheck{b_*})\right|^2\right)
    \les \mathcal{F}^N[\aa, \underline{\mathfrak{b}}, \qf^\F]+\ep_0^2.
  \end{equation}

  {\bf{Step 8: Estimate for $\ombc$.}}
  Finally, from the null structure equations for $\nab_4\trchc$ and $\nab_3\trchc$, we deduce
  \begin{align*}
    \ombc= -\frac{r}{2}\div \eta -\frac{1}{2r}\widecheck{y} +\Gag+r\Gab\c\Gab.
  \end{align*}
  Using the bootstrap assumptions \eqref{decayGagGabM3} and the dominance condition \eqref{dominanceM3}, we obtain that 
  \begin{align*}
    |\dk_*^k\ombc| &\les |\dk_*^{k+1}\eta|+r^{-1}|\dk_*^k\widecheck{y}|+\frac{\ep}{r^2u^{\frac{1}{2}+\dec}}+\frac{\ep^2}{ru^{2+2\dec}}\\
                   &\les |\dk_*^{k+1}\eta|+r^{-1}|\dk_*^k\widecheck{y}|+\frac{\ep_0}{ru^{\frac{3}{2}+2\dec}}.
  \end{align*}
  Squaring and integrating on $\Si_*$, we infer that
  \begin{equation*}
    \int_{\Si_*}u^{2+2\dec}|\dk_*^k\ombc|^2 \les \int_{\Si_*}u^{2+2\dec}|\dk_*^{k+1}\eta|^2+\int_{\Si_*}r^{-2}u^{2+2\dec}|\dk_*^k\widecheck{y}|^2+\ep_0^2.
  \end{equation*}
  Together with \eqref{Estimates:Flux-eta-xib} for $\eta$ and \eqref{eq:fluxesitmateforxyb} for $\widecheck{y}$, we infer, for $k\le  N-7$, 
  \begin{equation*}
    \int_{\Si_*}u^{2+2\dec}|\dk_*^k\ombc|^2
    \les \mathcal{F}^N[\aa, \underline{\mathfrak{b}}, \qf^\F]+\ep_0^2.
  \end{equation*}  
  This concludes the proof of Proposition \ref{Prop.Flux-bb-vthb-eta-xib}.
\end{proof}

The control of the fluxes for the $\Gab$ quantities allow us to have a
better control of the nonlinear terms
$\int \dk^{\leq k}( \Gag \c \Gab )$, that according to the bootstrap
assumptions \eqref{decayGagGabM3} would only decay as
$u^{-\frac 1 2 -2\dec}$. It is nevertheless necessary to improve this
decay in order to control $\bm{J}_{\ell=1}$ in the subsequent section.

\begin{corollary}\label{cor:improved-bounds-nonlinear-=terms}
  Any nonlinear term of the form $\Gag \c \Gab$ satisfies on $\Si_*$
  for $k\le  N-9$:
  \begin{equation}
    \label{eq:improved-bounds-nonlinear-=terms}
    \int_u^{u_*} r^3  | \dk^{\leq k}( \Gag \c \Gab )  |\les \frac{\ep}{ u^{ 1 +2\dec}}\sqrt{\mathcal{F}^N[\aa, \underline{\mathfrak{b}}, \qf^\F]+\ep_0^2}.
  \end{equation}
\end{corollary}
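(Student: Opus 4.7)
The plan is to exploit the trade-off between the pointwise bootstrap bound on $\Ga_g$ and the flux control on $\Ga_b$ just established in Proposition \ref{Prop.Flux-bb-vthb-eta-xib}. Starting from the Leibniz rule, every term in $\dk^{\leq k}(\Ga_g\c\Ga_b)$ is of the form $\dk^{j}\Ga_g \c \dk^{k-j}\Ga_b$ with $0\le j\le k$. Bounding the first factor pointwise via \eqref{decayGagGabM3}, namely $\|\Ga_g\|_{\infty,k}\les\ep r^{-2}u^{-1/2-\dec}$, reduces the problem to controlling
\begin{equation*}
\ep\int_u^{u_*}\frac{r}{u'^{\,1/2+\dec}}\,\|\Ga_b\|_{\infty,k}(u')\,du'.
\end{equation*}

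Next, I would convert the sphere $L^\infty$ norm into an $L^2$ norm via the Sobolev embedding on a sphere of radius $r$,
\begin{equation*}
\|\Ga_b\|_{\infty,k}(u')\les r^{-1}\|\dk_*^{\leq k+2}\Ga_b\|_{L^2(S(u'))},
\end{equation*}
which gains a factor of $r^{-1}$ and kills the remaining power of $r$. The integral to estimate becomes
\begin{equation*}
\ep\int_u^{u_*}\frac{1}{u'^{\,1/2+\dec}}\,\|\dk_*^{\leq k+2}\Ga_b\|_{L^2(S(u'))}\,du'.
\end{equation*}

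I would then apply Cauchy--Schwarz with weight $u'^{\,2+2\dec}$, which yields
\begin{equation*}
\ep\Bigl(\int_u^{u_*} u'^{\,-3-4\dec}\,du'\Bigr)^{1/2}\Bigl(\int_u^{u_*} u'^{\,2+2\dec}\|\dk_*^{\leq k+2}\Ga_b\|_{L^2(S(u'))}^2\,du'\Bigr)^{1/2}.
\end{equation*}
The first factor integrates to $u^{-1-2\dec}$, while the second factor equals $\bigl(\int_{\Si_*}u^{2+2\dec}|\dk_*^{\leq k+2}\Ga_b|^2\bigr)^{1/2}$ once one recognizes that the induced volume element along $\Si_*$ is $dA\,du'$ with $dA=r^2 d\omega$ on each leaf. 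Provided $k+2\le N-7$, i.e.\ $k\le N-9$, Proposition \ref{Prop.Flux-bb-vthb-eta-xib} bounds this second factor by $\sqrt{\mathcal{F}^N[\aa,\underline{\mathfrak{b}},\qf^\F]+\ep_0^2}$.

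The key conceptual point, and the main subtlety to watch out for, is that a naive pointwise-pointwise estimate using only the bootstrap assumptions \eqref{decayGagGabM3} produces only $u^{-1/2-2\dec}$ decay (since the $u^{-1/2-\dec}$ weights of $\Ga_g$ and $\Ga_b$ are not integrable enough), which is too weak for the application to the transport equation for $\bm{J}_{\ell=1}$ later on. The improvement to $u^{-1-2\dec}$ requires precisely the flux information on $\Ga_b$ combined with Cauchy--Schwarz: one must spend one half power of $u$ from the bootstrap decay of $\Ga_g$, buy back a full integrable factor via the $u'^{\,2+2\dec}$-weighted flux, and accept the loss of $2$ derivatives from the Sobolev embedding (which accounts for the $N-7\to N-9$ shift in the admissible range of $k$).
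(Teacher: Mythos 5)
Your proposal is correct and follows essentially the same route as the paper: bound the $\Gag$ factor pointwise by the bootstrap assumption, pass from $\|\dk^{\leq k}\Gab\|_{L^\infty(S)}$ to $r^{-1}\|\dk^{\leq k+2}\Gab\|_{L^2(S)}$ by Sobolev on the spheres, apply Cauchy--Schwarz in $u'$ with the weight $u'^{\,2+2\dec}$, and invoke the flux estimate of Proposition \ref{Prop.Flux-bb-vthb-eta-xib}, which forces $k+2\le N-7$. The derivative counting and the final decay rate $u^{-1-2\dec}$ match the paper's argument exactly.
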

\begin{proof}
  Using the bootstrap assumptions \eqref{decayGagGabM3} for the $\Gag$
  we have for $k\le N$
  \begin{align*}
    | \dk^{\leq k}( \Gag \c \Gab )  |\les  \frac{\ep}{r^2 u^{\frac 1 2+\dec}}|\dk^{\leq k} \Gab|.
  \end{align*}
  By integration in $u$ and using Sobolev and H\"older inequalities,
  we obtain 
  \begin{align*}
    \int_u^{u_*} r^3  | \dk^{\leq k}( \Gag \c \Gab )  |
    &\les \int_u^{u_*}\frac{\ep}{ u'^{\frac 1 2+\dec}}\|\dk^{\leq k+2} \Gab\|_{L^2(S)} \\
    &\les \ep \left(\int_u^{u_*}\frac{du'}{u'^{3+4\dec}}\right)^{\frac{1}{2}}\left(\int_u^{u_*}u'^{2+2\dec}\|\dk^{\leq k+2} \Gab\|^2_{L^2(S)} du'\right)^{\frac{1}{2}} \\
    &\les \frac{\ep}{ u^{ 1 +2\dec}}\left(\int_{\Si_*}u^{2+2\dec}    \big|\dk_*^ {k+2}\Ga_b\big|^2 \right)^{\frac 1 2 }.
  \end{align*}
  Using Proposition \ref{Prop.Flux-bb-vthb-eta-xib}, we infer the
  stated inequality in \eqref{eq:improved-bounds-nonlinear-=terms}.
\end{proof}

\subsection{Improved control of some \texorpdfstring{$\ell=1$}{} modes}\label{sec:l=1-2}

We now complete the improved decay for the $\ell=1$ modes of
Proposition \ref{prop:control.ell=1modes-Si} with the remaining
quantities.

\begin{proposition}\label{prop:control.ell=1modes-Si-2}
  Suppose $\mathcal{F}^N[\aa, \underline{\mathfrak{b}}, \qf^\F]  \;\les\; \ep^2$. Then the following estimates hold true on $\Si_*$: 
  \begin{align*}
r\left|\bm{J}_{\ell=1,\pm}\right|
    +r\left|\bm{J}_{\ell=1,0}-\frac{2aM}{r^5}\right|+\left|\bm{Y}_{\ell=1,\pm}\right|
    +\left|\bm{Y}_{\ell=1,0}-\frac{2aM}{r^4}\right|
    &\les\frac{\sqrt{\mathcal{F}^N[\aa, \underline{\mathfrak{b}}, \qf^\F]+\ep_0^2}}{r^4u^{1+\dec}}, \\
    \left|\dual\widehat{\rho}_{\ell=1,\pm}\right|
    +\left|\dual \widehat{\rho}_{\ell=1,0}-\frac{2aM}{r^4}\right|
    &\les\frac{\ep_0}{r^3u^{2+2\dec}}.
  \end{align*}
\end{proposition}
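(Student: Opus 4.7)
The plan is to mirror the structure of Proposition \ref{prop:control.ell=1modes-Si} for the dual quantities: derive an algebraic Codazzi-type identity relating $\bm{Y}_{\ell=1}$ to $\bm{J}_{\ell=1}$, feed it into the transport equation of Corollary \ref{COROFLEMMA:TRANSPORT.ALONGSI_STAR1} to obtain a closed transport for $\bm{J}_{\ell=1}$, integrate it from $S_*$ using the prescribed boundary values, and finally recover $\dual\widehat{\rho}_{\ell=1}=\bm{Y}_{\ell=1}$ from the Codazzi identity.

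For Step 1, I would apply $\curl$ to the Codazzi equation $\div\hch = \frac{1}{r}\ze - \b + \frac{Q}{r^2}\bF + \Gag\c\Gag$ of Proposition \ref{Prop.NullStr+Bianchi-lastslice} and project on the $\ell=1$ mode. The definitions $\bm{J}=\curl\b-\frac{Q}{r^2}\curl\bF+\frac{2Q}{r^3}\dual\rhoF$ and $\bm{Y}=\curl\ze+\frac{2Q}{r^2}\dual\rhoF$ are arranged so that the $\dual\rhoF$ corrections cancel, yielding
\[
(\curl\,\ddd_2\hch)_{\ell=1} \;=\; \frac{1}{r}\bm{Y}_{\ell=1} - \bm{J}_{\ell=1} + r^{-1}\dkb^{\leq 1}(\Gag\c\Gag).
\]
Integrating by parts and using $\dds_2\dds_1\Jp \in r^{-1}\dkb^{\leq 3}\Gag$ from Lemma \ref{JpGag}, the left-hand side is also $r^{-1}\dkb^{\leq 3}(\Gag\c\Gag)$, whence the algebraic identity
\[
\bm{Y}_{\ell=1} = r\,\bm{J}_{\ell=1} + r\dkb^{\leq 3}(\Gag\c\Gag),
\]
which is the exact analog of the $\bm{Z}$--$\bm{C}$ relation in the previous proposition.

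For Step 2, I would substitute this into the transport equation for $\bm{J}_{\ell=1}$ from Corollary \ref{COROFLEMMA:TRANSPORT.ALONGSI_STAR1} (using $\dual\widehat{\rho}=\bm{Y}$). Since $O(r^{-2})\cdot r\bm{J}_{\ell=1}$ is of the same order as the existing $O(r^{-1})\bm{J}_{\ell=1}$ term, the coupling collapses and we obtain a closed transport
\[
\nu\bigl(\bm{J}_{\ell=1}\bigr) \;=\; O(r^{-1})\bm{J}_{\ell=1} + r^{-3}\dk^{\leq 1}\Gag + \dk^{\leq 1}(\Gab\c\Gag).
\]
A direct computation shows $\frac{2aM}{r^5}$ satisfies the homogeneous part of this equation up to admissible errors, so setting $\tilde f_p := \bm{J}_{\ell=1,p}-\delta_{p,0}\frac{2aM}{r^5}$ (which vanishes at $S_*$ by \eqref{eq:S_*-GCM} and the definition of $a$), we have $\nu(\tilde f_p)=O(r^{-1})\tilde f_p + h$ with the same $h$. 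Applying Lemma \ref{evolutionlemmaSi*} with the appropriate weight, the linear error $r^{-3}\dk^{\leq 1}\Gag$ is integrated directly using the bootstrap \eqref{decayGagGabM3} and the dominance condition, while the nonlinear error $\dk^{\leq 1}(\Gab\c\Gag)$ is controlled by Corollary \ref{cor:improved-bounds-nonlinear-=terms}, which delivers $\int_u^{u_*} r^3|\dk^{\leq 1}(\Gab\c\Gag)|\les \ep\sqrt{\mathcal{F}^N+\ep_0^2}\,u^{-1-2\dec}$. This produces the claimed bound for $r|\bm{J}_{\ell=1,\pm}|$ and $r|\bm{J}_{\ell=1,0}-2aM/r^5|$. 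The bound on $\dual\widehat{\rho}_{\ell=1}$ then follows from the Step 1 identity, combined with the dominance condition to trade $r^{-1}$ decay for $u^{-1-\dec}$ decay and absorb the factors of $\ep$ into $\ep_0$ via $\ep=\ep_0^{2/3}$.

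The main obstacle is that the coupled transport for $\bm{J}_{\ell=1}$ contains $\dual\widehat{\rho}_{\ell=1}$ with an $O(r^{-2})$ coefficient, so naive integration would force us to control $\dual\widehat{\rho}_{\ell=1}$ independently; the Codazzi identity of Step 1 is exactly what trades this coupling for an $O(r^{-1})\bm{J}_{\ell=1}$ term, consistent with the weight in Lemma \ref{evolutionlemmaSi*}. The secondary difficulty is that the nonlinear error $\Gab\c\Gag$ decays in $u$ only as $u^{-3/2-2\dec}$ pointwise from \eqref{decayGagGabM3} — insufficient to obtain a $u^{-1-\dec}$ decay after integration — and here the assumption $\mathcal{F}^N\les\ep^2$ enters crucially through Proposition \ref{Prop.Flux-bb-vthb-eta-xib} and its consequence Corollary \ref{cor:improved-bounds-nonlinear-=terms}, which upgrade this nonlinear term to one with an integrable $u^{-1-2\dec}$ bound after multiplying by $r^3$.
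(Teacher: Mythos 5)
Your proposal is correct and relies on the same core ingredients as the paper's proof: the $\curl$ of the Codazzi equation for $\hch$ giving the algebraic relation $\bm{Y}_{\ell=1}=r\bm{J}_{\ell=1}+\dkb^{\leq 3}(\Gag\c\Gag)$, the $\nu$-transport equation of Corollary \ref{COROFLEMMA:TRANSPORT.ALONGSI_STAR1}, Corollary \ref{cor:improved-bounds-nonlinear-=terms} for the quadratic $\Gab\c\Gag$ terms, and Lemma \ref{evolutionlemmaSi*} with the data $\bm{J}_{\ell=1,\pm}=0$, $\bm{J}_{\ell=1,0}=\frac{2aM}{r^5}$ on $S_*$. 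The one genuine organizational difference is how the $\bm{J}$--$\dual\widehat{\rho}$ coupling is closed. The paper introduces a local bootstrap assumption on $\bm{J}_{\ell=1}$ (their \eqref{eq:BAlocal-Si_*-2}), uses the Codazzi relation to deduce a \emph{preliminary} bound on $\bm{Y}_{\ell=1}=\dual\widehat{\rho}_{\ell=1}$ of size $\ep/(r^4u^{1+\dec})$, treats the term $O(r^{-2})\int_S\dual\widehat{\rho}\Jp$ in $h_2$ as an inhomogeneous source with that bound, integrates, and only then recovers the local bootstrap (which is precisely where $\mathcal{F}^N\les\ep^2$ enters). You instead substitute the Codazzi relation directly into the transport equation, turning the $O(r^{-2})\dual\widehat{\rho}_{\ell=1}$ coupling into an $O(r^{-1})\bm{J}_{\ell=1}$ term absorbable into the homogeneous part of Lemma \ref{evolutionlemmaSi*}, so no local bootstrap is needed; the hypothesis $\mathcal{F}^N\les\ep^2$ then enters only when converting the resulting $\bm{Y}_{\ell=1}$ bound into the stated $\ep_0/(r^3u^{2+2\dec})$ bound for $\dual\widehat{\rho}_{\ell=1}$. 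Both routes work; yours is arguably cleaner. Note also that the paper obtains the $\dual\widehat{\rho}_{\ell=1}$ estimate \emph{before} integrating (from the preliminary bound), whereas you obtain it at the end — the numerology closes either way.

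One transcription slip you should fix, because the numerology is tight: from $(\curl\ddd_2\hch)_{\ell=1}=\frac{1}{r}\bm{Y}_{\ell=1}-\bm{J}_{\ell=1}+r^{-1}\dkb^{\leq1}(\Gag\c\Gag)$ and $(\curl\ddd_2\hch)_{\ell=1}=r^{-1}\dkb^{\leq3}(\Gag\c\Gag)$, multiplying by $r$ gives $\bm{Y}_{\ell=1}=r\bm{J}_{\ell=1}+\dkb^{\leq3}(\Gag\c\Gag)$, \emph{without} the extra factor of $r$ on the error that you wrote. With your stated error $r\dkb^{\leq3}(\Gag\c\Gag)\les\ep^2/(r^3u^{1+2\dec})$, the contribution to the transport source after multiplying by $r^3$ and integrating in $u$ would only be $O(\ep^2 u^{-2\dec})$, which is not enough for the claimed $u^{-1-\dec}$ decay, and the final conversion to the $\dual\widehat{\rho}_{\ell=1}$ bound would also fail. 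With the correct error $\dkb^{\leq3}(\Gag\c\Gag)\les\ep^2/(r^4u^{1+2\dec})$ everything you wrote goes through.
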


\begin{proof}
  We make the following local bootstrap assumptions on $\Si_*$:
  \begin{align}
    \label{eq:BAlocal-Si_*-2}
    \begin{split}
|\bm{J}_{\ell=1,\pm}|+\left|\bm{J}_{\ell=1,0}-\frac{2aM}{r^5}\right|
      &\leq\frac{\ep}{r^5u^{1+\dec}}.
    \end{split}
  \end{align}
  Notice that these assumptions are stronger than the ones implied by the bootstrap assumptions \eqref{decayGagGabM3} and will be recovered in the process of the proof (in Step 3). In order to improve the local bootstrap assumptions we need to assume \eqref{eq:assumption-FFk}.

{\bf{Step 1: Preliminary estimate for $\bm{Y}_{\ell=1}$.}}  Differentiating with respect to $\curl$ and projecting on the $\ell=1$
  modes the Codazzi equation for $\hch$, we infer 
  \begin{align*}
    (\curl\ddd_2\hch)_{\ell=1}=\frac{1}{r}(\curl\ze)_{\ell=1}-(\curl\beta)_{\ell=1} +\frac{Q}{r^2}(\curl\bF)_{\ell=1} +r^{-1}\dkb^{\leq 1}(\Gag\c\Gag).
  \end{align*}
  Recalling the definitions \eqref{eqn:C-J-renormalization} of $\bm{J}$
  and \eqref{eq:zeta-renormalization} of $\bm{Y}$ we deduce
  \begin{align*}
    (\curl\ddd_2\hch)_{\ell=1}
    =\frac{1}{r}\bm{Y}_{\ell=1}
   -\bm{J}_{\ell=1}
    +r^{-1}\dkb^{\leq 1}(\Gag\c\Gag).
  \end{align*}
  Using as in Proposition \ref{prop:control.ell=1modes-Si} that $(\curl\ddd_2\hch)_{\ell=1}=r^{-1}\Gag\c\dkb^{\leq 3}\Gag$
  we deduce
  \begin{align}\label{eq:relation-Y-J} 
    \bm{Y}_{\ell=1}
    =r\bm{J}_{\ell=1}
    +\dkb^{\leq 3}(\Gag\c\Gag),
  \end{align}
  and thus using the bootstrap assumptions \eqref{decayGagGabM3} and
  \eqref{eq:BAlocal-Si_*}, we have
  \begin{align}
    \label{eq:control.ell=1modes-Si-Step1-2}
\left|\bm{Y}_{\ell=1,\pm}\right|+\left|\bm{Y}_{\ell=1,0}-\frac{2aM}{r^4}\right|\les\frac{\ep}{r^4u^{1+\dec}}. 
  \end{align}

     {\bf{Step 2: Estimate for $\dual \widehat{\rho}_{\ell=1}$.}} Using that $\bm{Y}=\widehat{\dual\rho}$, we infer from \eqref{eq:control.ell=1modes-Si-Step1-2} and the dominance condition \eqref{dominanceM3},
  \begin{equation}
    \label{eq:control.ell=1modes-Si-Step5}
    \begin{split}
\abs*{\left( \dual\widehat{\rho} \right)_{\ell=1,\pm}}
           + \abs*{\left( \dual\widehat{\rho} \right)_{\ell=1,0} - \frac{2aM}{r^4}}   \les\frac{\ep}{r^4u^{1+\dec}}
           \les\frac{\ep_0}{r^3u^{2+2\dec}}.
    \end{split}        
  \end{equation}

     {\bf{Step 3: Estimate for $\bm{J}_{\ell=1}$.}} 
  Recall from Corollary \ref{COROFLEMMA:TRANSPORT.ALONGSI_STAR1} that
  we have along $\Si_*$, for $p=0,+,-$,
  \begin{align*}
    \nu\left( \int_S\bm{J}\Jp \right)
    ={}& O(r^{-1})\int_S\bm{J}\Jp+ h_2,
  \end{align*}
   where the scalar function $h_2$ is given by 
  \begin{equation*}
    \begin{split}
      h_2={}& O(r^{-2})\int_S\dual\widehat{\rho}\Jp
         + r^{-3}\dk^{\le 1}\Gamma_g
         + \dk^{\le 1}(\Gamma_b\cdot\Gamma_g).
    \end{split}    
  \end{equation*}
   In view of \eqref{eq:control.ell=1modes-Si-Step5}, the bootstrap assumptions \eqref{decayGagGabM3} and the dominance condition \eqref{dominanceM3} we obtain
  \begin{align*}
    |h_2|\les\frac{\ep_0}{r^3u^{2+2\dec}}+ \frac{\epsilon}{r^5u^{\frac 1 2 +\dec}}
    +\dk^{\le 1}(\Gamma_b\cdot\Gamma_g)
    \les\frac{\ep_0}{r^3u^{2+2\dec}}+\dk^{\le 1}(\Gamma_b\cdot\Gamma_g).
  \end{align*}
  Thus, using Corollary \ref{cor:improved-bounds-nonlinear-=terms} we infer 
  \begin{align*}
    \int_u^{u_*}r^3|h_2|\les \frac{\ep_0}{u^{1+2\dec}}+    \int_u^{u_*}r^3\dk^{\le 1}(\Gamma_b\cdot\Gamma_g)
    \les\frac{1}{u^{1+2\dec}}\sqrt{\mathcal{F}^N[\aa, \underline{\mathfrak{b}}, \qf^\F]+\ep_0^2}.
  \end{align*}
  Applying Lemma \ref{evolutionlemmaSi*}, we
  deduce for $p=\pm$, since we have $\bm{J}_{\ell=1,\pm}=0$ on $S_*$,
  \begin{align}\label{eq:control.ell=1modes-Si-Step7:intermediate}
    \left|\bm{J}_{\ell=1,\pm}\right|\les\frac{\sqrt{\mathcal{F}^N[\aa, \underline{\mathfrak{b}}, \qf^\F]+\ep_0^2}}{r^5u^{1+\dec}}.
  \end{align}
  Next, we focus on the case $p=0$. Using that $\nu(r)=-2+r\Gab$, we rewrite the above transport
  equation as 
  \begin{align*}
    \nu\left(r^3\int_S\bm{J} J^{(0)}\right)
    &= r^3\nu\left(\int_S\bm{J} J^{(0)}\right) -6r^2\int_S\bm{J} J^{(0)}+r^5\Gab\bm{J}_{\ell=1,0}\\
    ={}&-\frac{2}{r}r^3(1+ O(r^{-1}))\int_S\bm{J} J^{(0)}
        +2r(1+ O(r^{-1}))\int_S\dual\widehat{\rho}J^{(0)} + \dk^{\le 1}\Gamma_g
         +r^3\dkb^{\leq 1}(\Gab\c\Gag)
  \end{align*}
  which we rewrite as
  \begin{align}\label{curlbetaevolution}
    \nu\left(r^3\int_S\bm{J} J^{(0)} -8\pi aM\right)
    =O(r^{-1})\left(r^3\int_S\bm{J} J^{(0)}-8\pi aM\right)+h_3,
  \end{align}
  where the scalar function $h_3$ is given by 
  \begin{align*}
    h_3 ={}& O(r^3)\left(\dual\widehat{\rho}_{\ell=1,0} -\frac{2aM}{r^4}\right)
             +O(r^3)(\bm{J} J^{(0)})_{\ell=1,0}+O(r^2)\dual\widehat{\rho}_{\ell=1,0}
             \\
             &+ \dk^{\le 1}\Gamma_g
         +r^3\dkb^{\leq 1}(\Gab\c\Gag).
  \end{align*}
  In view of \eqref{eq:control.ell=1modes-Si-Step5} and
  \eqref{eq:BAlocal-Si_*-2}, the bootstrap assumptions \ref{decayGagGabM3} and the dominance condition \eqref{dominanceM3},
  \begin{align*}
    |h_3|\les \frac{\ep_0}{u^{2+2\dec}}+\frac{1}{r^2}
    + \frac{\epsilon}{r^2u^{\frac 1 2 +\dec}}+r^3\dkb^{\leq 1}(\Gab\c\Gag)
    \les\frac{\ep_0}{u^{2+2\dec}}+r^3\dkb^{\leq 1}(\Gab\c\Gag).
  \end{align*}
Thus, we infer using Corollary \ref{cor:improved-bounds-nonlinear-=terms}
  \begin{align*}
    \int_u^{u_*}|h_3|\les \frac{\ep_0}{u^{1+2\dec}}+    \int_u^{u_*}r^3\dk^{\le 1}(\Gamma_b\cdot\Gamma_g)
    \les\frac{\sqrt{\mathcal{F}^N[\aa, \underline{\mathfrak{b}}, \qf^\F]+\ep_0^2}}{u^{1+2\dec}}.
  \end{align*}
  Applying Lemma \ref{evolutionlemmaSi*} to \eqref{curlbetaevolution}
  and recalling that there holds
  $\bm{J}_{\ell=1,0}=\frac{2aM}{r^5}$ on $S_*$, we deduce
  \begin{equation*}
    r^3\left|\int_S\bm{J} J^{(0)}-\frac{8\pi aM}{r^3}\right|
    \les\frac{\sqrt{\mathcal{F}^N[\aa, \underline{\mathfrak{b}}, \qf^\F]+\ep_0^2}}{u^{1+2\dec}},
  \end{equation*}
  and hence, together with
  \eqref{eq:control.ell=1modes-Si-Step7:intermediate}, we have
  obtained
  \begin{equation}
    \label{eq:control.ell=1modes-Si-Step7}
    \left|\bm{J}_{\ell=1,\pm}\right|
    +\left|\bm{J}_{\ell=1,0}-\frac{2aM}{r^5}\right|
    \les \frac{\sqrt{\mathcal{F}^N[\aa, \underline{\mathfrak{b}}, \qf^\F]+\ep_0^2}}{r^5u^{1+\dec}}, 
  \end{equation}
  which, because of assumption \eqref{eq:assumption-FFk},  improves the local bootstrap assumption \eqref{eq:BAlocal-Si_*} for $\bm{J}$.

  {\bf{Step 4: Estimate for $(\bm{Z}, \bm{Y})_{\ell=1}$.}}  Using
  \eqref{eq:relation-Y-J} and \eqref{eq:control.ell=1modes-Si-Step7}
  we conclude
  \begin{equation*}
\left|\bm{Y}_{\ell=1,\pm}\right|+\left|\bm{Y}_{\ell=1,0}-\frac{2aM}{r^4}\right|\les\frac{\sqrt{\mathcal{F}^N[\aa, \underline{\mathfrak{b}}, \qf^\F]+\ep_0^2}}{r^4u^{1+\dec}},  
  \end{equation*}
  which improves the preliminary estimate \eqref{eq:control.ell=1modes-Si-Step1}.
  This concludes the proof of Proposition \ref{prop:control.ell=1modes-Si-2}.
\end{proof}

\subsection{Control of \texorpdfstring{$\Gamma_g$}{} quantities}\label{sec::improvementofdecaybootassonSigmastar}

In this section we obtain control of the quantities in the set
$\Ga_g$, defined in \eqref{eq:defintionofGagforChapter5glsdfiuhgs}.
Notice that $\trchc=0$ in view of our GCM conditions. Moreover, the
estimate is trivially verified\footnote{In the context of linear
  stability of Reissner-Nordstr\"om, the estimate for $\a$ can in fact
  be deduced from the ones for $\mathfrak{f}$ and $\mathfrak{b}$.} for
$\a \in r^{-1}\Gag$ in view of the definition of
$\mathcal{G}^k[\a, \mathfrak{b}, \mathfrak{f},\pf, \qf^\F]$ in
\eqref{eq;def-ee1}, while the improved estimate for $\nab_\nu \a$ is
obtained by the Bianchi identity for $\a$.

    {\bf{Step 1: Estimate for $\trchbc$ and $\muc$.}} 
Using the GCM conditions \eqref{eq:Si_*-GCM1}, we infer from Lemma \ref{JpGag} and Corollary \ref{cor:Cb0CbpM0MareGagandrm1Gag} that
\begin{align*}
\dds_2\dds_1\trchbc&=\sum_p\underline{C}_p\dds_2\dds_1\Jp=r^{-1}\dkb^{\leq 3}(\Gag\c\Gag),\\
\dds_2\dds_1\muc&=\sum_pM_p\dds_2\dds_1\Jp=r^{-2}\dkb^{\leq 3}(\Gag\c\Gag).
\end{align*}
which yields for $k\le  N-3$
\begin{align*}
\|\dds_2\dds_1\trchbc\|_{\hk_k(S)} &\les\|\dkb^{\leq k+3}(\Gag\c\Gag)\|_{L^\infty(S)}\les\frac{\ep^2}{r^4u^{1+2\dec}},\\ 
\|\dds_2\dds_1\muc\|_{\hk_k(S)} &\les r^{-1}\|\dkb^{\leq k+3}(\Gag\c\Gag)\|_{L^\infty(S)}\les\frac{\ep^2}{r^5u^{1+2\dec}}.
\end{align*}
Together with Corollary \ref{prop:2D-Hodge4}, we infer from Propositions \ref{prop:control.ell=1modes-Si} and \ref{prop:controlofell=0modesonSigmastar} that for $k\le  N-3$
\begin{align*}
\|\trchbc\|_{\hk_{k+2}(S)}&\les\frac{\ep^2}{r^2u^{1+2\dec}}+r|(\trchbc)_{\ell=1}|+r|\ov{\trchbc}|\les\frac{\ep_0}{ru^{2+3\dec}}+\frac{\ep_0}{ru^{1+2\dec}}\les\frac{\ep_0}{ru^{1+2\dec}},\\ 
\|\muc\|_{\hk_{k+2}(S)} &\les\frac{\ep^2}{r^3u^{1+2\dec}}+r|\mu_{\ell=1}|+r|\ov{\muc}|\les\frac{\ep_0}{r^2u^{2+3\dec}}+\frac{\ep_0}{r^2u^{1+2\dec}}\les\frac{\ep_0}{r^2u^{1+2\dec}},
\end{align*}
where we used the dominance condition \eqref{dominanceM3}. Together with Sobolev, this implies, for $k\le  N-3$, 
\begin{align}\label{muccontrol}
|\dkb^k\trchbc|\les \frac{\ep_0}{r^2u^{1+2\dec}}, \qquad |\dkb^k\muc|\les \frac{\ep_0}{r^3u^{1+2\dec}},
\end{align}
which gives the first line of \eqref{eq:improved-decay-M3}.

{\bf{Step 2: Estimate for $(\rhoFc, \dual \rhoF)$.}} From the
definition of $\qf^\F$ in \eqref{eq:elliptic-identity-qfF}, we infer
  \begin{align*}
\|\dds_2\dds_1(\rhoFc, -\dual\rhoF)\|_{\hk_k(S)} &\les r^{-2}\|\dkb^{\leq k}\qk^\F\|_{L^\infty(S)}+r^{-4}+r^{-2}\|\dkb^{\leq k+2}\Gab\|_{L^\infty(S)}+\|\dkb^{\leq k+2}(\Gab\c\Gag)\|_{L^\infty(S)}.
\end{align*}
Using the bootstrap assumptions \eqref{decayGagGabM3} the dominance
condition \eqref{dominanceM3}, and the control of $\Gab$ obtained in
\eqref{Estimate:Flux-bb-vthb-eta-xib-weak}, we obtain for
$k\le  N-11$,
\begin{align*}
  \|\dds_2\dds_1(\rhoFc, -\dual\rhoF)\|_{\hk_k(S)}
  &\les r^{-2}\|\dkb^{\leq k}\qk^\F\|_{L^\infty(S)} +\frac{\ep_0}{r^3u^{1 +\dec}}+\frac{\sqrt{\mathcal{F}^N[\aa, \underline{\mathfrak{b}}, \qf^\F]+\ep_0^2}}{r^3u^{1+\dec}}\\
  &\les r^{-2}\|\dkb^{\leq k}\qk^\F\|_{L^\infty(S)}+\frac{\sqrt{\mathcal{F}^N[\aa, \underline{\mathfrak{b}}, \qf^\F]+\ep_0^2}}{r^3u^{1+\dec}}.
\end{align*}
 In view of Corollary \ref{prop:2D-Hodge4} and \eqref{eq:rhofc-ov-0}, we deduce, for $k\le  N-11$, 
 \begin{equation}
   \label{eq:intermediate-rhoFc-1}
   \begin{split}
     \|(\rhoFc, -\dual\rhoF)\|_{\hk_{k+2}(S)}
     \les{}& \|\dkb^{\leq k}\qk^\F\|_{L^\infty(S)}+\frac{\sqrt{\mathcal{F}^N[\aa, \underline{\mathfrak{b}}, \qf^\F]+\ep_0^2}}{ru^{1+\dec}}\\
     &+r|(\rhoFc, -\dual\rhoF)_{\ell=1}|.   
   \end{split}
\end{equation}

We now control the $\ell=1$ mode of $\rhoFc$ and $\dual\rhoF$. From the definitions \eqref{eq:rho-renormalization} and \eqref{eq:rhod-renormalization}, we have
\begin{align*}
   \rhoc&= \widehat{\rho}+r^{-2}\Gag+\Gag \c \Gab,    \\
   \dual \rho&=   \dual\widehat{\rho}+r^{-2}\Gag+\Gag \c \Gab.
\end{align*}  
From Proposition \ref{prop:control.ell=1modes-Si} and the bootstrap assumptions, we deduce
\begin{align*}
        \left|\left(\rhoc \right)_{\ell=1}\right|+\left|\left(\dual\rho\right)_{\ell=1,\pm}\right|
    +\left|\left(\dual \rho\right)_{\ell=1,0}-\frac{2aM}{r^4}\right|&\les\frac{\ep_0}{r^3u^{1+2\dec}}+\frac{\ep}{r^4u^{\frac 12 +\dec}}+\frac{\ep^2}{r^3 u^{1+2\dec}}\les \frac{\ep_0}{r^3u^{1+2\dec}},
\end{align*}
where we used the dominance condition. Finally, using \eqref{eq:elliptic-identity-pF} we have
\begin{align*}
  \ddd_1 \dds_1(\rhoF,\dual \rhoF)
  &=   r^{-2}\ddd_1\Re(\pf)+r\ddd_1\dds_1(\rho, \dual\rho) + O(r^{-5})
         +r^{-3} \dk^{\leq 1}\Ga_b       
         + r^{-1}\dk^{\le 1}(\Gamma_b\cdot\Gamma_g).
\end{align*}
Projecting the above identity to $\ell=1$ and using the bootstrap assumptions and dominance condition, Lemma \ref{prop:2D-Hodge1} and Sobolev,  and the control of $\Gab$ obtained in \eqref{Estimate:Flux-bb-vthb-eta-xib-weak} we deduce
\begin{align*}
  \abs*{(\rhoFc)}_{\ell=1}+|( \dual\rhoF)_{\ell=1} \big|
  &\les  r^{-1}\|\dkb^{\le3}\pf\|_{L^\infty(S)}+\frac{\ep_0}{r^2u^{1+2\dec}} + \frac{1}{r^3}
    +\frac{\sqrt{\mathcal{F}^N[\aa, \underline{\mathfrak{b}}, \qf^\F]+\ep_0^2}}{r^2u^{1+\dec}} \\
  &\les  r^{-1}\|\dkb^{\le3}\pf\|_{L^\infty(S)}
    +\frac{\sqrt{\mathcal{F}^N[\aa, \underline{\mathfrak{b}}, \qf^\F]+\ep_0^2}}{r^2u^{1+\dec}} .
\end{align*}
In view of Proposition \ref{prop:controlofell=0modesonSigmastar} and
Step 2, we infer from \eqref{eq:intermediate-rhoFc-1} for
$k\le  N-11$,
\begin{align}
  \label{rhoFcrhoFdcontrol}
  \|(\rhoFc, -\dual\rhoF)\|_{\hk_{k+2}(S)}
  &\les \|\dkb^{\leq k}\qk^\F\|_{L^\infty(S)}
    +\|\dkb^{\le 3}\pf\|_{L^\infty(S)}+\frac{\sqrt{\mathcal{F}^N[\aa, \underline{\mathfrak{b}}, \qf^\F]+\ep_0^2}}{ru^{1+\dec}}.
\end{align}
Together with Sobolev, this implies for $k\le  N-11$,
\begin{align}
  \label{eq:control-rhofc-fin}
  |\dkb^k\rhoFc|+|\dkb^k\dual \rhoF|
  \les \mathcal{G}^k[\a, \mathfrak{b}, \mathfrak{f},\pf, \qf^\F]
  +\frac{\sqrt{\mathcal{F}^N[\aa, \underline{\mathfrak{b}}, \qf^\F]+\ep_0^2}}{r^2u^{1+\dec}}.
\end{align}

  {\bf{Step 3: Estimate for $(\rhoc, \rhod)$.}}
From \eqref{eq:elliptic-identity-pF}, we infer that 
\begin{align*}
\|\dds_1(-\rhoc, \rhod)\|_{\hk_k(S)} \les{}& r^{-2}\|\dkb^{\leq k}\pf\|_{L^\infty(S)}+r^{-1}|\dkb^{\leq k}(\rhoF,\dual \rhoF)|\\
&+r^{-4}+r^{-2}\|\dkb^{\leq k+2}\Gab\|_{L^\infty(S)}+\|\dkb^{\leq k+2}(\Gab\c\Gag)\|_{L^\infty(S)}.
\end{align*}
Using the bootstrap assumptions and dominance condition,  and the control of $\Gab$ obtained in \eqref{Estimate:Flux-bb-vthb-eta-xib-weak} and \eqref{eq:control-rhofc-fin} we deduce
 for $k\le  N-11$,
\begin{align*}
  \|\dds_1(-\rhoc, \rhod)\|_{\hk_k(S)}&\les r^{-1}\mathcal{G}^k[\a, \mathfrak{b}, \mathfrak{f},\pf, \qf^\F]
                                        +\frac{\sqrt{\mathcal{F}^N[\aa, \underline{\mathfrak{b}}, \qf^\F]+\ep_0^2}}{r^3u^{1+\dec}}.
\end{align*}
In view  of Lemma \ref{prop:2D-Hodge1}, we deduce, for $k\le N-11$, 
\begin{align*}
  \|(-\rhoc, \rhod)\|_{\hk_{k+1}(S)}
  \les \mathcal{G}^k[\a, \mathfrak{b}, \mathfrak{f},\pf, \qf^\F]
  +\frac{\sqrt{\mathcal{F}^N[\aa, \underline{\mathfrak{b}}, \qf^\F]+\ep_0^2}}{r^2u^{1+\dec}}+r|(\ov{\rhoc},\ov{\rhod})|.
\end{align*}
In view of Proposition \ref{prop:controlofell=0modesonSigmastar}, we infer for $k\le N-11$, 
\begin{align}
  \label{rhocrhodcontrol}
  \|(\rhoc,\rhod)\|_{\hk_{k+2}(S)}
  \les \mathcal{G}^k[\a, \mathfrak{b}, \mathfrak{f},\pf, \qf^\F]
  +\frac{\sqrt{\mathcal{F}^N[\aa, \underline{\mathfrak{b}}, \qf^\F]
  +\ep_0^2}}{r^2u^{1+\dec}}.
\end{align}
Together with Sobolev, this implies for $k\le N-11$, 
\begin{align}
  |\dkb^k\rhoc|+ |\dkb^k\rhod|
  \les r^{-1}\mathcal{G}^k[\a, \mathfrak{b}, \mathfrak{f},\pf, \qf^\F]
  +\frac{\sqrt{\mathcal{F}^N[\aa, \underline{\mathfrak{b}}, \qf^\F]+\ep_0^2}}{r^3u^{1+\dec}}.
\end{align}

{\bf{Step 4: Estimate for $\ze$.}} Using that
$\ddd_1\ze=\left(-\muc-\rhoc,\rhod\right)+\Gab\c\Gag$, in view of
Lemma \ref{prop:2D-Hodge1}, we infer that
\begin{align*}
 \|\ze\|_{\hk_{k+1}   (S)} &\les  r\|\muc\|_{\hk_k(S)}+r\|\rhoc\|_{\hk_k(S)}+r\|\rhod\|_{\hk_k(S)}+r^2\|\dkb^{\leq k}(\Gab\c\Gag)\|_{L^\infty(S)}.
\end{align*}
Together with \eqref{muccontrol} and \eqref{rhocrhodcontrol} and the
bootstrap assumptions of Lemma \ref{decayGagGabM3}, we obtain, for
$k\le N-11$,
\begin{align}\label{zetacontrol}
  \|\ze\|_{\hk_{k+1}   (S)}
  &\les \frac{\ep_0}{r^2u^{1+2\dec}}
    +r\mathcal{G}^k[\a, \mathfrak{b}, \mathfrak{f},\pf, \qf^\F]
    +\frac{\sqrt{\mathcal{F}^N[\aa, \underline{\mathfrak{b}}, \qf^\F]+\ep_0^2}}{ru^{1+\dec}}+\frac{\ep^2}{ru^{\frac 3 2 +2\dec}}\\
  &\les r\mathcal{G}^k[\a, \mathfrak{b}, \mathfrak{f},\pf, \qf^\F]
    +\frac{\sqrt{\mathcal{F}^N[\aa, \underline{\mathfrak{b}}, \qf^\F]+\ep_0^2}}{ru^{1+\dec}}.
\end{align}
Together with Sobolev, this implies, for $k\le  N-11$,
\begin{align}
  |\dkb^k\ze|
  \les \mathcal{G}^k[\a, \mathfrak{b}, \mathfrak{f},\pf, \qf^\F]
  +\frac{\sqrt{\mathcal{F}^N[\aa, \underline{\mathfrak{b}}, \qf^\F]+\ep_0^2}}{r^2u^{1+\dec}}.
\end{align}

 {\bf{Step 5: Estimate for $\bF$.}} 
We first control the $\ell=1$ mode of $\ddd_1\bF$.
From the definition \eqref{eqn:C-J-renormalization} of $(\bm{C}, \bm{J})$ we have
\begin{align*}
      \ddd_1 \b  
  ={}& (\bm{C}, \bm{J})+r^{-3}\dk^{\leq 1} \Ga_g  .
\end{align*}
Then using Proposition \ref{prop:control.ell=1modes-Si-2}, we have
 \begin{align*}
       | (\ddd_1 \b)_{\ell=1}  | &\les |\bm{C}_{\ell=1}| +\left|\bm{J}_{\ell=1,\pm}\right|
    +\left|\bm{J}_{\ell=1,0}-\frac{2aM}{r^5}\right|+\frac{1}{r^5} + r^{-3}\dk^{\leq 1} \Ga_g \\
       &\les \frac{\sqrt{\mathcal{F}^N[\aa, \underline{\mathfrak{b}}, \qf^\F]+\ep_0^2}}{r^5u^{1+\dec}}+ \frac{1}{r^5} + \frac{\ep_0}{r^6u^{\frac 12 +\dec}} \les \frac{\ep_0}{r^4u^{1+\dec}}.
\end{align*}
Using \eqref{eq:def-bb-M3}, we have
\begin{align*}
6M\ddd_1\bF &= r^3 \ddd_1 \mathfrak{b}-2Qr\ddd_1\b+r^{-5}\dk^{\leq1}\Ga_g+ r^{-2}\dk^{\leq 1}(\Gag\c\Gag ).
\end{align*}
Using the above bound, the bootstrap assumptions \eqref{decayGagGabM3}
and dominance condition \eqref{dominanceM3}, we deduce that 
\begin{align*}
|(\ddd_1\bF)_{\ell=1}| &\les  r^3| \ddd_1 \mathfrak{b}|+r|(\ddd_1\b)_{\ell=1}|+r^{-5}|\dk^{\leq 1}\Ga_g|+ r^{-2}|\dk^{\leq 1}(\Gag\c\Gag )|\\
&\les  r^3| \ddd_1 \mathfrak{b}|+\frac{\ep_0}{r^3u^{1+\dec}} +\frac{\ep}{r^7u^{\frac 12 +\dec}}+ \frac{\ep_0}{r^6u^{1+2\dec}}\\
&\les r^3| \ddd_1 \mathfrak{b}|+\frac{\ep_0}{r^3u^{1+\dec}} .
\end{align*}
Using \eqref{eq:M3:def-ff}, we have
 \begin{align*}
     \dds_2 \bF=-\mathfrak{f} +O(r^{-2})\Ga_g +\Gag\c\Gag.
 \end{align*}
In view of Lemma \ref{prop:2D-Hodge1}, we infer that
\begin{align*}
\|\bF\|_{\hk_{k+1}  (S)}&\les  r\|\dkb^{\leq k}\mathfrak{f}\|_{L^2(S)}+\|\dkb^{\leq k}\Gag\|_{L^\infty(S)}  +r^2\|\dkb^{\leq k}(\Gag\c\Gag)\|_{L^\infty(S)}+r^2 \big| (\ddd_1\bF)_{\ell=1}\big|\\
&\les  r^2\|\dkb^{\leq k}\mathfrak{f}\|_{L^\infty(S)}+\|\dkb^{\leq k}\Gag\|_{L^\infty(S)}  +r^2\|\dkb^{\leq k}(\Gag\c\Gag)\|_{L^\infty(S)}+r^2 \big| (\ddd_1\bF)_{\ell=1}\big|.
\end{align*}
Using the bootstrap assumptions \eqref{decayGagGabM3}, the dominance condition \eqref{dominanceM3}, and the above
we obtain
\begin{align*}
\|\bF\|_{\hk_{k+1}  (S)} &\les  r^2\|\dkb^{\leq k}\mathfrak{f}\|_{L^\infty(S)}+ r^4\|\dkb^{\leq k+1}\mathfrak{b}\|_{L^\infty(S)}+\frac{\ep}{r^2u^{\frac 12 +\dec}}  +\frac{\ep}{r^2u^{1+2\dec}} + \frac{\ep_0}{ru^{1+\dec}}\\
&\les r^2\|\dkb^{\leq k}\mathfrak{f}\|_{L^\infty(S)}+ r^4\|\dkb^{\leq k+1}\mathfrak{b}\|_{L^\infty(S)}+ \frac{\ep_0}{ru^{1 +\dec}}.
\end{align*}
Together with Sobolev, this implies for $k\le  N$,
\begin{align*}
  |\dkb^k\bF|\les \mathcal{G}^k[\a, \mathfrak{b}, \mathfrak{f},\pf, \qf^\F]+ \frac{\ep_0}{r^2u^{1 +\dec}}.
\end{align*}

{\bf{Step 6: Estimate for $\b$.}}  Using \eqref{eq:def-bb-M3}, the
bootstrap assumptions \eqref{decayGagGabM3} and the dominance
condition \eqref{dominanceM3}, we immediately deduce that for $k\le N$,
 \begin{align}\label{bcontrol}
 \begin{split}
     |\dkb^k\b |&\les r^2  |\dkb^k\mathfrak{b}|+r^{-1}|\dkb^k\bF|+r^{-5}|\dkb^k\Ga_g|+ r^{-2}|\dkb^k(\Gag\c\Gag )|\\
&\les  \|\dkb^{\leq k}\mathfrak{f}\|_{L^\infty(S)}+ r^2\|\dkb^{\leq k+1}\mathfrak{b}\|_{L^\infty(S)}+ \frac{\ep_0}{r^3u^{1 +\dec}}+\frac{\ep}{r^7u^{\frac 12 +\dec}}+ \frac{\ep_0}{r^6u^{1+2\dec}}\\
&\les r^{-1} \mathcal{G}^k[\a, \mathfrak{b}, \mathfrak{f},\pf, \qf^\F]+ \frac{\ep_0}{r^3u^{1 +\dec}}.
 \end{split}
\end{align}

 {\bf{Step 7: Estimate for $\chih$.}} 
Using the Codazzi equation for $\hch$, from Lemma \ref{prop:2D-Hodge1}, we infer that
\begin{align*}
 \|\hch\|_{\hk_{k+1}   (S)} &\les  \|\ze\|_{\hk_k(S)}+r\|\b\|_{\hk_k(S)}+\|\dkb^{\leq k}\Gag\|_{L^\infty(S)}+r^2\|\dkb^{\leq k}(\Gag\c\Gag)\|_{L^\infty(S)}.
\end{align*}
Together with \eqref{zetacontrol} and \eqref{bcontrol} and the
bootstrap assumptions \eqref{decayGagGabM3}, and Sobolev, we obtain
for $3\le k\le  N-11$,
\begin{align}
  |\dkb^k\hch|\les \mathcal{G}^k[\a, \mathfrak{b}, \mathfrak{f},\pf, \qf^\F]
  +\frac{\sqrt{\mathcal{F}^N[\aa, \underline{\mathfrak{b}}, \qf^\F]
  +\ep_0^2}}{r^2u^{1+\dec}}.
\end{align}

 {\bf{Step 8: Estimate for $\nab_\nu\Gag$.}} 
From the Maxwell equations, null structure equations and Bianchi identities, one observes that all quantities in $\Gag$ verify schematically
\begin{align*}
  \nab_\nu\Gag &= r^{-1}\dkb^{\leq 1}\Gab+ r^{-1} \Gag+\Gab\c \Gab.
\end{align*}
Combining with the control of the $\Gab$ quantities obtained in
\eqref{Estimate:Flux-bb-vthb-eta-xib-weak}, the bootstrap assumptions
\eqref{decayGagGabM3} and dominance condition \eqref{dominanceM3}, we
infer for $k\le  N-9$,
\begin{align*}
  \big|\dk_*^{\leq k-1}\nab_\nu\Gag\big|&\les r^{-1}\mathcal{G}^k[\a, \mathfrak{b}, \mathfrak{f},\pf, \qf^\F]
  +\frac{\sqrt{\mathcal{F}^N[\aa, \underline{\mathfrak{b}}, \qf^\F]+\ep_0^2}}{r^2u^{1+\dec}}.
\end{align*}
Using the dominance condition and \eqref{eq:ba-on-Gk}, we deduce
\begin{align*}
  \big|\dk_*^{\leq k-1}\nab_\nu\Gag\big|&\les \frac{\ep_0}{u^{1+\dec}}\frac{\ep}{r^2u^{\frac 1 2+\dec}}
  +\frac{\sqrt{\mathcal{F}^N[\aa, \underline{\mathfrak{b}}, \qf^\F]+\ep_0^2}}{r^2u^{1+\dec}}\les \frac{\sqrt{\mathcal{F}^N[\aa, \underline{\mathfrak{b}}, \qf^\F]+\ep_0^2}}{r^2u^{1+\dec}}.
\end{align*}

{\bf{Step 9: Estimate for $\nab_\nu\b$ and $\nab_\nu \bF$.}}
In view
of the Bianchi identity for $\nab_3\b$ and the Maxwell equation for
$\nab_3\bF$, and using the fact that $\nu=e_3+b_*e_4$, we have
\begin{align*}
\nab_\nu\b = r^{-2}\dkb^{\leq 1}\Gag+r^{-3}\Ga_b+r^{-1} \dk^{\leq 1} (\Gab \c \Gag), \\
\nab_\nu\bF = r^{-1}\dkb^{\leq 1}\Gag+r^{-2}\Ga_b+\Gab \c \Gag,
\end{align*}
from which we infer for $3\le k\le  N-11$,
\begin{align*}
  r\big|\dk_*^{\leq k-1}\nab_\nu\b|+\big|\dk_*^{\leq k-1}\nab_\nu\bF|
  \les r^{-1}\mathcal{G}^k[\a, \mathfrak{b}, \mathfrak{f},\pf, \qf^\F]
  +\frac{\sqrt{\mathcal{F}^N[\aa, \underline{\mathfrak{b}}, \qf^\F]+\ep_0^2}}{r^3u^{1+\dec}}. 
\end{align*}
Similarly, applying another $\nab_\nu$ derivative one obtains
 \begin{align*}
 \nab^2_\nu\b = r^{-2}\dkb^{\leq 1}\nabla_\nu \Gag+r^{-3}\dkb^{\leq 2}\Gag+r^{-3}\dk^{\leq 1}\Ga_b+r^{-1} \dk^{\leq 2} (\Gab \c \Gag), \\
\nab^2_\nu\bF = r^{-1}\dkb^{\leq 1}\nabla_\nu\Gag+r^{-2}\dkb^{\leq 2}\Gag+r^{-2}\dk^{\leq 1}\Ga_b+\dk^{\leq 1}(\Gab \c \Gag).
\end{align*}
Using the improved decay for $\nabla_\nu \Gag$ of Step 8, we obtain
for $3\le k\le N-11$, 
\begin{align*}
  r\big|\dk_*^{\leq k-2} \nab^2_\nu\b|+\big| \dk_*^{\leq k-2}\nab^2_\nu\bF|
  \les \frac{\sqrt{\mathcal{F}^N[\aa, \underline{\mathfrak{b}}, \qf^\F]+\ep_0^2}}{r^3u^{1+\dec}}, 
\end{align*}
as stated.

This concludes the proof of Theorem \ref{prop:decayonSigamstarofallquantities}.

\subsection{Control of the canonical one-forms}
\label{sec:controlofJpandJkonSigmastar}

Here we collect a consequence of Theorem \ref{prop:decayonSigamstarofallquantities} regarding the control of the canonical one-forms $\Jp$.

\begin{corollary}\label{cor:phi-Jpo}
  We have the following estimate for $\phi$ on $\Si_*$ for $k\le  N-11$:
  \begin{align*}
    \|\dkb^{k}\phi\|_{L^\infty(S_*)}
    &\les r\mathcal{G}^k[\a, \mathfrak{b}, \mathfrak{f},\pf, \qf^\F]
      +\frac{\sqrt{\mathcal{F}^N[\aa, \underline{\mathfrak{b}}, \qf^\F]+\ep_0^2}}{r u^{1+\dec}}.
  \end{align*}
  The functions $\Jp$ verify the following properties on $\Si_*$:
  \begin{enumerate}
  \item We have
    \begin{align*}
      \begin{split}
        \int_{S}J^{(p)}&=r^3\mathcal{G}^0[\a, \mathfrak{b}, \mathfrak{f},\pf, \qf^\F]
                         +\frac{r\sqrt{\mathcal{F}^N[\aa, \underline{\mathfrak{b}}, \qf^\F]+\ep_0^2}}{ u^{1+\dec}}, \\
        \int_{S}J^{(p)}J^{(q)}&=\frac{4\pi}{3}r^2\de_{pq}+r^3\mathcal{G}^0[\a, \mathfrak{b}, \mathfrak{f},\pf, \qf^\F]
                                +\frac{r\sqrt{\mathcal{F}^N[\aa, \underline{\mathfrak{b}}, \qf^\F]+\ep_0^2}}{ u^{1+\dec}}.
      \end{split}
    \end{align*}
  \item We have
    \begin{align*}
      \abs*{\left(\triangle+\frac{2}{r^2}\right)\Jp}
      \lesssim r^{-1}\mathcal{G}^0[\a, \mathfrak{b}, \mathfrak{f},\pf, \qf^\F]
      +\frac{\sqrt{\mathcal{F}^N[\aa, \underline{\mathfrak{b}}, \qf^\F]+\ep_0^2}}{r^3 u^{1+\dec}},\\
      \big|\dds_2\dds_1\Jp \big|
      \les r^{-1}\mathcal{G}^1[\a, \mathfrak{b}, \mathfrak{f},\pf, \qf^\F]
      +\frac{\sqrt{\mathcal{F}^N[\aa, \underline{\mathfrak{b}}, \qf^\F]+\ep_0^2}}{r^3 u^{1+\dec}}.
    \end{align*}
    \item We have for all $k\le  N-12$
      \begin{align*}
        \left|\dk_*^{k}\Big[\widecheck{\nab\Jp}\Big]\right|
        &\les \mathcal{G}^k[\a, \mathfrak{b}, \mathfrak{f},\pf, \qf^\F]
          +\frac{\sqrt{\mathcal{F}^N[\aa, \underline{\mathfrak{b}}, \qf^\F]+\ep_0^2}}{r^2 u^{\frac1 2+\dec}}.
  \end{align*}
  \end{enumerate}
\end{corollary}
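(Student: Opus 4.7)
The strategy is to rerun the argument of Lemma \ref{JpGag} verbatim, but to feed in the sharpened $\Gag$ estimates obtained in Theorem \ref{prop:decayonSigamstarofallquantities} in place of the bootstrap bound $\|\Gag\|_{\infty,k}\les \ep r^{-2}u^{-1/2-\dec}$. Recall that the Gauss equation together with the control of $\trchbc$, $\rhoc$ and $\rhoFc$ already proved in Theorem \ref{prop:decayonSigamstarofallquantities} yields
\begin{align*}
\|\dkb^k \Kc\|_{L^\infty(S)}\;\les\; r^{-1}\mathcal{G}^k[\a,\mathfrak{b},\mathfrak{f},\pf,\qf^\F]+\frac{\sqrt{\mathcal{F}^N[\aa,\underline{\mathfrak{b}},\qf^\F]+\ep_0^2}}{r^3u^{1+\dec}},
\end{align*}
so the Klainerman--Szeftel effective uniformization theorem \cite{klainermanEffectiveResultsUniformization2022} applies in the same way as in Lemma \ref{JpGag} and writes the induced metric as $g=r^2e^{2\phi}\ga_{\mathbb{S}^2}$ with $\phi\in r\Gag$. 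Inserting the improved $\Gag$ bound into the relation $Ke^{2\phi}=-\Delta\phi+r^{-2}$ and integrating against $\Jp$ (using $\int_{\mathbb{S}^2}\Jp=0$) immediately gives the stated estimate for $\|\dkb^k\phi\|_{L^\infty(S_*)}$.

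With $\phi$ so controlled, the two integral identities in item (1) follow by expanding $\int_S\Jp=\int_{\mathbb{S}^2}\Jp(e^{2\phi}-1)r^2\,d\mu_{\mathbb{S}^2}$ and similarly for $\int_S\Jp\Jq$, using that the round-sphere integrals of $\Jp$ and $\Jp\Jq-\tfrac{4\pi}{3}\de_{pq}$ vanish, and bounding the remainder by $r^2\|\phi\|_{L^\infty}$. For item (2) one uses the identity
\begin{align*}
\left(\triangle+\frac{2}{r^2}\right)\Jp=\frac{2}{r^2}(1-e^{-2\phi})\Jp,
\end{align*}
proved in Lemma \ref{JpGag}, together with $\phi\in r\Gag$ and the improved $\Gag$ bound to obtain the first estimate; the bound on $\dds_2\dds_1\Jp$ is obtained analogously, after commuting one additional angular derivative through this identity (and its curl analogue), at the cost of one order in $k$, which explains the $\mathcal{G}^1$ on the right-hand side.

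For item (3), the checked quantities $\widecheck{\nab\Jp}$ were introduced as explicit algebraic combinations of $\nab\Jp$ and the reference one-forms $f_0,f_\pm$ (whose defining relations were recalled in Section \ref{sec:geometric-setting-Sigma}). Expressed in terms of $\phi$ on the conformally round sphere, each $\widecheck{\nab\Jp}$ is pointwise of the form $\phi\cdot(\text{bounded geometric quantity})+\dkb\phi\cdot(\text{bounded geometric quantity})$ plus terms quadratic in $\Gag$, so the bound follows from the already proved $L^\infty$-estimate on $\dkb^{\le k+1}\phi$, the uniformization bound for the difference between $g$ and the conformally round metric, and the Sobolev/commutator estimates of Lemma \ref{Lemma:Commutation-Si_*}; this is where we lose the additional derivative and obtain the $k\leq N-12$ constraint. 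The main bookkeeping obstacle is ensuring that when the two terms in the right-hand side of the bound for $\|\Gag\|_{\infty,k}$ are inserted, the dominance condition \eqref{dominanceM3} allows us to absorb cross terms of the form $r^{-1}\mathcal{G}^k\cdot \phi$ into the stated form, but this follows because $r^{-1}\les \ep_0 u^{-1-\dec}$ gives the required extra decay in $u$.
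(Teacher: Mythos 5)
Your treatment of the $\phi$ estimate and of items (1) and (2) matches the paper: both rerun Lemma \ref{JpGag} with the improved control $\Kc\in r^{-1}\Gag$ from Theorem \ref{prop:decayonSigamstarofallquantities} fed through the effective uniformization theorem, so that $\phi\in r\Gag$ with the sharpened bound, and the integral identities and the bounds on $\big(\triangle+\tfrac{2}{r^2}\big)\Jp$, $\dds_2\dds_1\Jp$ follow as before. That part is fine.

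There is, however, a genuine gap in your argument for item (3). The algebraic identities expressing $\widecheck{\nab\Jp}$ in terms of the conformal factor, namely $\widecheck{\nab J^{(0)}}=-\tfrac{1}{r}(e^{-\phi}-1)\dual f_0$ and $\widecheck{\nab J^{(\pm)}}=\tfrac{1}{r}(e^{-\phi}-1)f_\pm$, hold \emph{only on the last sphere} $S_*$, where $(\th,\vphi)$, $\Jp$ and $f_p$ are defined via uniformization. On the other spheres of $\Si_*$ these objects are merely the $\nu$-transports of their values on $S_*$ (recall $\nu(\Jp)=0$ and $\nab_\nu f_p=0$), so they bear no pointwise algebraic relation to the conformal factor of the sphere $S$ one is sitting on. Your claim that "each $\widecheck{\nab\Jp}$ is pointwise of the form $\phi\cdot(\text{bounded})+\dkb\phi\cdot(\text{bounded})$" on a general $S\subset\Si_*$ is therefore unjustified. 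The paper instead first bounds $\widecheck{\nab\Jp}$ on $S_*$ using the $\phi$ control, and then propagates along $\Si_*$ via the transport equation $\nabla_\nu\big[r\,\widecheck{\nab\Jp}\big]=\Gab\c\dkb^{\le1}\Jp$, integrating with Lemma \ref{evolutionlemmaSi*}; since $\Gab$ is only controlled in flux (Proposition \ref{Prop.Flux-bb-vthb-eta-xib}), Cauchy--Schwarz in $u$ produces precisely the weaker $u^{-\frac12-\dec}$ decay appearing in the statement of item (3). Your pointwise route cannot account for this loss (it would nominally give $u^{-1-\dec}$), which is a signal that the mechanism is different; the additional derivative loss giving $k\le N-12$ also arises from applying the transport equation once more to convert $\dkb^k$ control into $\dk_*^k$ control, not from commutators on a fixed sphere.
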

\begin{proof}
  The proof of the first five estimates follows from the one of Lemma \ref{JpGag} upon replacing the bootstrap assumptions for $\Kc$ with the estimated control for $\Kc \in r^{-1}\Gag$ given by Theorem \ref{prop:decayonSigamstarofallquantities}.

  For the last estimate, using that on $S_*$ (see Lemma 5.61 in \cite{klainermanKerrStabilitySmall2023}) we have
  \begin{align*}
    \widecheck{\nab J^{(0)}}=-\frac{1}{r} (e^{-\phi}-1) \dual f_0, \qquad \widecheck{\nab J^{(+)}}=\frac{1}{r} (e^{-\phi}-1) f_+, \qquad \widecheck{\nab J^{(-)}}=\frac{1}{r} (e^{-\phi}-1) f_-.
  \end{align*}
  and using the control of $\phi$ above we deduce on $S_*$
  \begin{align*}
    |\dkb^k \widecheck{\nab\Jp} | \les \left|\dk_*^{k}\Big[\widecheck{\nab\Jp}\Big]\right|&\les \mathcal{G}^k[\a, \mathfrak{b}, \mathfrak{f},\pf, \qf^\F]
                                                                                            +\frac{\sqrt{\mathcal{F}^N[\aa, \underline{\mathfrak{b}}, \qf^\F]+\ep_0^2}}{r^2 u^{1+\dec}}.
  \end{align*}
  On $\Si_*$ we have (see also Lemma 5.66 in \cite{klainermanKerrStabilitySmall2023})
  \begin{align}\label{eq:nab-nu-nab-Jp}
    \nabla_\nu \big[ r \widecheck{\nab\Jp}\big] = \Gab \c \dkb^{\leq 1} \Jp,
  \end{align}
  and integrating it along $\Si_*$ using Lemma
  \ref{evolutionlemmaSi*}, we deduce for $k\le N-11$,
  \begin{align*}
    r \big| \dkb^k \big[r \widecheck{\nab\Jp}\big] \big|
    &\les   r \big| \dkb^k \big[r \widecheck{\nab\Jp}\big] \big|_{L^\infty (S_*)} + \int_u^{u_*} r |\dkb^{\leq k} \Gab |\\
    &\les   r \big| \dkb^k \big[r \widecheck{\nab\Jp}\big] \big|_{L^\infty (S_*)} +  \left(\int_u^{u_*}\frac{du'}{u'^{2+2\dec}}\right)^{\frac{1}{2}}\left(\int_u^{u_*}u'^{2+2\dec}\|\dk^{\leq k+2} \Gab\|^2_{L^2(S)} du'\right)^{\frac{1}{2}}\\
    &\les r^2\mathcal{G}^k[\a, \mathfrak{b}, \mathfrak{f},\pf, \qf^\F]
      +\frac{\sqrt{\mathcal{F}^N[\aa, \underline{\mathfrak{b}}, \qf^\F]+\ep_0^2}}{ u^{\frac 1 2+\dec}},
  \end{align*}
  where we used Sobolev, H\"older inequalities and Proposition
  \ref{Prop.Flux-bb-vthb-eta-xib} for the control of the fluxes of
  $\Gab$.

  We finally infer that on $\Si_*$ for $k\le N-11$,
  \begin{align*}
    |\dkb^k \widecheck{\nab\Jp} |
    &\les \mathcal{G}^k[\a, \mathfrak{b}, \mathfrak{f},\pf, \qf^\F]
      +\frac{\sqrt{\mathcal{F}^N[\aa, \underline{\mathfrak{b}}, \qf^\F]+\ep_0^2}}{r^2 u^{\frac 1 2 +\dec}}.
  \end{align*}
  from which, using again \eqref{eq:nab-nu-nab-Jp}, we deduce the stated.
\end{proof}

\begin{corollary}\label{corollary:C-J-f_0}
    We have on $\Si_*$:
    \begin{align*}
\Big|\big((\bm{C}, \bm{J})  -\frac{3aM}{r^4}\ddd_1f_0\big)_{\ell=1}\Big|&\les \frac{\sqrt{\mathcal{F}^N[\aa, \underline{\mathfrak{b}}, \qf^\F]+\ep_0^2}}{r^5u^{1+\dec}}.
\end{align*}
\end{corollary}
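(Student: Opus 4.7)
The plan is to compute $(\ddd_1 f_0)_{\ell=1} = ((\div f_0)_{\ell=1}, (\curl f_0)_{\ell=1})$ explicitly in terms of the renormalized one-form $\widecheck{\nab J^{(0)}} = \nab J^{(0)} + \frac{1}{r}\dual f_0$, and then match this against the $\ell=1$ bounds on $(\bm{C},\bm{J})$ from the GCM condition and Proposition \ref{prop:control.ell=1modes-Si-2}. Using that $\nab r = 0$ on spheres of $\Si_*$, that $\curl\nab = 0$ on scalars, and the identities $\curl\dual = -\div$, $\div\dual = \curl$ on one-forms, I get the exact formulas
\[
\div f_0 = -r\,\curl\widecheck{\nab J^{(0)}}, \qquad \curl f_0 = \tfrac{2}{r}J^{(0)} + r\div\widecheck{\nab J^{(0)}} - r\bigl(\Delta + \tfrac{2}{r^2}\bigr)J^{(0)}.
\]

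For the $\bm{C}$-piece, the GCM condition \eqref{eq:Si_*-GCM2} gives $\bm{C}_{\ell=1}=0$. I would observe that $(\div f_0)_{\ell=1,0}=0$ holds identically on $\Si_*$: integrating by parts once and using the antisymmetry $\dual f_0\cdot f_0=0$ yields $(\div f_0)_{\ell=1,0} = \frac{1}{|S|}\int_S \nab J^{(0)}\cdot f_0$, and a further IBP gives $-(\div f_0)_{\ell=1,0}$. For $p=\pm$, Cauchy--Schwarz combined with Corollary \ref{cor:phi-Jpo} bounds $|(\div f_0)_{\ell=1,\pm}|$ by $\mathcal{G}^0 + \frac{\sqrt{\mathcal{F}^N+\ep_0^2}}{r^2 u^{1/2+\dec}}$.

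For the $\bm{J}$-piece, Proposition \ref{prop:control.ell=1modes-Si-2} gives $|\bm{J}_{\ell=1,\pm}|$ and $|\bm{J}_{\ell=1,0}-\tfrac{2aM}{r^5}|$ bounded by $\frac{\sqrt{\mathcal{F}^N+\ep_0^2}}{r^5 u^{1+\dec}}$. The leading balance $\frac{3aM}{r^4}\cdot\frac{2/r}{|S|}\int_S(J^{(0)})^2 \approx \frac{2aM}{r^5}$ matches $\bm{J}_{\ell=1,0}$; the correction $\int_S(J^{(0)})^2 = \frac{4\pi r^2}{3} + r^3\mathcal{G}^0 + \frac{r\sqrt{\mathcal{F}^N+\ep_0^2}}{u^{1+\dec}}$ from Corollary \ref{cor:phi-Jpo} supplies the requisite $u^{1+\dec}$ decay. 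The cross-integrals $\int_S J^{(0)}J^{(\pm)}$ give the $\pm$-modes by the same estimate. The remaining contributions from the subleading terms $r\div\widecheck{\nab J^{(0)}}$ and $r(\Delta+\tfrac{2}{r^2})J^{(0)}$ in $\curl f_0$ are handled by IBP plus Cauchy--Schwarz, reducing them to pointwise bounds from Corollary \ref{cor:phi-Jpo}.

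The principal obstacle is that the bound on $\widecheck{\nab J^{(0)}}$ in Corollary \ref{cor:phi-Jpo} only decays like $u^{-1/2-\dec}$, whereas the target is $u^{-1-\dec}$. This gap is closed by the dominance condition \eqref{dominanceM3}, which yields $\frac{u^{1/2}}{r}\les\frac{\ep_0}{u^{1/2+\dec}}\les\ep_0$. Therefore error contributions of the form $\frac{\sqrt{\mathcal{F}^N+\ep_0^2}}{r^6 u^{1/2+\dec}}$ can be rewritten as $\frac{u^{1/2}}{r}\cdot\frac{\sqrt{\mathcal{F}^N+\ep_0^2}}{r^5 u^{1+\dec}}\les\frac{\ep_0\sqrt{\mathcal{F}^N+\ep_0^2}}{r^5 u^{1+\dec}}$, and similarly $\frac{\mathcal{G}^0}{r^4}$-contributions are absorbed using $\mathcal{G}^0\les\frac{\ep}{r^2u^{1/2+\dec}}$ together with $\ep = \ep_0^{2/3}$, yielding the desired bound $\frac{\sqrt{\mathcal{F}^N+\ep_0^2}}{r^5 u^{1+\dec}}$.
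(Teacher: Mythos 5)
Your proposal follows essentially the same route as the paper: decompose $\ddd_1 f_0$ into its background value $(0,\tfrac{2}{r}J^{(0)})$ plus renormalized corrections controlled by $\widecheck{\nab J^{(0)}}$, invoke the GCM condition $\bm{C}_{\ell=1}=0$ and Proposition \ref{prop:control.ell=1modes-Si-2} for the leading terms, bound the corrections and the $\int_S J^{(0)}J^{(p)}$ integrals via Corollary \ref{cor:phi-Jpo}, and absorb the resulting $r^{-4}\mathcal{G}^0$ and $r^{-6}u^{-1/2-\dec}$ errors with the dominance condition and \eqref{eq:ba-on-Gk}; the final accounting is correct. The only defect is your claimed identity $(\div f_0)_{\ell=1,0}=0$: the argument ``a further IBP gives $-(\div f_0)_{\ell=1,0}$'' is circular, since integrating by parts back simply reproduces the original expression with the same sign, and in fact one only gets $(\div f_0)_{\ell=1,0}=-\frac{1}{|S|}\int_S\widecheck{\nab J^{(0)}}\cdot f_0$, which is not manifestly zero. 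This does not damage the proof, because the Cauchy--Schwarz bound you apply to the $p=\pm$ modes applies verbatim to $p=0$ and yields an error of the same absorbable form, so you should simply drop the vanishing claim and treat all three modes uniformly.
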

\begin{proof}
   Recall that
\begin{align*}
    \ddd_1 f_0=(\div f_0, \curl f_0)=(\div f_0, \frac{2}{r}\cos\th+\widecheck{\curl(f_0)})=\left(\div f_0, \frac{2}{r}J^{(0)}+\widecheck{\curl(f_0)}\right).
\end{align*}
 In particular we can write 
\begin{align*}
\left((\bm{C}, \bm{J})  -\frac{3aM}{r^4}\ddd_1f_0\right)_{\ell=1}=\left(\bm{C}, \bm{J}-\frac{6aM}{r^5} J^{(0)}\right)_{\ell=1}+r^{-4}\mathcal{G}^0[\a, \mathfrak{b}, \mathfrak{f},\pf, \qf^\F]
      +\frac{\sqrt{\mathcal{F}^N[\aa, \underline{\mathfrak{b}}, \qf^\F]+\ep_0^2}}{r^6 u^{\frac 1 2 +\dec}},
\end{align*}
and therefore from Proposition \ref{prop:control.ell=1modes-Si} and Proposition \ref{prop:control.ell=1modes-Si-2},
\begin{align*}
\Big|\big((\bm{C}, \bm{J})  -\frac{3aM}{r^4}\ddd_1f_0)\big)_{\ell=1}\Big|\les{}&  \left|\bm{J}_{\ell=1,\pm}\right|
    +\left|\bm{J}_{\ell=1,0}-\frac{2aM}{r^5}\right|\\
    &+r^{-4}\mathcal{G}^0[\a, \mathfrak{b}, \mathfrak{f},\pf, \qf^\F]
      +\frac{\sqrt{\mathcal{F}^N[\aa, \underline{\mathfrak{b}}, \qf^\F]+\ep_0^2}}{r^6 u^{\frac 1 2 +\dec}}\\
    \les{}&\frac{\sqrt{\mathcal{F}^N[\aa, \underline{\mathfrak{b}}, \qf^\F]+\ep_0^2}}{r^5u^{1+\dec}}
\end{align*}
where we used Corollary \ref{cor:phi-Jpo}, and the dominance condition.
\end{proof}

\appendix

\section{Proof of Proposition \ref{lemma:improved-transport-along-Sigma-star}}
\label{appendix:lemma:improved-transport-along-Sigma-star} 


Here we derive the transport equation of the renormalized quantities.

\paragraph{Transport equation of \texorpdfstring{$\widehat{\rho}$}{}.}

From Proposition \ref{Prop.NullStr+Bianchi-lastslice}, we deduce
\begin{align*}
  \nabla_3\left(\frac{Q}{r^2}\rhoFc\right)
  ={}& \frac{8Q\Upsilon}{r^3}\rhoFc
       - \frac{2Q}{r^2}\div\bbF
       - \frac{2Q^2}{r^4}\kabc
       + \frac{4Q^2}{r^5}\yc
       + r^{-1}\Gamma_g\cdot\Gamma_b\\
       &= r^{-3} \Gab + r^{-1}\Gamma_g\cdot\Gamma_b, \\
  \nab_3(\chih\c\chibh)
  ={}& \chih\c\left( \frac{2\Up}{r}\chibh -\aa -\left( \frac{2M}{r^2} - \frac{2Q^2}{r^3} \right)\chibh +\nab\hot \xib +\Ga_b\c \Ga_b\right)\\
     &+\left(\frac{\Up}{r} \chih+\nab\hot \eta -\frac{1}{r} \chibh+\left( \frac{2M}{r^2}-\frac{2Q^2}{r^3} \right)\chih+\Ga_b\c \Ga_b\right)\c\chibh\\
  ={}&-\chih \c\aa+  r^{-1}\dkb^{\le  1}( \Ga_b \c \Ga_b),
\end{align*}
Combining the above with the equation for $\nab_3\rhoc$ in Proposition \ref{Prop.NullStr+Bianchi-lastslice},  we obtain for $\widehat{\rho}=\rhoc -\frac{2Q}{r^2}\rhoFc - \frac{1}{2}\chih\c\chibh$,
\begin{align*}
  \nab_3\widehat{\rho}
  ={}&\frac{3\Up}{r} \rhoc   -\div\bb-
          \frac{1}{2}\chih\c\aa+
          \frac{1}{2}\chih\c\aa
         +r^{-3} \Gab 
    +r^{-1}\dkb^{\le  1}( \Ga_b \c \Ga_b)\\
        ={}&\frac 3 r \widehat{\rho} - \underline{\bm{C}}+ r^{-3} \dkb^{\leq 1}\Gab
       +r^{-1}\dkb^{\le  1}( \Ga_b \c \Ga_b). 
\end{align*}
Using that $\nabla_4\left(\frac{Q}{r^2}\rhoFc\right)=r^{-3}\dk^{\leq 1} \Gag$ and $ \nab_4(\chih\c\chibh)=r^{-1}\dk^{\le  1}( \Ga_b \c \Ga_g)$ and
combining with the equation for $\nab_4\rhoc$ in Proposition \ref{Prop.NullStr+Bianchi-lastslice}  we obtain
\begin{equation*}
  \begin{split}
    \nab_4\widehat{\rho}
    ={}& -\frac{3}{r}\rhoc+ \div \b
          + r^{-3}\dk^{\leq 1} \Gag
         +r^{-1}\dk^{\le  1}( \Ga_b \c \Ga_g) 
    \\
    ={}&-\frac{3}{r}\widehat{\rho}+ \bm{C}
          + r^{-3}\dk^{\leq 1} \Gag
         +r^{-1}\dk^{\le  1}( \Ga_b \c \Ga_g).
  \end{split}  
\end{equation*}
Since $\nu=e_3+b_*e_4$ and $b_*=-1-\frac{2M}{r}+\frac{Q^2}{r^2}+r\Ga_b$, we infer
\begin{align*}
  \nab_\nu\widehat{\rho}
       ={}&\frac{6}{r}\widehat{\rho} -\underline{\bm{C}}
         -(1+O(r^{-1}))\bm{C}  +r^{-3}\dk^{\leq 1}\Gab 
         +r^{-1}\dkb^{\le  1}( \Ga_b \c \Ga_b),
\end{align*}
as stated.

\paragraph{Transport equation of \texorpdfstring{$\bm{C}$}{} and \texorpdfstring{$\bm{J}$}{}.} 

 Taking the divergence and the curl of the equation for $\nabla_3\b$ of Proposition \ref{Prop.NullStr+Bianchi-lastslice}, we infer
 \begin{align*}
   \nab_3\div\b -\frac{2\Up}{r}\div\b
   ={}&  [\nab_3,\div]\b
        +\lap\rho
        +\left(\frac{2M}{r^2} -\frac{2Q^2}{r^3}\right)\div\b
        -\left(\frac{6M}{r^3} -\frac{6Q^2}{r^4}\right)\div\eta\\
      & + \frac{Q}{r^2}\lap\rhoF
        + \frac{\Upsilon Q}{r^3}\div\bF
        - \frac{2Q}{r^3}\div\bbF
        +r^{-2}\dkb^{\leq 1}(\Ga_b\c\Ga_g),\\
   \nab_3\curl\b -\frac{2\Up}{r}\curl\b
   ={}&  [\nab_3,\curl]\b
        -\lap\rhod 
        +\left(\frac{2M}{r^2} -\frac{2Q^2}{r^3}\right)\curl\b
        -\left(\frac{6M}{r^3} -\frac{6Q^2}{r^4}\right)\curl\eta\\
      & + \frac{Q}{r^2}\lap\rhodF
       + \frac{\Upsilon Q}{r^3}\curl\bF
        - \frac{2Q}{r^3}\curl\bbF
        +r^{-2}\dkb^{\leq 1}(\Ga_b\c\Ga_g).
 \end{align*}
According to Lemma \ref{Lemma:Commutation-Si_*}, using  the equation for $\nab_3\b$, 
\begin{align*}
  \,[\nab_3,\div]\b &= \frac{\Up}{r} \div \b +\Ga_b\c  \nab_3\b +r^{-2}\dkb^{\le 1} ( \Ga_g \c \Ga_b)  =    \frac{\Up}{r} \div \b      +r^{-2}\dkb^{\le 1} ( \Ga_g \c \Ga_b)  \\
  \,  [\nab_3,\curl ] \b &= \frac{\Up}{r} \curl \b +\Ga_b\c \nab_3\b +r^{-2}\dkb^{\le 1}( \Ga_g \c \Ga_b)
  = \frac{\Up}{r} \curl \b    +r^{-2}\dkb^{\le 1} ( \Ga_g \c \Ga_b).
\end{align*} 
Hence,
\begin{align*}
  \nab_3\div\b
  ={}&  \left(\frac{3}{r}+O(r^{-2})\right)\div \b
       +\lap\rhoc
       -\left(\frac{6M}{r^3} -\frac{6Q^2}{r^4}\right)\div\eta\\
     & + \frac{Q}{r^2}\lap\rhoFc
       + \frac{ Q}{r^3}\div\bF
       - \frac{2Q}{r^3}\div\bbF     +r^{-5}\dkb^{\leq 1} \Gag
       +r^{-2}\dkb^{\leq 1}(\Ga_b\c\Ga_g),\\
  \nab_3\curl\b
  ={}&  \left(\frac{3}{r}+O(r^{-2}) \right)\curl \b
       -\lap\rhod
       -\frac{6M}{r^3} \rhod\\
     &+ \frac{Q}{r^2}\lap\rhodF
       + \frac{ Q}{r^3}\curl\bF
       - \frac{2Q}{r^3}\curl\bbF+r^{-5} \dkb^{\leq 1}\Gag
       +r^{-2}\dkb^{\leq 1}(\Ga_b\c\Ga_g),
\end{align*} 
where we used that  $\curl\eta = \rhod+\Ga_b\c \Ga_g$.

Taking the divergence and the curl of the equation for $\nabla_3\bF$ of Proposition \ref{Prop.NullStr+Bianchi-lastslice}, we infer 
\begin{equation*}
  \begin{split}
    \nabla_3\div \bF - \bigtriangleup \rhoF 
    ={}&  [\nabla_3,\div]\bF
         + \left(\frac{\Upsilon}{r} + \frac{2M}{r^2}-\frac{2Q^2}{r^3}\right)\div\bF
         + \frac{2Q}{r^2}\div\eta
         + r^{-1}\dkb^{\le 1}\left(\Gamma_b\cdot\Gamma_g\right),\\
    \nabla_3\curl\bF - \bigtriangleup\dual\rhoF
    ={}& [\nabla_3,\curl]\bF
         + \left(\frac{\Upsilon}{r} + \frac{2M}{r^2}-\frac{2Q^2}{r^3}\right)\curl\bF
         + \frac{2Q}{r^2}\curl\eta
         + r^{-1}\dkb^{\le 1}\left( \Gamma_b\cdot\Gamma_g \right),
  \end{split}
\end{equation*}
Then recall from Lemma \ref{Lemma:Commutation-Si_*} that
\begin{equation*}
  \begin{split}
    [\nabla_3, \div]\bF
    ={}& \frac{\Upsilon}{r}\div\bF
         + \Gamma_g\cdot\nabla_3\bF
         + r^{-1}\Gamma_b\cdot\dk^{\le 1}\bF= \frac{\Upsilon}{r}\div\bF
         + r^{-1}\dkb^{\le 1}(\Gab\c\Gag),\\
    [\nabla_3,\curl]\bF
    ={}& \frac{\Upsilon}{r}\curl\bF
         + r^{-1}\dkb^{\le 1}(\Gab\c\Gag).
  \end{split}  
\end{equation*}
Thus, we have that
\begin{equation*}
  \begin{split}
    \nabla_3\div \bF - \bigtriangleup \rhoFc 
    ={}& \frac{2}{r} \div\bF
         + \frac{2Q}{r^2}\div\eta+r^{-3} \dkb^{\leq1}\Gag
         + r^{-1}\dkb^{\le1}\left( \Gamma_b\cdot\Gamma_g \right),\\
    \nabla_3\curl\bF - \bigtriangleup\dual\rhoF
    ={}& \frac{2}{r} \curl\bF
        +r^{-3} \dkb^{\leq1}\Gag
         + r^{-1}\dkb^{\le 1}\left( \Gamma_b\cdot\Gamma_g \right).
  \end{split}
\end{equation*}
where we used that $\curl \eta = \dual\rho + \Gamma_b\cdot\Gamma_g=r^{-1}\Gag +  \Gamma_b\cdot\Gamma_g$.
In particular, we deduce
\begin{align*}
  \nab_3\left(\frac{Q}{r^2}\div \bF\right)
  &= \frac{4Q}{r^3}\div\bF+ \frac{Q}{r^2}\left(\bigtriangleup \rhoFc 
    + \frac{2Q}{r^2}\div\eta \right)
    +r^{-5} \dkb^{\leq1}\Gag+ r^{-3}\dkb^{\le1}\left( \Gamma_b\cdot\Gamma_g \right), \\
  \nab_3\left(\frac{Q}{r^2}\curl \bF\right)
  &= \frac{4Q}{r^3}\curl\bF+ \frac{Q}{r^2}\bigtriangleup \dual\rhoF +r^{-5} \dkb^{\leq1}\Gag+ r^{-3}\dkb^{\le1}\left( \Gamma_b\cdot\Gamma_g \right).
\end{align*}
On the other hand, from the equations for $\nab_3\rhoFc$ and $\nab_3\dual\rhoF$ of Proposition \ref{Prop.NullStr+Bianchi-lastslice}, we infer
\begin{align*}
    \nabla_{3}\left(\frac{2Q}{r^3}\rhoFc\right)
  ={}& -\frac{6Q}{r^4}\rhoFc(-\Up +\yc)
       + \frac{2Q}{r^3}\left(-\div \bbF+ \frac{2\Upsilon}{r}\rhoFc
         - \frac{Q}{r^2}\kabc           
         + \frac{2Q}{r^3}\yc
         + \Gamma_b\cdot\Gamma_b\right)\\
                    ={}& \frac{10Q}{r^4}\rhoFc
       - \frac{2Q}{r^3}\div \bbF + r^{-5}\Gag  + r^{-3}\Gamma_b\cdot\Gamma_b, \\
    \nabla_{3}\left(\frac{2Q}{r^3}\dual\rhoF\right)
  ={}& -\frac{6Q}{r^4}\dual\rhoF(-\Up +\yc)
       + \frac{2Q}{r^3}\left(\curl \bbF+ \frac{2\Upsilon}{r}\dual\rhoF
         + \Gamma_b\cdot\Gamma_b\right)\\
                     ={}& \frac{10Q}{r^4}\dual\rhoF
       + \frac{2Q}{r^3}\curl \bbF  + r^{-5}\Gag   + r^{-3}\Gamma_b\cdot\Gamma_b.
\end{align*}
Thus, for $\bm{C}=\div\b -\frac{Q}{r^2}\div \bF+\frac{2Q}{r^3}\rhoFc$ and $\bm{J}=\curl\b -\frac{Q}{r^2}\curl \bF+\frac{2Q}{r^3}\dual\rhoF$, we have
\begin{align*}
    \nab_3\bm{C}&=\big(\frac{3}{r}+O(r^{-2})\big)\div \b
       +\lap\rhoc
       -\left(\frac{6M}{r^3} -\frac{6Q^2}{r^4}\right)\div\eta + \frac{Q}{r^2}\lap\rhoFc
       + \frac{ Q}{r^3}\div\bF
       - \frac{2Q}{r^3}\div\bbF    \\
       & -\frac{4Q}{r^3}\div\bF- \frac{Q}{r^2}\big(\bigtriangleup \rhoFc 
         + \frac{2Q}{r^2}\div\eta \big)+\frac{10Q}{r^4}\rhoFc
       - \frac{2Q}{r^3}\div \bbF\\
       & +r^{-5}\dkb^{\leq 1} \Gag
       +r^{-2}\dkb^{\leq 1}(\Ga_b\c\Ga_g)\\
       &=\big(\frac{3}{r}+O(r^{-2})\big)\div \b- \frac{ 3Q}{r^3}\div\bF - \frac{4Q}{r^3}\div \bbF 
       +\bigtriangleup\rhoc  +\frac{10Q}{r^4}\rhoFc   -\left(\frac{6M}{r^3} -\frac{4Q^2}{r^4}\right)\div\eta\\
       &+r^{-5}\dkb^{\leq 1} \Gag
       +r^{-2}\dkb^{\leq 1}(\Ga_b\c\Ga_g) .
\end{align*}
and
\begin{align*}
    \nab_3\bm{J}&=\big(\frac{3}{r}+O(r^{-2}) \big)\curl \b
       -\lap\rhod
       -\frac{6M}{r^3} \rhod+ \frac{Q}{r^2}\lap\rhodF
       + \frac{ Q}{r^3}\curl\bF
       - \frac{2Q}{r^3}\curl\bbF\\
       &-\frac{4Q}{r^3}\curl\bF- \frac{Q}{r^2}\bigtriangleup \dual \rhoF+\frac{10Q}{r^4}\dual\rhoF
       + \frac{2Q}{r^3}\curl \bbF +r^{-5} \dkb^{\leq 1}\Gag
       +r^{-2}\dkb^{\leq 1}(\Ga_b\c\Ga_g)\\
       &=\big(\frac{3}{r}+O(r^{-2})\big)\curl \b- \frac{ 3Q}{r^3}\curl\bF 
       -\bigtriangleup\dual\rho  +\frac{10Q}{r^4}\dual\rhoF  -\frac{6M}{r^3}\dual\rho+ r^{-5} \dk^{\leq 1}\Gag+r^{-2}\dkb^{\leq 1}(\Ga_b\c\Ga_g).
\end{align*}
The above can be written as 
\begin{align}
       \nab_3\bm{C}       ={}&\big(\frac{3}{r}+O(r^{-2})\big)\bm{C}- \frac{4Q}{r^3}\div \bbF 
       +\bigtriangleup\widehat{\rho}+\frac{2Q}{r^2}\big(\bigtriangleup +\frac{2}{r^2}\big)\rhoFc  -\left(\frac{6M}{r^3} -\frac{4Q^2}{r^4}\right)\div\eta \nonumber\\
       &+r^{-5}\dkb^{\leq 1} \Gag
       +r^{-2}\dkb^{\leq 1}(\Ga_b\c\Ga_g)  ,  \label{eq:nab-3-C}\\
           \nab_3\bm{J}             ={}&\big(\frac{3}{r}+O(r^{-2})\big)\bm{J} 
       -\bigtriangleup\dual\widehat{\rho}-\frac{6M}{r^3} \dual\widehat{\rho}+\frac{2Q}{r^2}\big(\bigtriangleup +\frac{2}{r^2}\big)\dual\rhoF\nonumber \\
       & + r^{-5} \dk^{\leq 1}\Gag+r^{-2}\dkb^{\leq 1}(\Ga_b\c\Ga_g),\label{eq:nab-3-J}
\end{align}
where we also wrote $\widehat{\rho}=\rhoc -\frac{2Q}{r^2}\rhoFc - \frac{1}{2}\chih\c\chibh$ and $      \dual\widehat{\rho}
=\dual \rho + \frac{2Q}{r^2}\dual\rhoF -\frac{1}{2}\hch\wedge\hchb$. 

We now compute the $\nabla_4$ equations.
 Taking the divergence and the curl of the equation for $\nabla_4\b$ of Proposition \ref{Prop.NullStr+Bianchi-lastslice}, we infer
\begin{align*}
  \nab_4\div\beta +\frac{4}{r}\div\beta
  ={}& [\nab_4,\div]\beta  -\div\div\a
       + \frac{Q}{r^2}(\nab_4\div\bF-[\nab_4, \div]\bF)\\
       &
      +r^{-1}\dkb^{\leq 1}(\Gag\c (\a,\b, r^{-1}\bF)),\\
  \nab_4\curl\beta +\frac{4}{r}\curl\beta
  ={}& [\nab_4,\curl]\beta  -\curl\div\a
       + \frac{Q}{r^2}(\nab_4\curl\bF-[\nab_4, \curl]\bF)
      \\
      &+r^{-1}\dkb^{\leq 1}(\Gag\c (\a,\b, r^{-1}\bF)).
\end{align*}
According to Lemma \ref{Lemma:Commutation-Si_*},
\begin{align*}
  [\nab_4,\div]\b  &=    -\frac{1}{r} \div \b      +r^{-2}\dkb^{\le 1} ( \Ga_g \c \Ga_g),  \\
  [\nab_4,\curl ] \b &= -\frac{1}{r} \curl \b     +r^{-2}\dkb^{\le 1} ( \Ga_g \c \Ga_g),\\
    [\nabla_4,\div]\bF &= -\frac{1}{r}\div \bF + r^{-1}\dkb^{\le 1}\left(\Gamma_g\cdot\Gamma_g\right),\\
    [\nabla_4,\curl]\bF &= -\frac{1}{r}\curl \bF + r^{-1}\dkb^{\le 1}\left(\Gamma_g\cdot\Gamma_g\right).
\end{align*}
Hence,
\begin{align*}
  \nab_4\div\beta
  ={}& -\frac{5}{r}\div\beta  -\div\div\a
       + \frac{Q}{r^2}(\nab_4\div\bF+\frac{1}{r}\div \bF )
       +r^{-1}\dkb^{\leq 1}(\Gag\c (\a,\b, r^{-1}\bF)),\\
  \nab_4\curl\beta
  ={}& -\frac{5}{r}\curl\beta  -\curl\div\a
+\frac{Q}{r^2}(\nab_4\curl\bF+\frac{1}{r}\curl \bF )
      +r^{-1}\dkb^{\leq 1}(\Gag\c (\a,\b, r^{-1}\bF)).
\end{align*}
On the other hand, from the equations for $\nab_4\rhoFc$ and $\nab_4\dual\rhoF$ of Proposition \ref{Prop.NullStr+Bianchi-lastslice}, we infer
\begin{align*}
    \nabla_{4}\left(\frac{2Q}{r^3}\rhoFc\right)
       ={}& -\frac{10Q}{r^4}\rhoFc
       + \frac{2Q}{r^3}\div \bF, \\
    \nabla_{4}\left(\frac{2Q}{r^3}\dual\rhoF\right)
       ={}& -\frac{10Q}{r^4}\dual\rhoF
       + \frac{2Q}{r^3}\curl \bF.
\end{align*}
Thus we have
\begin{align*}
   \nab_4 \bm{C}&=-\frac{5}{r}\div\beta  -\div\div\a
       + \frac{Q}{r^2}(\nab_4\div\bF+\frac{1}{r}\div \bF )
       +r^{-1}\dkb^{\leq 1}(\Gag\c (\a,\b, r^{-1}\bF))\\
       &+ \frac{2Q}{r^3}\div \bF-\frac{Q}{r^2}\nab_4\div \bF-\frac{10Q}{r^4}\rhoFc
       + \frac{2Q}{r^3}\div \bF\\
                      &= -\frac{5}{r}\div\beta 
       + \frac{5Q}{r^3}\div\bF    -\frac{10Q}{r^4}\rhoFc 
       -\div\div\a +r^{-1}\dkb^{\leq 1}(\Gag\c (\a,\b, r^{-1}\bF)),
\end{align*}
and 
\begin{align*}
    \nab_4 \bm{J}&=  -\frac{5}{r}\curl\beta  -\curl\div\a
+\frac{Q}{r^2}(\nab_4\curl\bF+\frac{1}{r}\curl \bF )
       +r^{-1}\dkb^{\leq 1}(\Gag\c (\a,\b, r^{-1}\bF))
       \\
       &+ \frac{2Q}{r^3}\curl \bF-\frac{Q}{r^2}\nab_4\curl \bF-\frac{10Q}{r^4}\dual\rhoF
       + \frac{2Q}{r^3}\curl \bF\\
               &= -\frac{5}{r}\curl\beta 
       + \frac{5Q}{r^3}\curl\bF    -\frac{10Q}{r^4}\dual\rhoF
       -\curl\div\a +r^{-1}\dkb^{\leq 1}(\Gag\c (\a,\b, r^{-1}\bF)).
\end{align*}

The above can be written as 
\begin{align}
       \nab_4 \bm{C} &= -\frac{5}{r}\bm{C}    
       -\div\div\a +r^{-1}\dkb^{\leq 1}(\Gag\c (\a,\b, r^{-1}\bF)), \label{eq:nab-4-C}\\ 
         \nab_4 \bm{J}        &= -\frac{5}{r}\bm{J}    -\curl\div\a +r^{-1}\dkb^{\leq 1}(\Gag\c (\a,\b, r^{-1}\bF)). \label{eq:nab-4-J}
\end{align}
Since $\nu=e_3+b_*e_4$ and $b_*=-1-\frac{2M}{r}+\frac{Q^2}{r^2}+r\Ga_b$, combining \eqref{eq:nab-3-C}-\eqref{eq:nab-3-J} with \eqref{eq:nab-4-C}-\eqref{eq:nab-4-J} we have
\begin{align*}
    \nabla_{\nu}\bm{C}
  ={}& \big(\frac{8}{r}+O(r^{-2})\big)\bm{C}- \frac{4Q}{r^3}\div \bbF 
       +\bigtriangleup\widehat{\rho}+\frac{2Q}{r^2}\big(\bigtriangleup +\frac{2}{r^2}\big)\rhoFc+O(1+O(r^{-1}))\div\div\a \\
       & -\left(\frac{6M}{r^3} -\frac{4Q^2}{r^4}\right)\div\eta \nonumber+r^{-5}\dkb^{\leq 1} \Gag
       +r^{-2}\dkb^{\leq 1}(\Ga_b\c\Ga_g) ,
\end{align*}
and
\begin{align*}
    \nabla_\nu\bm{J}&= \big(\frac{8}{r}+O(r^{-2})\big)\bm{J}  -\bigtriangleup\dual\widehat{\rho}-\frac{6M}{r^3} \dual\widehat{\rho} +\frac{2Q}{r^2}\big(\bigtriangleup +\frac{2}{r^2}\big)\dual\rhoF +O(1+O(r^{-1}))\curl\div\a \\
    & +r^{-5}\dk^{\le 1}\Gamma_g +r^{-2}\dkb^{\leq 1}(\Ga_b\c\Ga_g),
\end{align*}
as stated.

\printbibliography

\end{document}
